\newcounter{subcopyrightbox@save}
\newtheorem{theorem}{Theorem}
\newcommand{\san}{SAN\xspace}
\newcommand{\sanplural}{SANs\xspace}
\newcommand{\idg}{I\xspace}
\newcommand{\Zhel}{\emph{Zhel}\xspace}
\newcommand{\sanl}{Social-Attribute Network\xspace}
\newcommand{\eat}[1]{}
\newcommand{\ling}[1]{{\footnotesize\color{red}[Ling: #1]}}
\newcommand{\PI}{Phase \textrm{I}\xspace}
\newcommand{\PII}{Phase \textrm{II}\xspace}
\newcommand{\PIII}{Phase \textrm{III}\xspace}
\newcommand{\Section}{\S}
\newcommand{\mypara}[1]{\smallskip\noindent{\bf {#1}:}~}
\newcommand{\myparatight}[1]{\smallskip\noindent{\bf {#1}:}~}
\newcommand{\tightcaption}[1]{\vspace{-0.2cm}\caption{\em #1}\vspace{-0.2cm}}
\newcounter{packednmbr}
\newenvironment{packeditemize}{\begin{list}{$\bullet$}{\setlength{\itemsep}{0.2pt}\addtolength{\labelwidth}{-4pt}\setlength{\leftmargin}{\labelwidth}\setlength{\listparindent}{\parindent}\setlength{\parsep}{1pt}\setlength{\topsep}{0pt}}}{\end{list}}
\begin{document}

\conferenceinfo{IMC'12,} {November 14--16, 2012, Boston, Massachusetts, USA.} 
\CopyrightYear{2012} 
\crdata{978-1-4503-1705-4/12/11} 
\clubpenalty=10000 
\widowpenalty = 10000

\title{Evolution of Social-Attribute Networks: \\ Measurements, Modeling, and Implications using Google+}

\numberofauthors{6}

\author{
\alignauthor Neil Zhenqiang Gong\\
       \affaddr{EECS, UC Berkeley}\\
       \email{neilz.gong@berkeley.edu}
\alignauthor Wenchang Xu\\
       \affaddr{CS, Tsinghua University}
       \email{wencxu@gmail.com}
\alignauthor Ling Huang\\
       \affaddr{Intel Labs}
       \email{ling.huang@intel.com}
\and
\alignauthor  Prateek Mittal\\
       \affaddr{EECS, UC Berkeley}
       \email{pmittal@eecs.berkeley.edu}
\alignauthor  Emil Stefanov\\
       \affaddr{EECS, UC Berkeley}
       \email{emil@berkeley.edu}
\alignauthor Vyas Sekar\\
       \affaddr{CS, Stony Brook University}
       \email{vyas@cs.stonybrook.edu}
\and
\alignauthor Dawn Song\\
       \affaddr{EECS, UC Berkeley}
       \email{dawnsong@cs.berkeley.edu}
}

\maketitle

\begin{abstract}

Understanding social network structure and evolution has important implications
for many aspects of network and system design including provisioning, 
bootstrapping trust and reputation systems via social
networks, and defenses against Sybil attacks. Several recent results suggest
that augmenting the social network structure with user attributes (e.g., location,
employer, communities of interest) can provide a more fine-grained understanding
of social networks. However, there have been few studies to 
 provide  a systematic understanding of these effects at scale. 

We bridge this gap using a unique dataset collected as the  Google+ social
network grew over time since its release in late June 2011. We observe novel
phenomena with respect to both standard social network metrics and  new
attribute-related metrics (that we define). We also observe interesting
evolutionary patterns as Google+ went from a bootstrap phase to  a steady
invitation-only stage before a public release. 


Based on our empirical observations, we develop a new generative
model to jointly reproduce the social structure and the node attributes. 
Using theoretical analysis and empirical evaluations, we show that our
model can accurately reproduce the social and attribute structure of real
social networks. We also demonstrate that our model provides more accurate
predictions for practical application contexts.

\end{abstract}

\category{J.4}{Computer Applications}{Social and behavioral sciences}
\keywords{Social network measurement, Node attributes, Social network evolution, Heterogeneous network measurement and modeling, Google+}

\setcounter{section}{0}
\section{Introduction}
Online social networks (e.g., Facebook, Google+, Twitter) have become 
increasingly important platforms for interacting with people, 
processing information and diffusing social influence.  
Thus understanding social-network structure and evolution has important implications
for many aspects of network and system design including 
bootstrapping reputation via social networks (e.g.,~\cite{ostra}), 
defenses against Sybil attacks (e.g.,~\cite{sybil}),
leveraging social networks for  search~\cite{socialsearch}, and recommender systems with social regularization~\cite{Ma11}.

Traditional social network studies have largely focused on
understanding the topological structure of the social network, where each
user can be viewed as a node and a specific relationship (e.g., friendship,
co-authorship) is represented by a link between two nodes.  More
recently, there has been growing interest in augmenting this social network with \emph{user attributes}, which we call as \emph{Social-Attribute Network} (SAN).  User attributes could be
\emph{static} (e.g., school, major, employer and city derived from user
profiles), or \emph{dynamic} (e.g., online interest and community groups). Recent studies have demonstrated 
 the promise of social-attribute networks  
 in applications such as link 
prediction~\cite{Yang11, Gong11}, 
attribute inference~\cite{Gong11, Yang11}, and 
community detection~\cite{Zhou09}.

Despite the growing importance of such social-attribute networks in social network
analysis applications, there have been few efforts at systematically measuring
and modeling the evolution of social-attribute networks. Most prior work in the measurement and
modeling space focuses primarily on the social structure~\cite{Ahn07,
Backstrom12, Dong09, Kumar06, Kwak10, Leskovec05, Mislove07}.  Measuring social-attribute networks can simultaneously inform us the properties of social network structure, attribute structure, and how such attributes impact social network structure. 

In this paper, we present a detailed study of the evolution of social-attribute networks using a
unique large-scale dataset collected by crawling the Google+ social network struture and its user profiles.
This dataset offers a unique opportunity for us as
we were fortunate to observe the complete evolution of the social network and
its growth to around 30 million users within a span of three months.

We observe novel patterns in the growth of the Google+ social-attribute network.  First,  we observe
that the social reciprocity of Google+ is lower than many traditional social networks
and is closer to that of Twitter.   Second, in contrast to many
prior networks, the social degree distributions in Google+ are best modeled by a
lognormal distribution. Third, we observe that assortativity of Google+ social network is neutral while many other social networks own positive assortativities. Fourth, we also see that the distinct phases
(initial launch, invite only, public release) in the timeline of
Google+  naturally manifest themselves in the social and attribute structures.  Fifth, for the generalized attribute metrics (that we define), while some attribute metrics mirror their social counterparts (e.g., diameter), several show distributions and trends that are significantly different (e.g., clustering coefficient, attribute degree). Finally, via the social-attribute network framework, we study the impact of user attributes on the social structure and observe that
nodes sharing common attributes are likely to have higher social reciprocity and that
some attributes have much stronger influence than others (e.g., Employer vs.\
City).

Based on our observations, we develop a new generative model for
\sanplural.  Our model includes two new components, i.e., \emph{attribute-augmented preferential attachment} and \emph{attribute-augmented triangle-closing}, which extend the classical preferential attachment~\cite{Barabasi99, Kumar00} and triangle-closing~\cite{Leskovec08,
Sala10, Toivonen09, VAZQUEZ03}, respectively.  Using both theoretical analysis and empirical
evaluation, we show that our model can reproduce \sanplural that accurately
reflect the true ones with respect to various network metrics and real-world applications.
Such a generative model has a lot of applications~\cite{Leskovec10-JMLR} such as network extrapolation and sampling, network visualization and compression, and network anonymization~\cite{Sala11}.


To summarize, the key contributions of this work are: 

\begin{packeditemize}

\item We perform the first study of the evolution of social-attribute networks using Google+. We observe novel phenomena in standard social structure metrics and new
attribute-related metrics (that we define) and how attributes impact the social structure.

\item We develop a measurement-driven generative model for the social-attribute
network that models  the impact of user attributes into the network
evolution. 

\item Using both theoretical analysis and empirical evaluation, we validate
that our model can accurately reproduce real social-attribute networks. 


\end{packeditemize}

\eat
{
\mypara{Roadmap} In the rest of the paper, we describe the Google+ dataset and
 its \san in \Section\ref{sec:data}. We analyze its social 
 graph and attribute graph in \Section\ref{sec:structure} and \Section\ref{sec:attribute}
 respectively. We design a new evolution model for \san 
 in \Section\ref{sec:model}. We validate our model using graph metrics and 
 two  application contexts in \Section\ref{sec:evaluation}. We discuss a few outstanding 
 issues in \Section\ref{sec:discussion} and review related work in 
 \Section\ref{sec:related}  before concluding in \Section\ref{sec:conclude}.
}

\section{Preliminaries and Dataset}
\label{sec:data}
 In this section, we begin with some background on augmenting social network structure 
 with attributes. Then, we  describe how we collected the Google+ data 
 and how we augment the Google+ social network with user attribute 
information. We also present some basic measurements describing the evolution of 
the Google+.

\subsection{\sanl (\san)}
In this section, we review the definition of \emph{\sanl} (SAN)~\cite{Gong11} and introduce the
basic notations  used in the rest of this paper. 

Given a directed social network $G$, in which nodes are users and edges represent friend relationships between users, and $M$ distinct binary attributes, which could be static (e.g., name of
employer, name of school, major, etc.) or dynamic (e.g., interest groups), a \san is an
augmented network with $M$ additional nodes where each such  node corresponds to
a specific binary attribute.   For each node $u$ in $G$ with attribute $a$, we
create an undirected link between $u$ and $a$ in the \san.

\begin{figure}[t]
\centering
\includegraphics[width=0.3\textwidth]{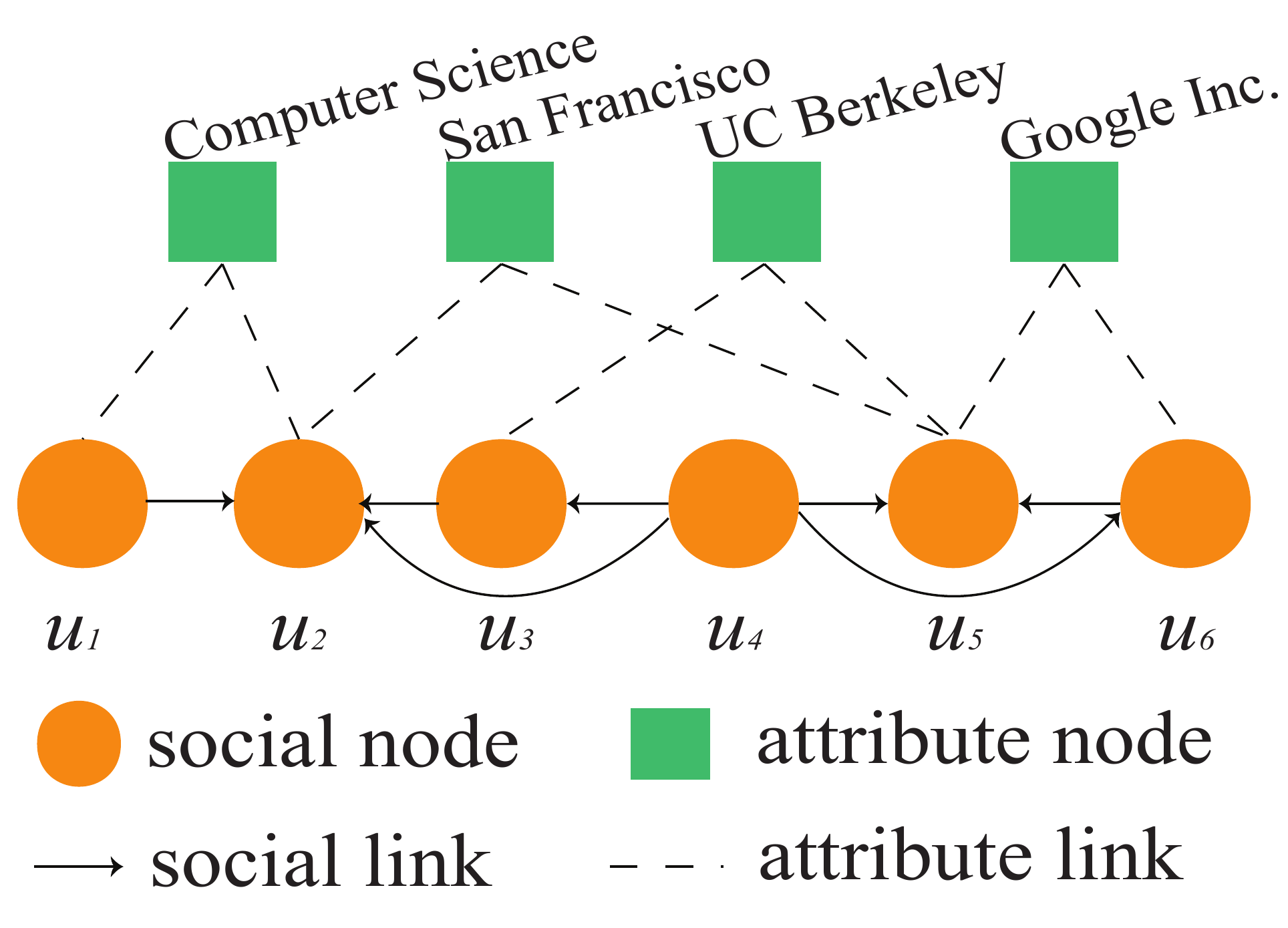}
\tightcaption{Illustration of a \san with six social nodes and four attribute nodes. Note that 
 the social links between users are directed whereas the attribute-user links are undirected.}
\label{fig:san}
\end{figure}

Nodes in a \san corresponding to nodes in $G$ are called \emph{social nodes}
and denoted as the set $V_s$, while nodes representing attributes are called
\emph{attribute nodes} and denoted as the set $V_a$. Figure~\ref{fig:san} shows an
example \san.  Links between social nodes are called
\emph{social links} and denoted as the set $E_s$, while links between social nodes and
attribute nodes are called \emph{attribute links} and denoted as the set $E_a$.
Thus a \sanl is denoted as $\san=(V_s, V_a, E_s, E_a)$.

For a given social or attribute node $u$ in a \san, we denote its
\emph{attribute neighbors} as  $\Gamma_a(u)=\{ v|v\in V_a, (u,v)\in E_a\}$,
\emph{social neighbors} as $\Gamma_{s}(u) = \{v|v\in V_s,(v, u) \in E_s\cup E_a
\ or \ (u, v) \in E_s\cup E_a\}$, \emph{social in neighbors} as
$\Gamma_{s,in}(u) = \{v|(v, u) \in E_s\}$ and  \emph{social out neighbors} as
$\Gamma_{s, out}(u) = \{v|(u, v) \in E_s\}$. Note  that an attribute node can only
 have  social neighbors.

\subsection{Google+ Data}
Google+ was launched with an invitation-only test phase on June 28, 2011,
 and opened to everyone 18 years of age or older on September 20, 2011. 
  We believed this was a tremendous opportunity to 
 observe the real-world evolution of a \emph{large-scale} social-attribute network.
 Thus, we began to crawl daily snapshots of public Google+ 
social network structure and user profiles; our crawls lasted from July 6 
to October 11, 2011. The first snapshot was crawled by breadth-first 
search (without early stopping). 
On subsequent days,  we expanded the social 
structure from the previous snapshot. For most snapshots, our crawl 
  finished within one day as Google did not limit the crawl 
rate during that time. 

We believe our crawl collected a large Weakly Connected Component (WCC) of
Google+. This may be  surprising as many past attempts on  Flickr, Facebook,
YouTube etc.,  were unable to do so~\cite{Mislove07}. The key difference is
that these were only able to access outgoing links.  In contrast,  each user
 in Google+ has both an outgoing list (i.e., ``in your
circles'') and an incoming list (i.e., ``have you in circles''). This allows us
to access both outgoing and incoming links making it feasible to crawl the
entire WCC.

We  have two points of reference that suggest our coverage is high  ($\geq$ 70\%):  1)
TechCrunch estimated the  number of Google+ users on July 12, 2011 is around 10
million~\cite{google1};  our crawled snapshot on the same day has  7 million
users. (2)  Google announced 40 million users had joined Google+ in middle October~\cite{google2};  our crawled snapshot on October 11 has around 30 million users.

We take each user $u$ in Google+ as a social node in \san, and connect it to
 her outgoing friends via outgoing links and incoming friends via incoming
links. We  use  four attribute types \emph{School},
\emph{Major}, \emph{Employer} and \emph{City} that were available and easy to extract. 
 Specifically, we find all
distinct schools, majors, employers and cities that appear in at least one user
profile and use them as attribute nodes. Recall that a social node $u$ is connected to
attribute node $a$ via an undirected link if $u$ has attribute $a$. 
In this way, we construct a \san from each crawled snapshot, resulting in 79
\sanplural during the period from July 6 to October 11, 2011.

\begin{figure}[t]
\vspace{-0.4cm}
\centering
\subfloat[\scriptsize{Social nodes}]{\includegraphics[width=0.25\textwidth, height=1.5in]{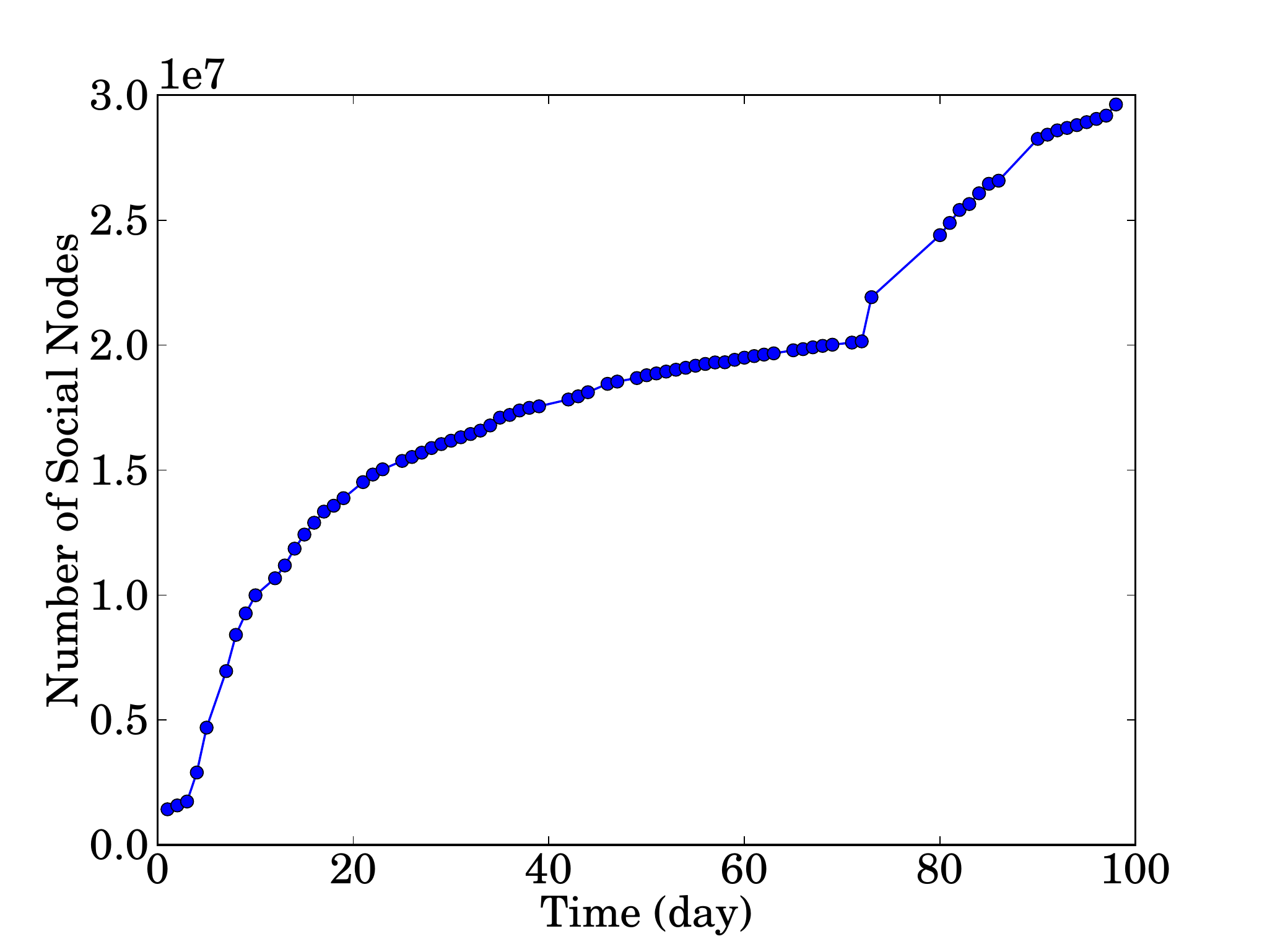}\label{soc-node}}
\subfloat[\scriptsize{Attribute nodes}]{\includegraphics[width=0.25\textwidth, height=1.5in]{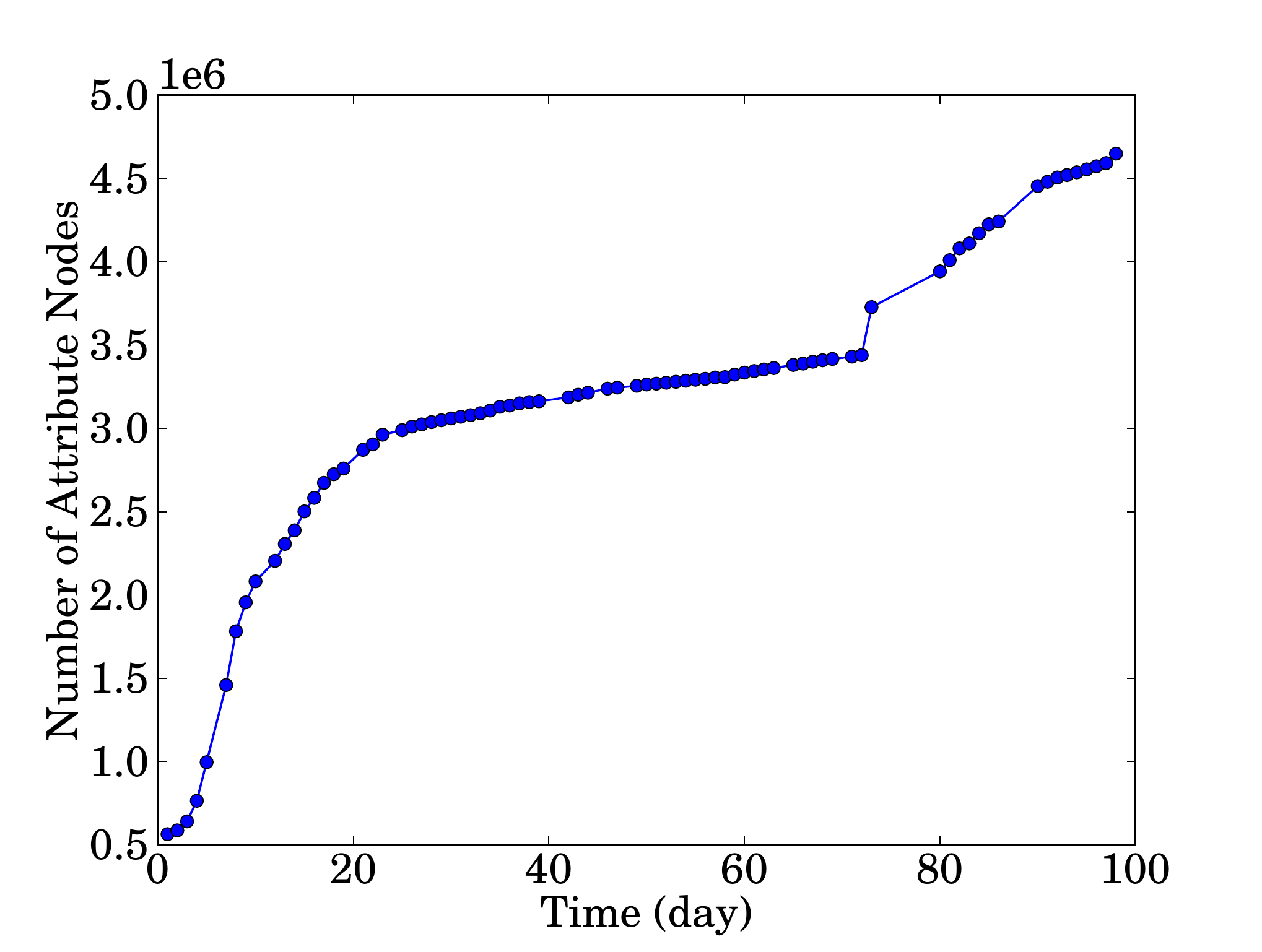}\label{attri-node}}
\tightcaption{Growth in the number of social and attribute nodes in the Google+ dataset.}
\label{fig:node}
\end{figure}

\begin{figure}[t]
\vspace{-0.4cm}
\centering
\subfloat[\scriptsize{Social links}]{\includegraphics[width=0.25\textwidth, height=1.5in]{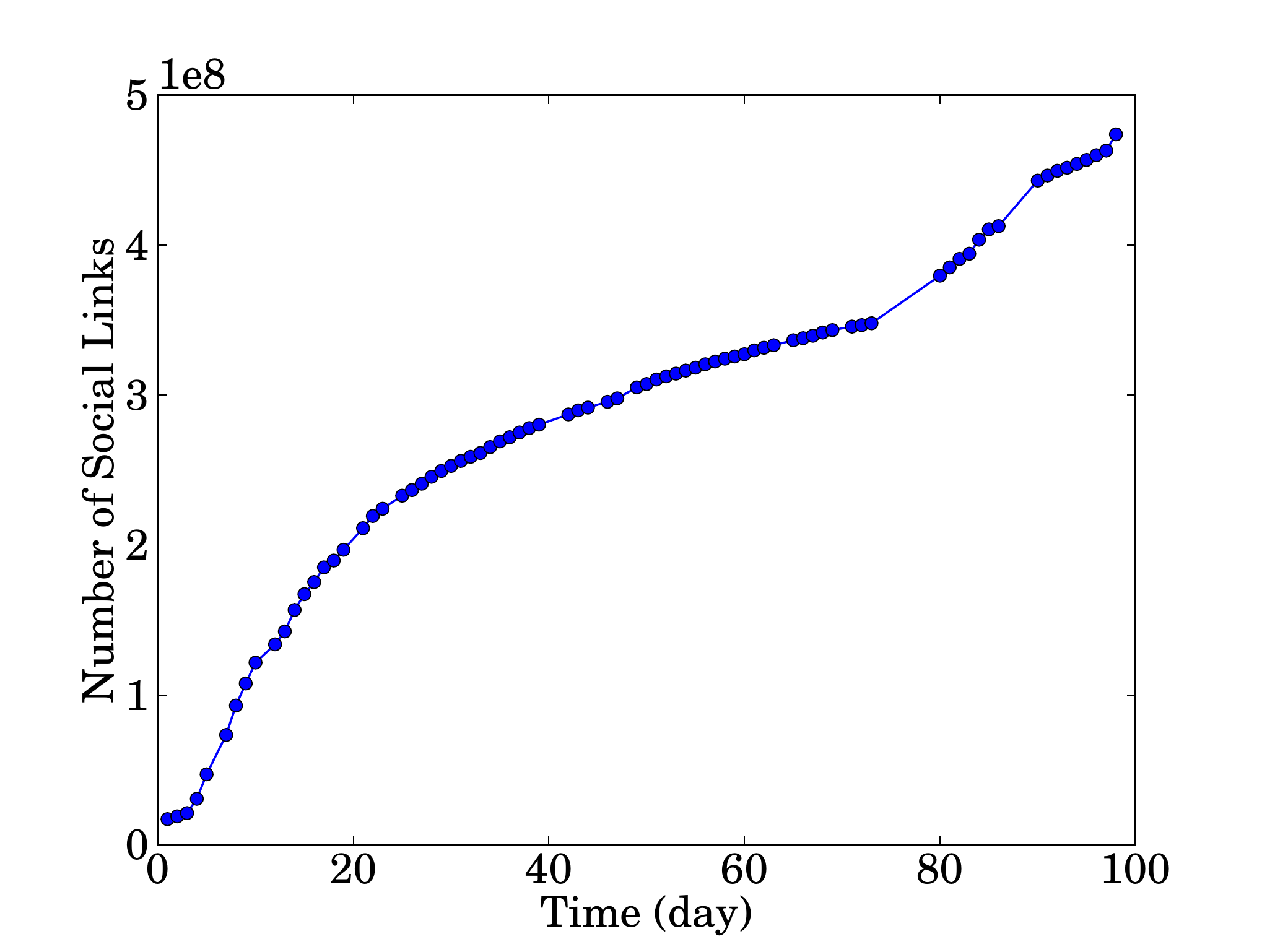}\label{soc-link}}
\subfloat[\scriptsize{Attribute links}]{\includegraphics[width=0.25\textwidth, height=1.5in]{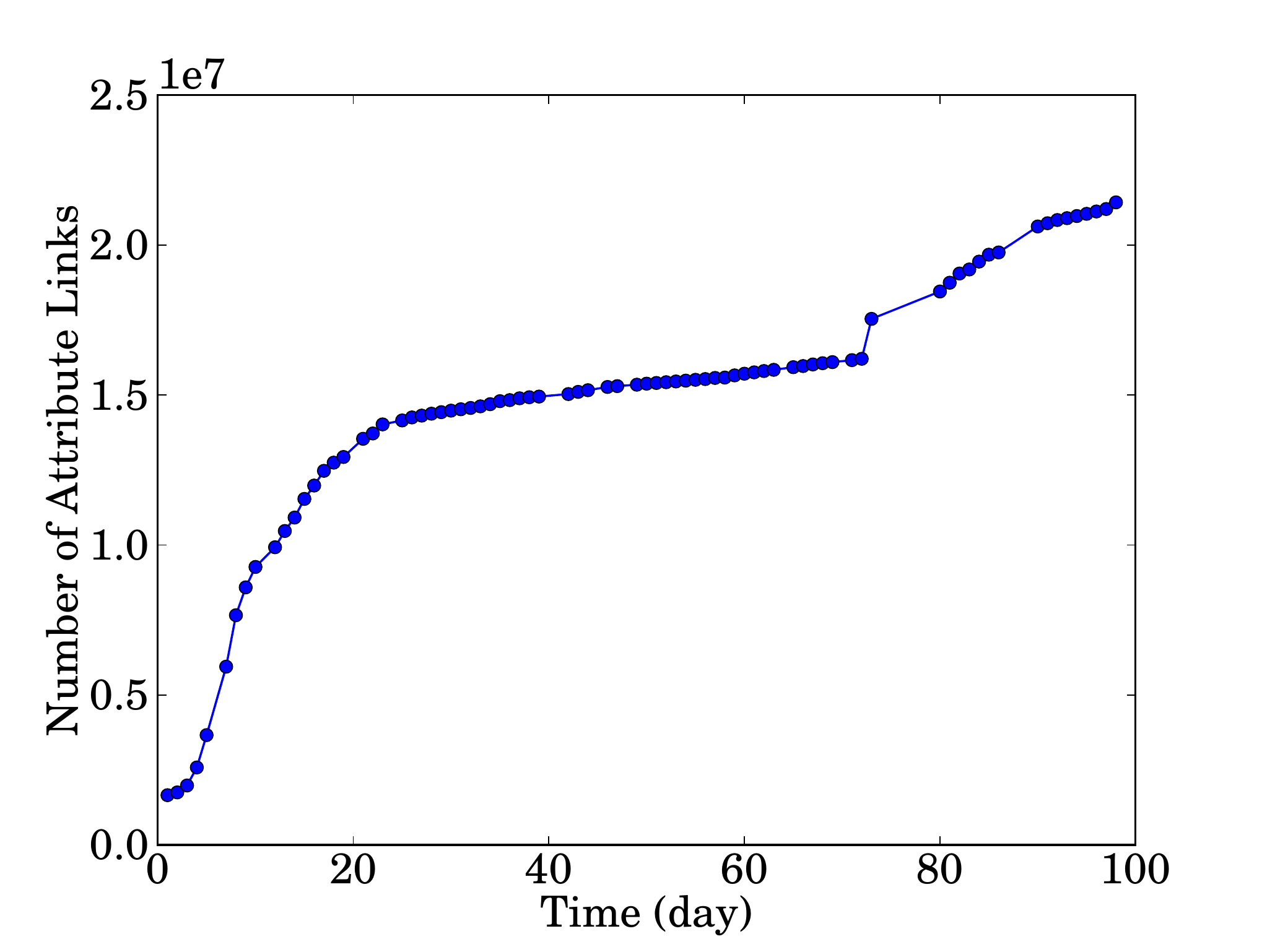}\label{attri-link}}
\tightcaption{Growth in the number of social and attribute links in the Google+ dataset}
\label{fig:link}
\end{figure}

Figures~\ref{fig:node} and~\ref{fig:link} show the temporal evolution of the number of 
nodes and links  in the Google+ \san. From the results we clearly see
three distinct phases in the evolution of Google+: Phase \textrm{I} from day one to
day 20, which corresponds to the early days of Google+ whose size increased 
dramatically; Phase \textrm{II} from day 21 to day 75, during which 
Google+ went into a stabilized increase phase; and Phase \textrm{III} from 
day 76 to day 98, when Google+ opened to public (i.e., without requiring an  invitation),
resulting in a dramatic growth again.
 We point this out because we observe  a similar three-phase evolution pattern for almost all  
network metrics that we analyze in the subsequent sections. 

In the following sections,  we use the last or largest snapshot, unless we are 
interested in the time-varying behavior. 

\mypara{Potential biases} We would like to acknowledge two possible biases.
First, users may keep some of their friends or circles private. In this case,
we can only see the publicly visible list. Thus we may not crawl the entire WCC
and underestimate the node degrees.
However, as discussed earlier, we obtain a very large connected
component that covers more than 70\% of known users which is sufficiently
representative. 
Second, users may choose not to declare their attributes, in
which case we may underestimate the impact of attributes on the social
structure. However, we find that roughly 22\% of users declare at least one
attribute which represents a statistically large sample from which to draw
conclusions. Furthermore, by validating the attribute-related results via 
further subsampling the attributes we have, we show that our attributes are representative of the entire attributes.

\section{Social Structure of the \\ Google+ \san}
\label{sec:structure}

In this section, we begin by presenting several canonical network metrics
commonly used for characterizing social networks such as the reciprocity, density,
clustering coefficient, and degree distribution~\cite{Mislove07,
Kossinets06, Kwak10, Newman03}.  These metrics are useful to expose the
inherent structure of a social network in terms of the friend relationships and
whether there are ``community'' structures beyond a one-hop friend
relationship. It is particularly useful to revisit these metrics in the
context of Google+ both because of its scale and because it enables  a somewhat
hybrid relationship model  compared to other networks such as Facebook,
Twitter, Flickr, and email networks. Furthermore, since we have a unique
opportunity to observe the network as it grew,  we also analyze how these
properties changed as the Google+ \san evolved.

\begin{figure*}
\vspace{-0.4cm}
\subfloat[\scriptsize{Reciprocity}]
{
\includegraphics[width=0.25\textwidth, height=1.5in]{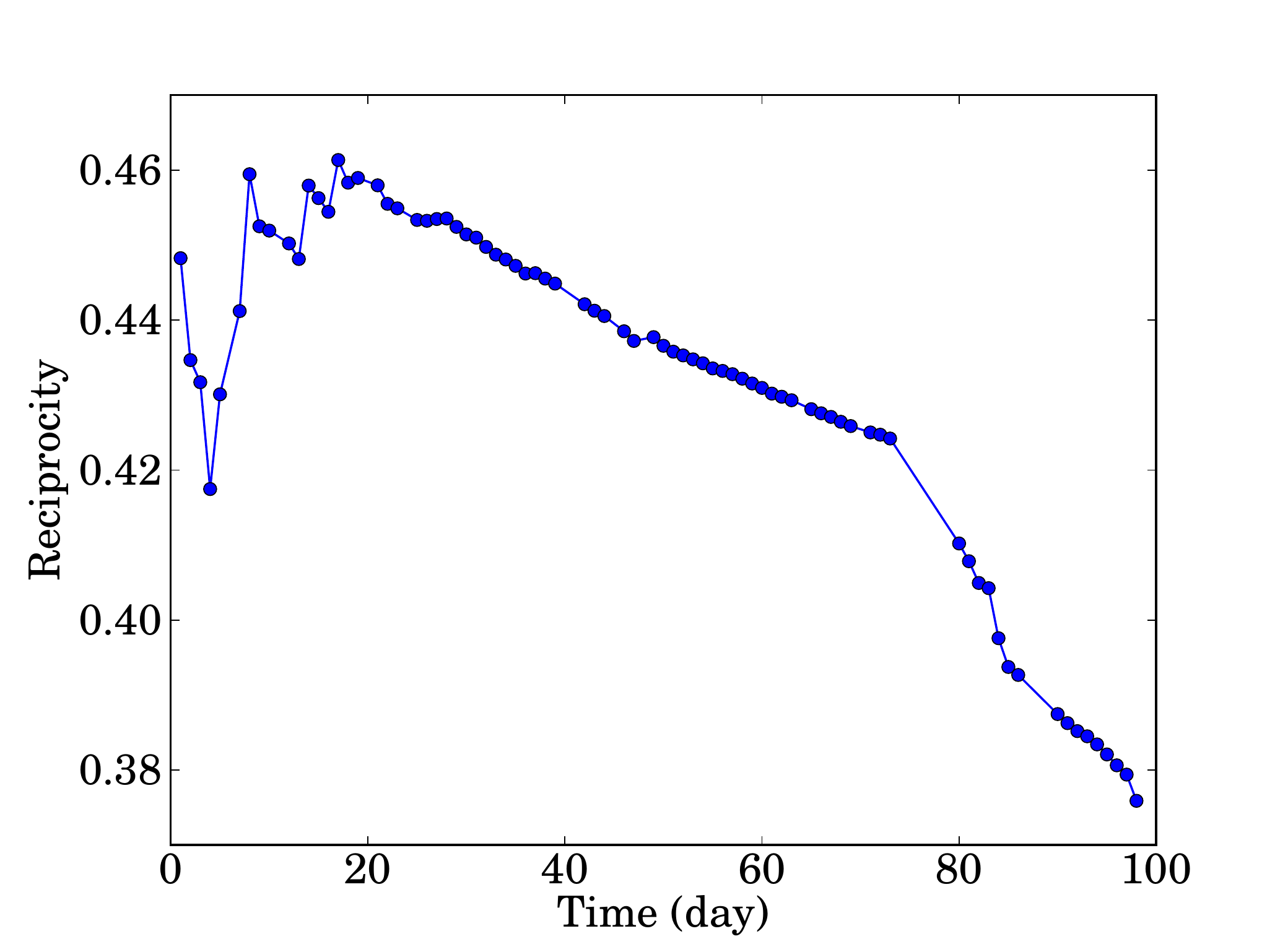}
\captionsetup{font=small,labelfont=bf}         
\label{fig:reciprocity}
}
\subfloat[\scriptsize{Social density}]
{
\includegraphics[width=0.25\textwidth, height=1.5in]{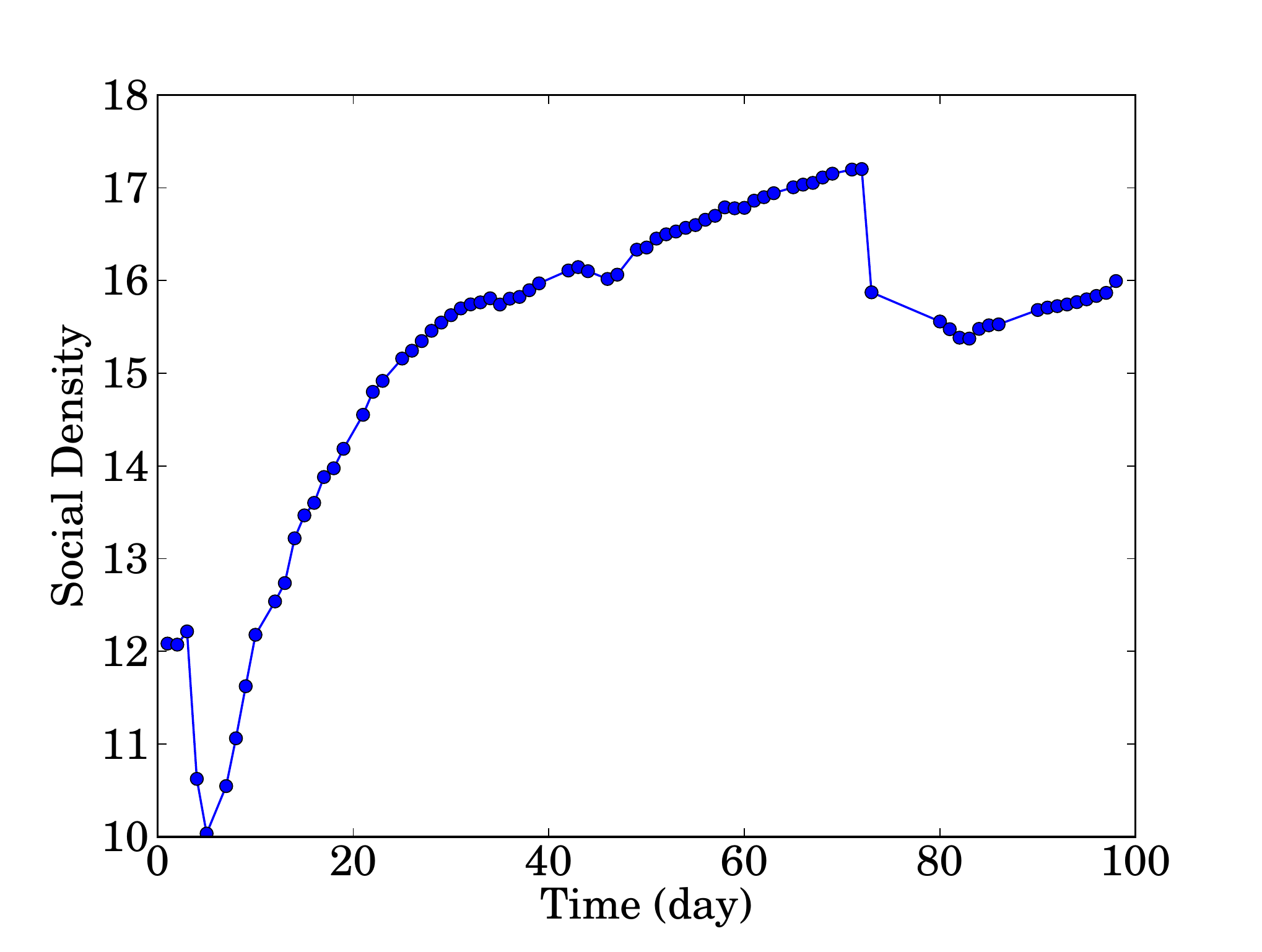}\label{soc-density}
}
\subfloat[\scriptsize{Diameter}]{
\includegraphics[width=0.25\textwidth, height=1.5in]{diameter}
\label{diameter-evo}
}
\subfloat[\scriptsize{Social clustering coefficient}]
{
\includegraphics[width=0.25\textwidth, height=1.5in]{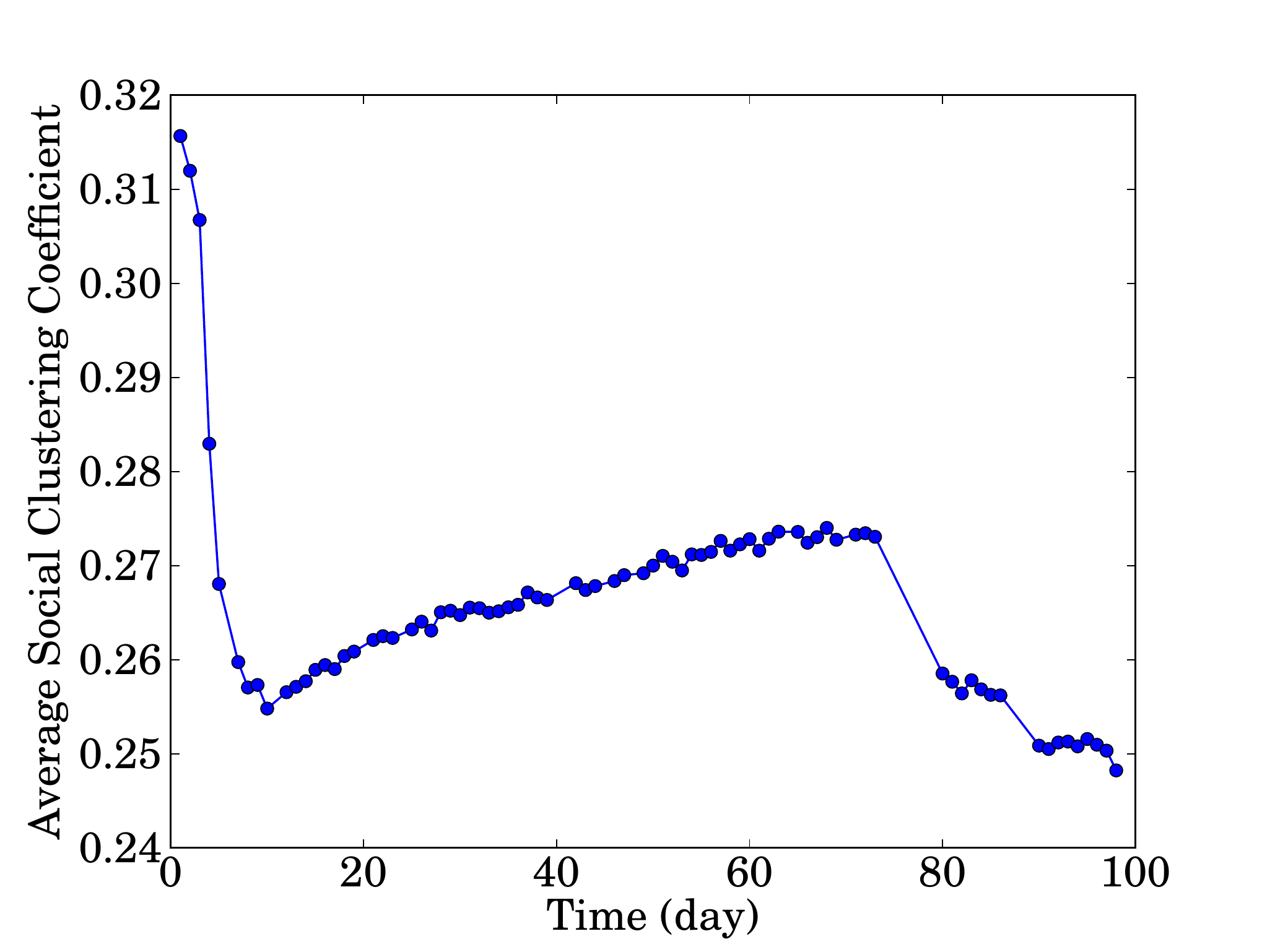}
\label{soc-clustering-coefficient}
}
\tightcaption{Evolution of four key metrics: reciprocity, density, diameter and clustering coefficient 
 on the Google+ \san. In each case, we observe distinct behaviors  in the three 
 phases corresponding to early initialization, time to public release, and time after 
 public release.}

\vspace{-2mm}
\end{figure*}

\subsection{Reciprocity}
  
The reciprocity metric  for directed social networks represents the fraction of
social links that are mutual; i.e., if there is a $A\rightarrow B$ edge what is
the likelihood of the reverse $B\rightarrow A$ edge. Previous work studied the
global reciprocities for \emph{specific snapshots} of social networks and
measured it to be $0.62$ on Flickr, $0.79$ on YouTube~\cite{Mislove07}, and
$0.22$ on Twitter~\cite{Kwak10}.  We focus  on the \emph{evolution} of global
reciprocity for Google+ in Figure~\ref{fig:reciprocity}.  The result shows  an
interesting behavior where the reciprocity fluctuates in Phase \textrm{I},
decreases in Phase \textrm{II} and decreases even faster in Phase \textrm{III}.
We speculate that this arises because of the hybrid nature of Google+.
Initially many people treat the network like a traditional social network
(e.g., Facebook) where the relationships are mutual. However, as time
progresses and people appear to become familiar with the  Twitter-like
publisher-subscriber model also offered by Google+, the reciprocity decreases.

\subsection{Density}
\label{sec:socialdensity}

The ratio of links-to-nodes, $\frac{|E_s|}{|V_s|}$,  captures the
\emph{density} 
\footnote{In graph theory, density is defined as the fraction of existing links with respect to all possible links. We follow the terminology in~\cite{Kumar06} in order to compare with previous results.}  of a social network. To put this in context, previous studies
show that the social density increases over time on citation and  affiliation
networks~\cite{Leskovec05}, on Facebook~\cite{Backstrom12}, and fluctuates in
an increase-decrease-increase fashion on Flickr~\cite{Kumar06}, and is
relatively constant on email communication networks~\cite{Kossinets06}.  

Figure~\ref{soc-density} shows the evolution of this social density metric  
in Google+.  We observe that social density in Google+ network has a sharp decrease followed by
an increase in \PI, a continued increase in \PII,  and a sudden drop in \PIII
 (when Google+ opened to the public) followed by a steady increase again.  This
three-phase pattern can be explained in conjunction with the trends in
Figures~\ref{soc-node} and~\ref{soc-link}.  In the early part of \PI, 
even though the rate of users joining Google+ is high, the rate of 
adding links is low, possibly because many 
 of a user's existing friends have not yet joined. This causes social density to decrease.
 As users acquire friends with a rate higher than the rate of new users in
later part of \PI and the same trend continuing in
\PII, the social density increases.  In \PIII, the number of users in  Google+
had a sudden jump due to the public release but the number friendship links
increases less dramatically,  which once again causes the social density to drop 
 significantly around t=70, but then starts slowly increasing again. Our findings have implications for network modeling. Specifically, many network models either assume constant density~\cite{Barabasi99, Kleinberg99} or power-law densification~\cite{Leskovec05}, which is not consistent with Google+.

\subsection{Diameter}
\label{sec:structure:distance}

In directed social networks, the distance between two user nodes $u$ and $v$,
$dist(u,v)$ is defined as the length of the shortest directed path whose 
head is $v$ and tail is $u$. Note that only social links $E_s$ are used in this definition. We find that the distribution of the distance between nodes 
has a dominant mode at a distance of six, with most nodes (90\%) having a distance 
of 5, 6, or 7 (not shown).

Based on the distance distribution, we can also define the \emph{effective diameter} as the 90-th percentile distance (possibly
with some interpolation) between every pair of
connected nodes~\cite{Leskovec05}. Unfortunately, 
computing the effective diameter is
infeasible for large networks, so we use the  HyperANF approximation 
algorithm~\cite{Boldi11}, which has been shown to be able to 
approximate diameter with  high accuracy. 

Previous work observed effective diameter shrinks in citation networks, autonomous networks and affiliation network~\cite{Leskovec05}, in Flickr and Yahoo! 360~\cite{Kumar06}, and in Cyworld~\cite{Ahn07}. 
However, we observe that the effective diameter
follows a three-phase evolution as seen in Figure~\ref{diameter-evo}, which again can be explained in conjunction with the trends in
Figures~\ref{soc-node} and~\ref{soc-link}.
In \PI, user joining rate outpaces link creation rate,  causing the diameter to increase; in
\PII, user joining rate is lower than link acquisition rate, resulting in
decreasing diameter; and in \PIII user joining rate is much higher, 
resulting in a diameter increasing phase again. Again, our observations have implications for network modeling. Existing network models either assume logarithmically growing diameter~\cite{Watts98, Barabasi99} or shrinking diameter~\cite{Leskovec10-JMLR, Leskovec05}.  


\subsection{Clustering Coefficient}
\label{sec:socialclustering}

Given a network $G$ and node $u$, $u$'s clustering coefficient is defined as $$c(u) =
\frac{L(u)}{|\Gamma_s(u)| (|\Gamma_s(u)| - 1)}, $$ where $L(u)$
is the number of links among $u$'s social neighbors $\Gamma_s(u)$ and the
average \emph{social clustering coefficient} is defined as $C_s=\frac{1}{|V_s|}\sum_{u\in
V_s}c(u)$~\cite{Watts98}.  Intuitively, this captures the community
structure among a user's friends.

Again, computing the average clustering coefficient is expensive.
Thus, we extend the constant-time approximate algorithm proposed by Schank et
al. for undirected networks~\cite{Schank05}, and develop an algorithm to
approximate the clustering coefficients for  a directed network.   With $\lceil
\frac{\mathrm{ln}2\nu}{2\epsilon^2} \rceil$ random samples, our constant time
algorithms can bound the error of average clustering coefficient within
$\epsilon$ with probability at least $1-\frac{1}{\nu}$.  In practice, 
we set the error to be $\epsilon=0.002$ and $\nu=100$. Algorithm details and theoretical analysis can be found in Appendix~\ref{sec:clustering}.

Kossinets et al.~\cite{Kossinets06} observed constant average social clustering coefficient over time in an email communication network. However, we find that the evolution of average social  clustering coefficient of Google+, which is shown in Figure~\ref{soc-clustering-coefficient},  again follows a three-phase evolution pattern where the 
 clustering  coefficient dramatically decreases in \PI, increases slowly in \PII and 
decreases again in \PIII. Our findings indicate that the community structure among users' friends is highly dynamic, which inspires us to do dynamic community detection.  



\eat
{\begin{figure}[t]
\centering
\subfloat[{Distance Distribution}]{\includegraphics[width=0.25\textwidth, height=1.5in]{diameter-distribution.pdf}\label{distance-dis}}
\subfloat[{Evolutions of diameters}]{\includegraphics[width=0.25\textwidth, height=1.5in]{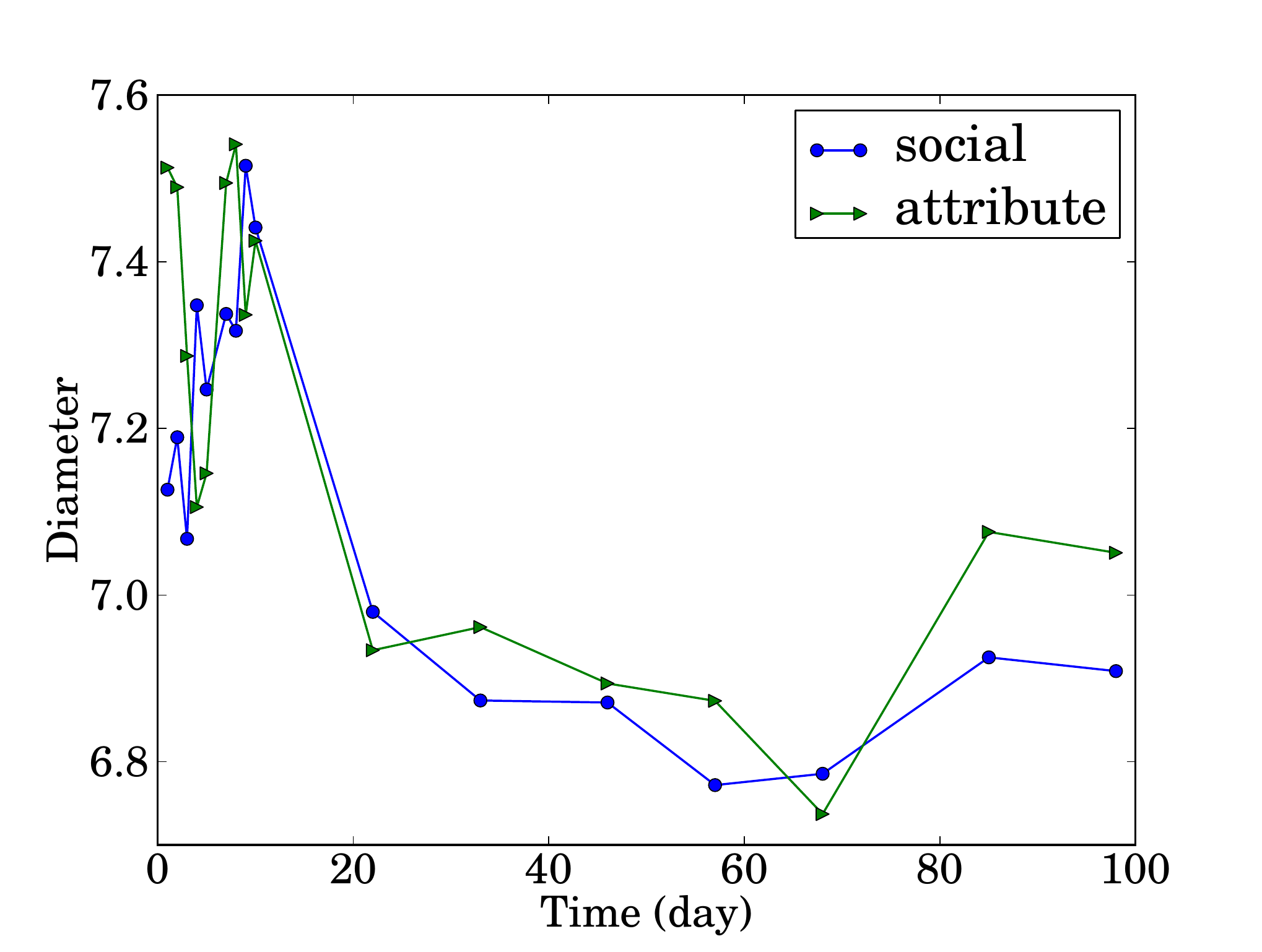}\label{diameter-evo}}
\tightcaption{ Distance distributions and evolutions of effective diameters.
\ling{What is the distance corresponding to peak value of the 
distance distribution?}}
\label{fig:diameter}
\vspace{-2mm}
\end{figure}
}

\eat
{
\begin{figure*}[t]
\centering
\subfloat[\scriptsize{Social degree of attribute nodes}]{\includegraphics[width=0.25\textwidth, height=1.5in]{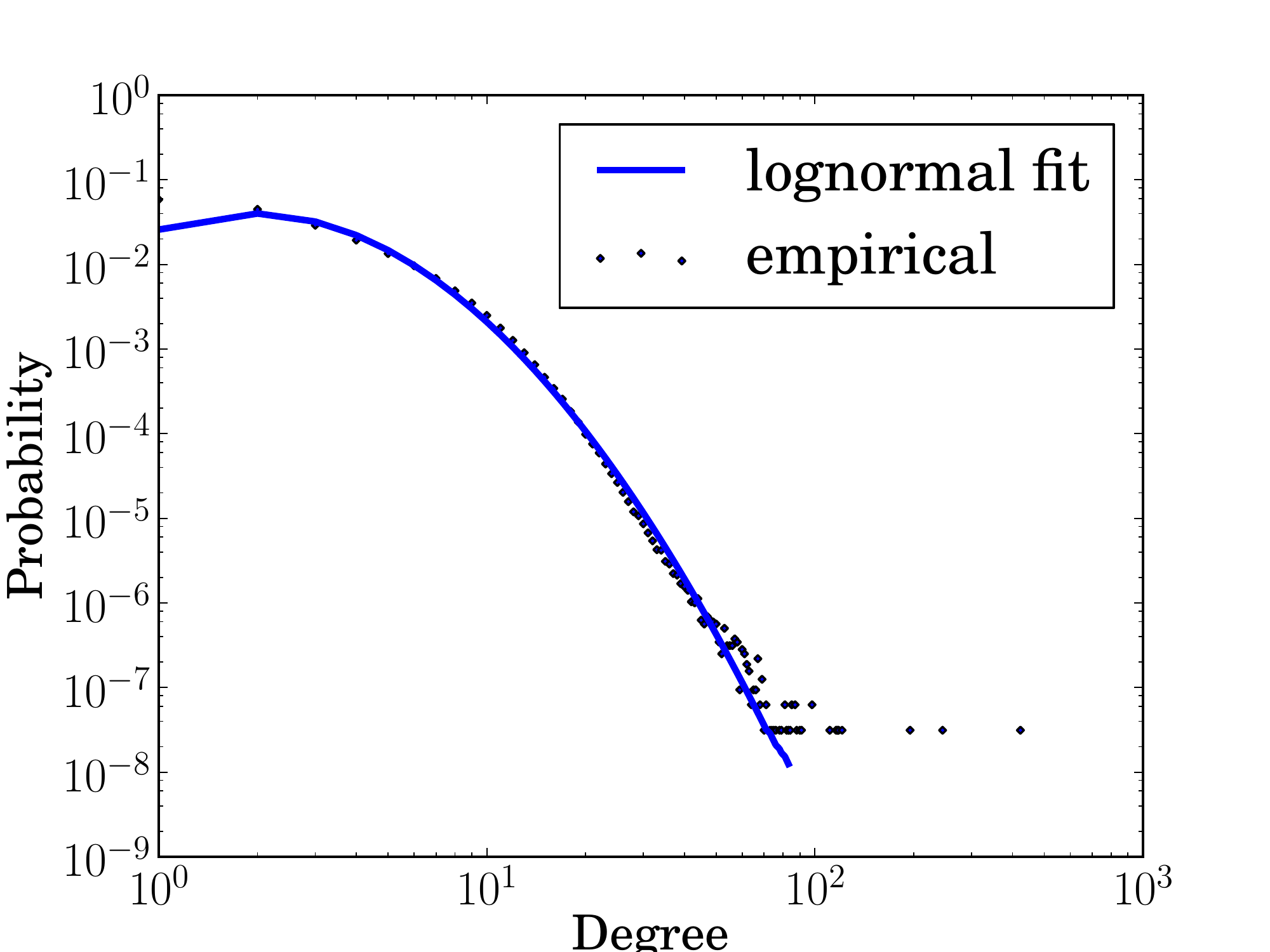} \label{fig:soc-attri}}
\subfloat[\scriptsize{Attribute degree of social nodes}]{\includegraphics[width=0.25\textwidth, height=1.5in]{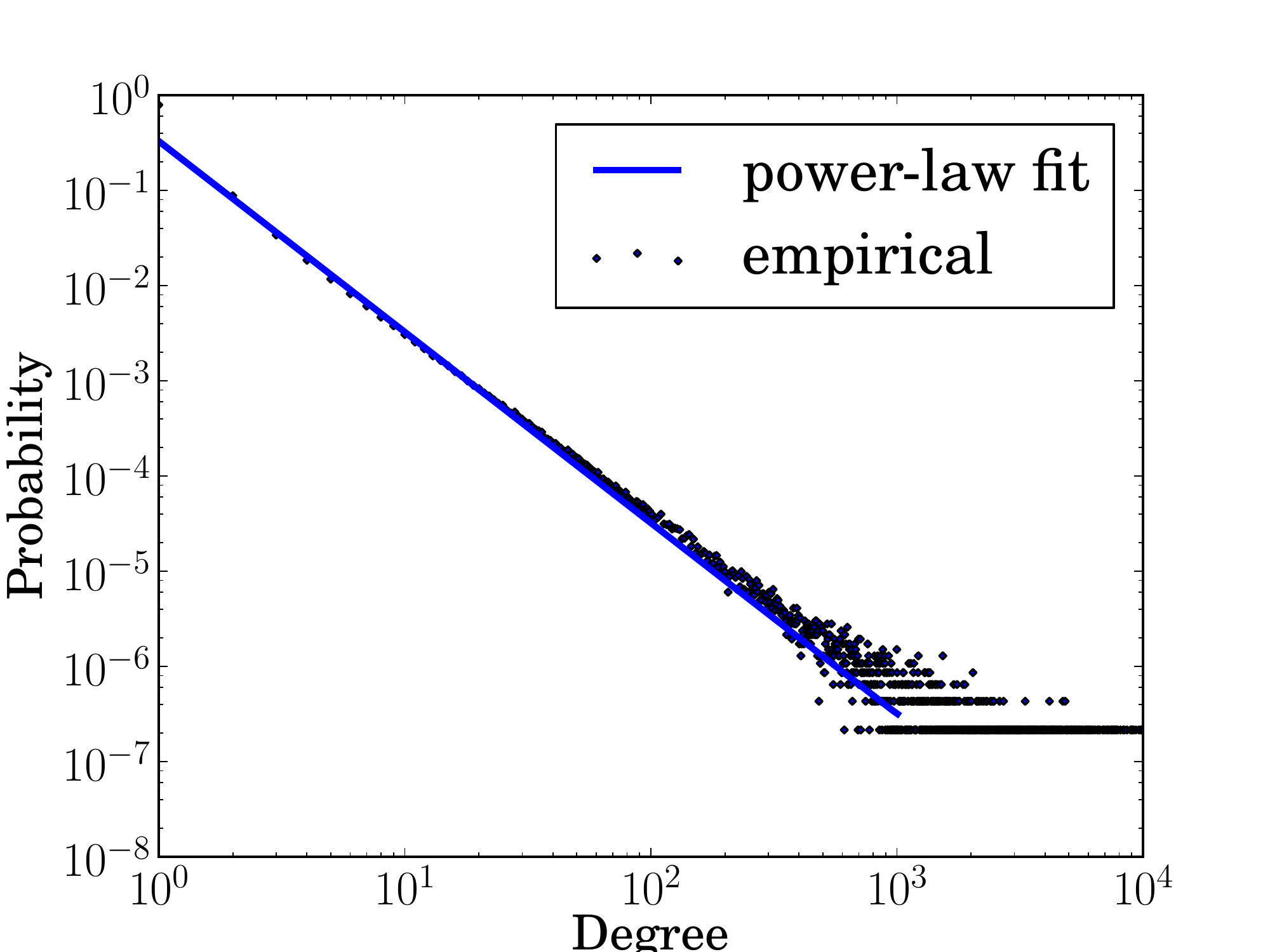} \label{fig:attri-soc}}
\subfloat[\scriptsize{Outdegree of social nodes}]{\includegraphics[width=0.25\textwidth, height=1.5in]{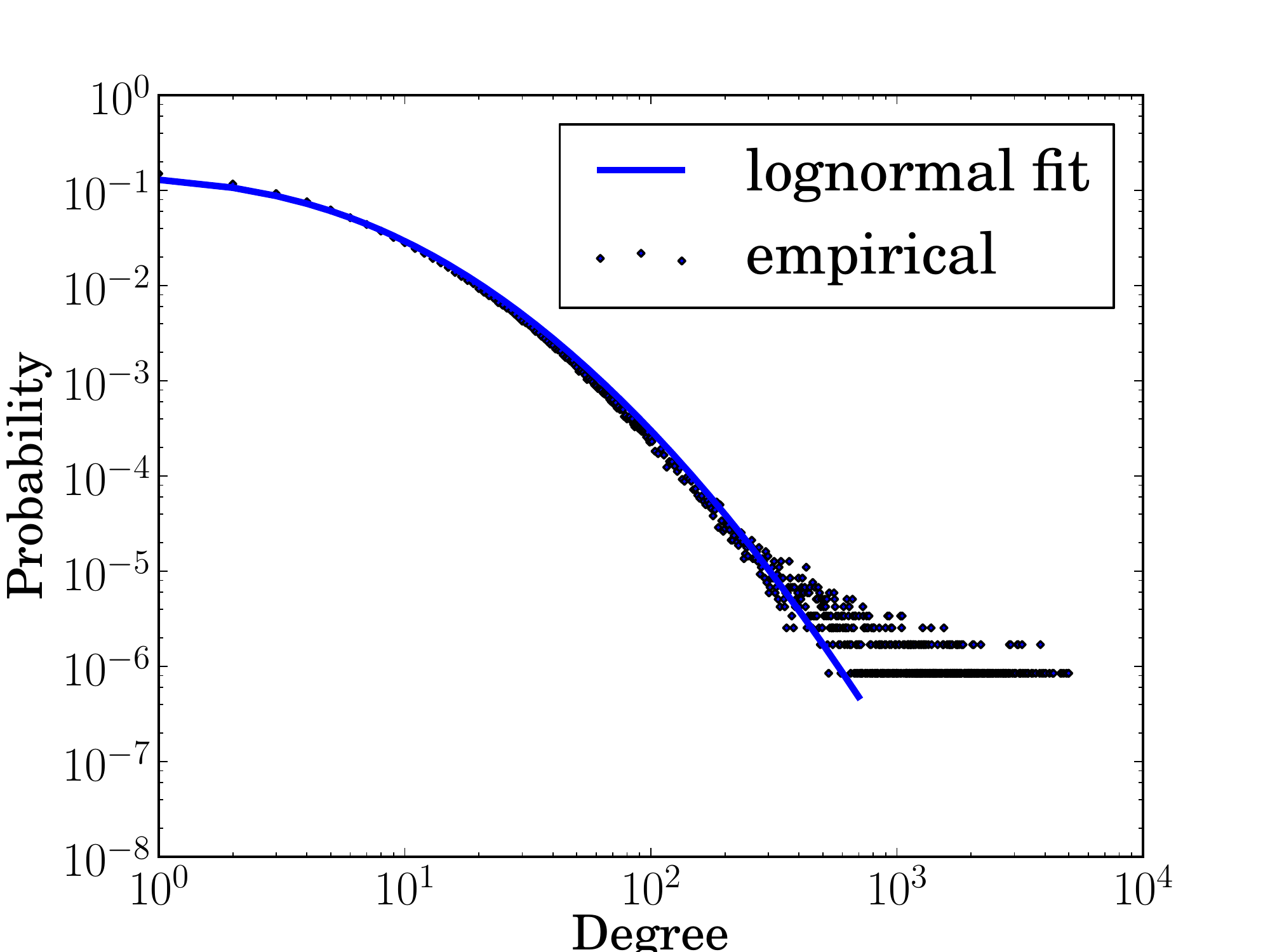} \label{fig:soc-soc-out}}
\subfloat[\scriptsize{Indegree of social nodes}]{\includegraphics[width=0.25\textwidth, height=1.5in]{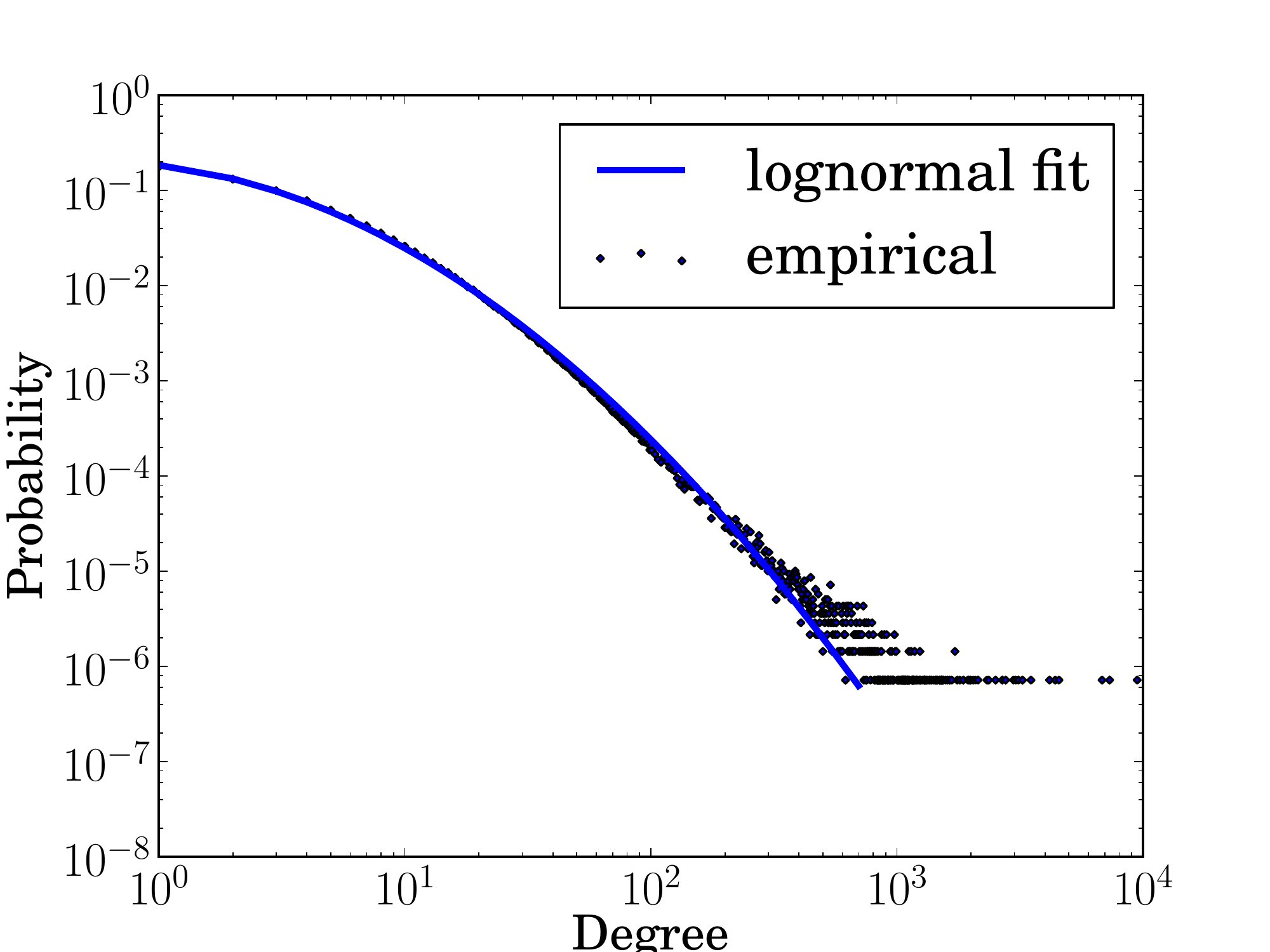} \label{fig:soc-soc-in}} 
\caption{Degree distributions for four types of nodes in the Google+ \san along with 
 their best-fit curves. Except for the attribute degree, all other distributions are best modeled 
 by a discrete lognormal distribution.} 
\label{fig:degree-distribution}
\end{figure*}

\begin{figure*}[t]
\centering
\subfloat[\scriptsize{Social degree of attribute nodes}]{\includegraphics[width=0.25\textwidth, height=1.5in]{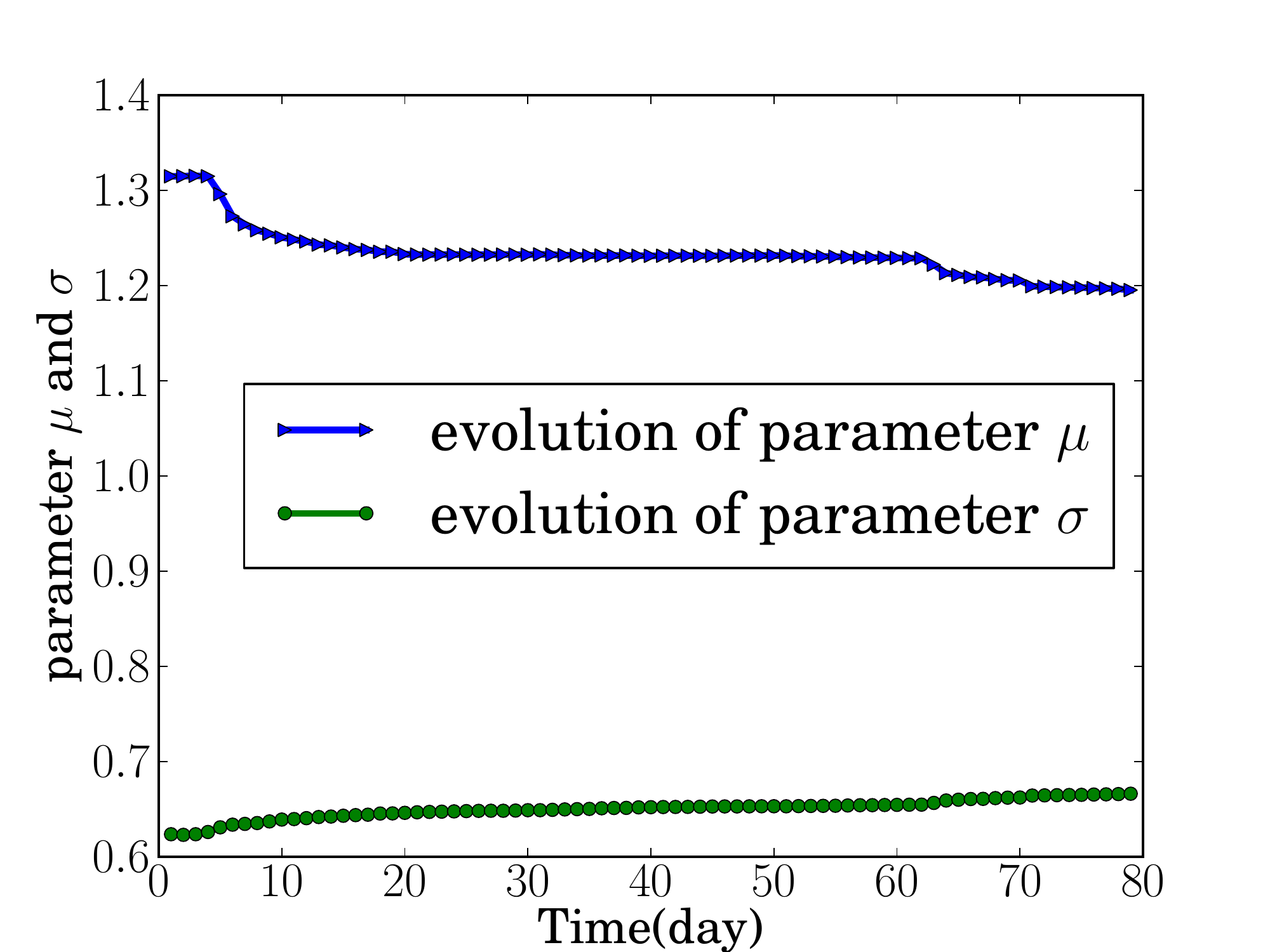} \label{fig:soc-attri-evo}}
\subfloat[\scriptsize{Attribute degree of social nodes}]{\includegraphics[width=0.25\textwidth, height=1.5in]{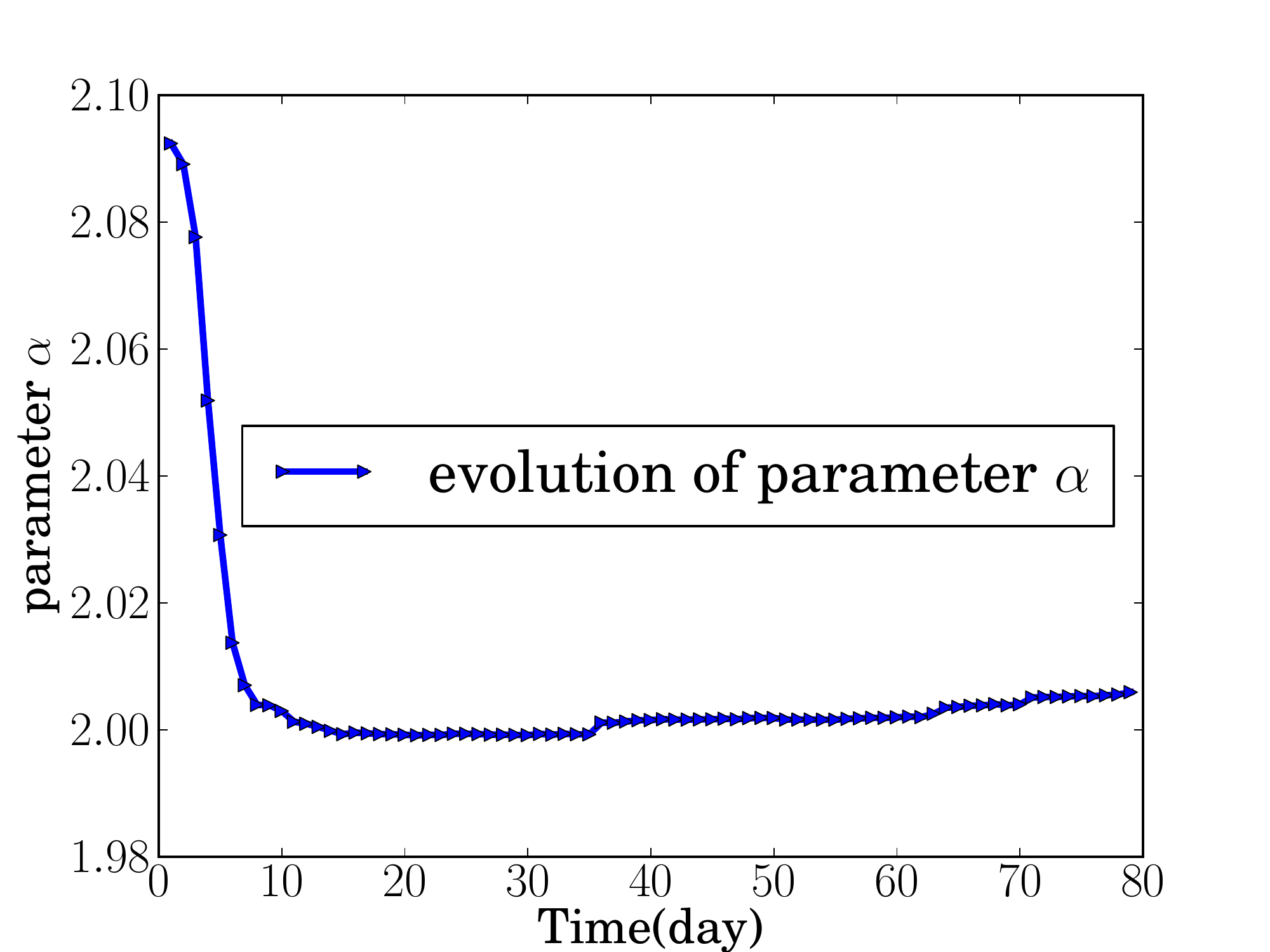} \label{fig:attri-soc-evo}}
\subfloat[\scriptsize{Outdegree of social nodes}]{\includegraphics[width=0.25\textwidth, height=1.5in]{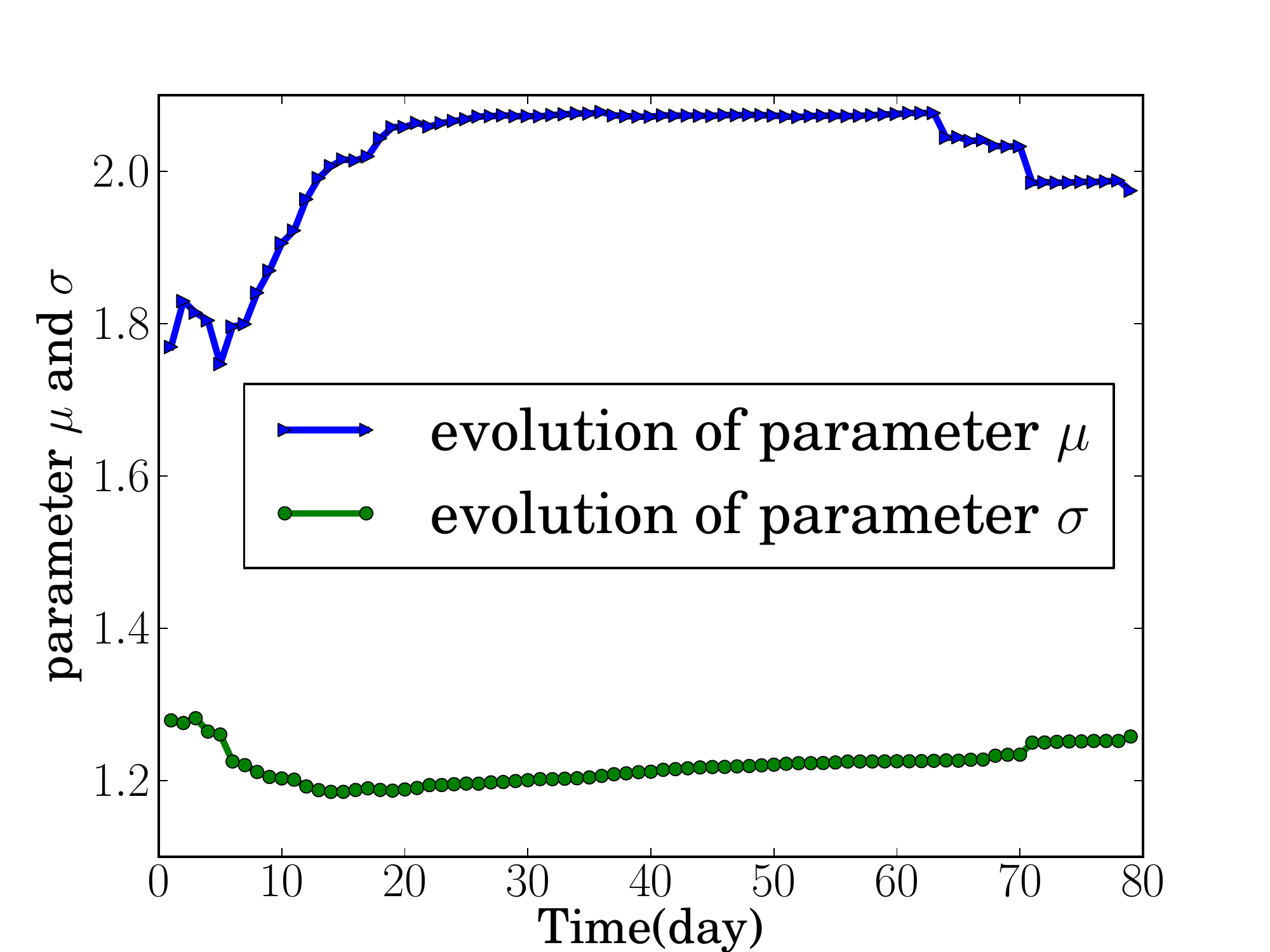} \label{fig:soc-soc-out-evo}}
\subfloat[\scriptsize{Indegree of social nodes}]{\includegraphics[width=0.25\textwidth, height=1.5in]{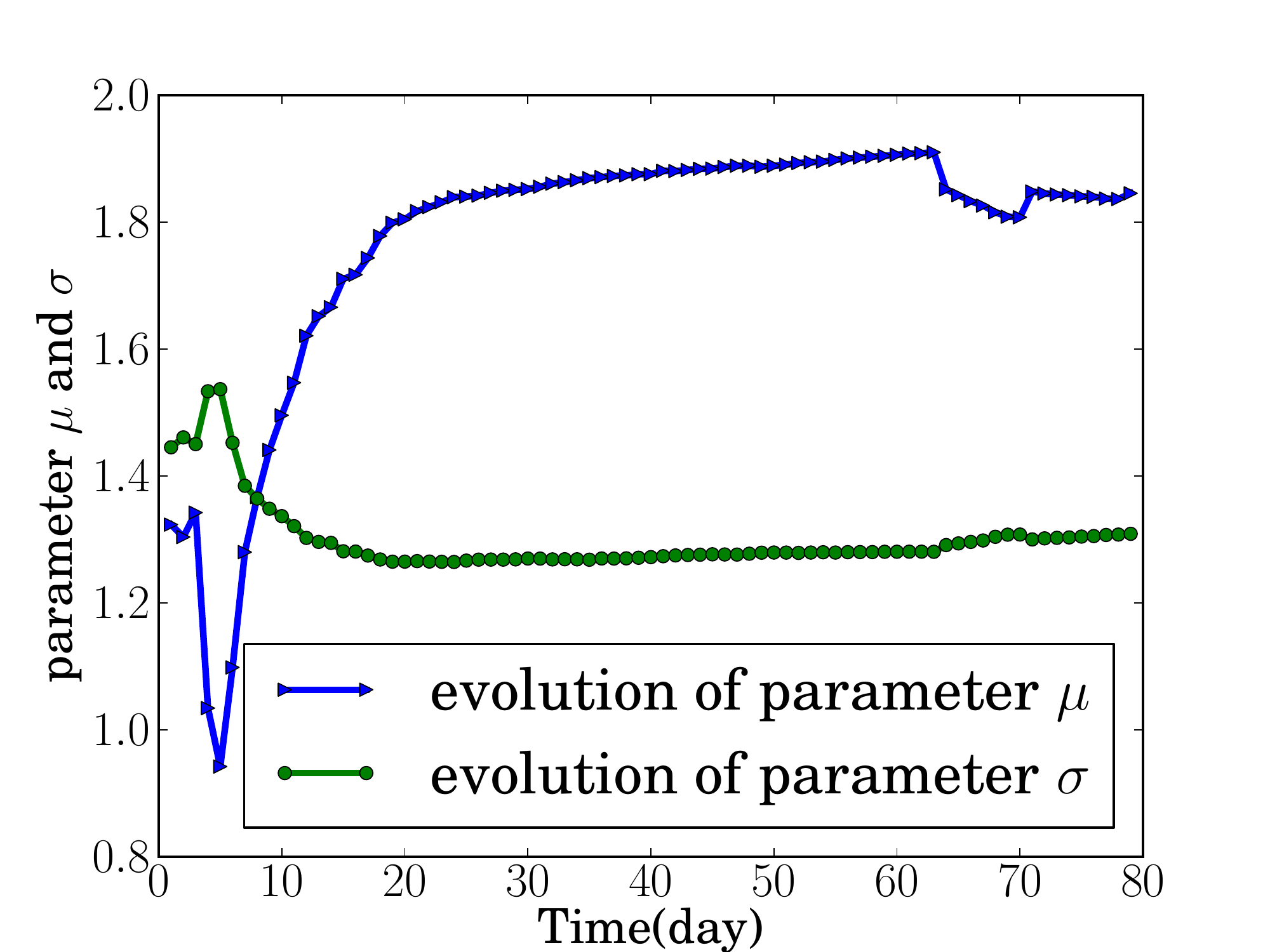} \label{fig:soc-soc-in-evo}}
\tightcaption{Evolution in power-law and log-normal exponents for the various 
distributions.}
\label{fig:degree-evolution}
\end{figure*}
}

\begin{figure}[thbf]
\vspace{-0.4cm}
\centering
\subfloat[\scriptsize{Outdegree}]{\includegraphics[width=0.25\textwidth, height=1.5in]{soc-soc-out.pdf} \label{fig:soc-soc-out}}
\subfloat[\scriptsize{Indegree}]{\includegraphics[width=0.25\textwidth, height=1.5in]{soc-soc-in.pdf} \label{fig:soc-soc-in}} 
\tightcaption{Indegree and outdegree distributions for the social nodes in the Google+ \san along with 
 their best-fit curves. 
We observe that both  are best modeled by a discrete lognormal distribution unlike many networks that suggest power-law distributions.
}
\label{fig:degree-distribution}
\vspace{-2mm}
\end{figure}

\begin{figure}[t]
\vspace{-0.4cm}
\centering
\subfloat[\scriptsize{Outdegree}]{\includegraphics[width=0.25\textwidth, height=1.5in]{soc-soc-out-evo.pdf} \label{fig:soc-soc-out-evo}}
\subfloat[\scriptsize{Indegree}]{\includegraphics[width=0.25\textwidth, height=1.5in]{soc-soc-in-evo.pdf} \label{fig:soc-soc-in-evo}}
\tightcaption{Evolution of the lognormal parameters for the indegree and outdegree distributions.}
\label{fig:degree-evolution}
\vspace{-2mm}
\end{figure}


\subsection{Degree Distributions}


Next, we focus on the social \emph{indegree}  and \emph{outdegree} of users in Google+. In each case, we are also interested in identifying an
empirical best-fit distribution using the tool~\cite{tool, Clauset09}, which compares fits of several widely used distributions (e.g., power-law,  lognormal, power-law with cutoff using) with respect to \emph{goodness-of-fit}. We find that unlike  many studies on social networks, in which
social degrees usually follow a power-law distribution~\cite{Dong09, Mislove07},  
 social degrees are best captured by a discrete lognormal distribution in Google+. Recall that
a random variable $x\in \mathbb{Z^+}$ follows a power-law distribution if
$p(x=k) \propto k^{-\alpha}$, where $\alpha$ is the exponent of the power-law
distribution. On the other hand,  a random variable $x\in \mathbb{Z^+}$ follows
a discrete lognormal distribution if $p(x=k) \propto
\frac{1}{k}\text{exp}(-\frac{(\text{ln}k - \mu)^2}{2\sigma^2})$~\cite{Bi01},
where $\mu$ and $\sigma$ are the mean and standard deviation respectively of
the lognormal distribution.

Figure~\ref{fig:degree-distribution} shows these degree distributions and their discrete lognormal fits, and Figure~\ref{fig:degree-evolution} shows the  evolutions of the parameters for the fitted discrete lognormal distributions. We see the evolution of the outdegree and indegree distributions follows a similar trend but with the fluctuation  differing 
in magnitude (Figures~\ref{fig:soc-soc-out-evo},~\ref{fig:soc-soc-in-evo}).

Lognormally distributed degree distributions imply that there are 
probabilistically more low degree social nodes in Google+ than those in 
power-law distributed networks.

\eat
{
\begin{figure}[t]

\centering
\subfloat[SAN-direct]{\includegraphics[width=0.25\textwidth, height=1.5in]{assort-out-in} \label{fig:assort-out-in}}
\subfloat[Attribute]{\includegraphics[width=0.25\textwidth, height=1.5in]{assort-attri-attri} \label{fig:assort-attri}}
\captionsetup{font=small,labelfont=bf}         
\caption{\bf Joint degree distributions.} 
\label{fig:joint-degree-distribution}
\end{figure}
}  

\begin{figure}[htbf]
\vspace{-0.4cm}
\centering
\subfloat[\scriptsize{$k_{nn}$ metric}]{\includegraphics[width=0.25\textwidth, height=1.5in]{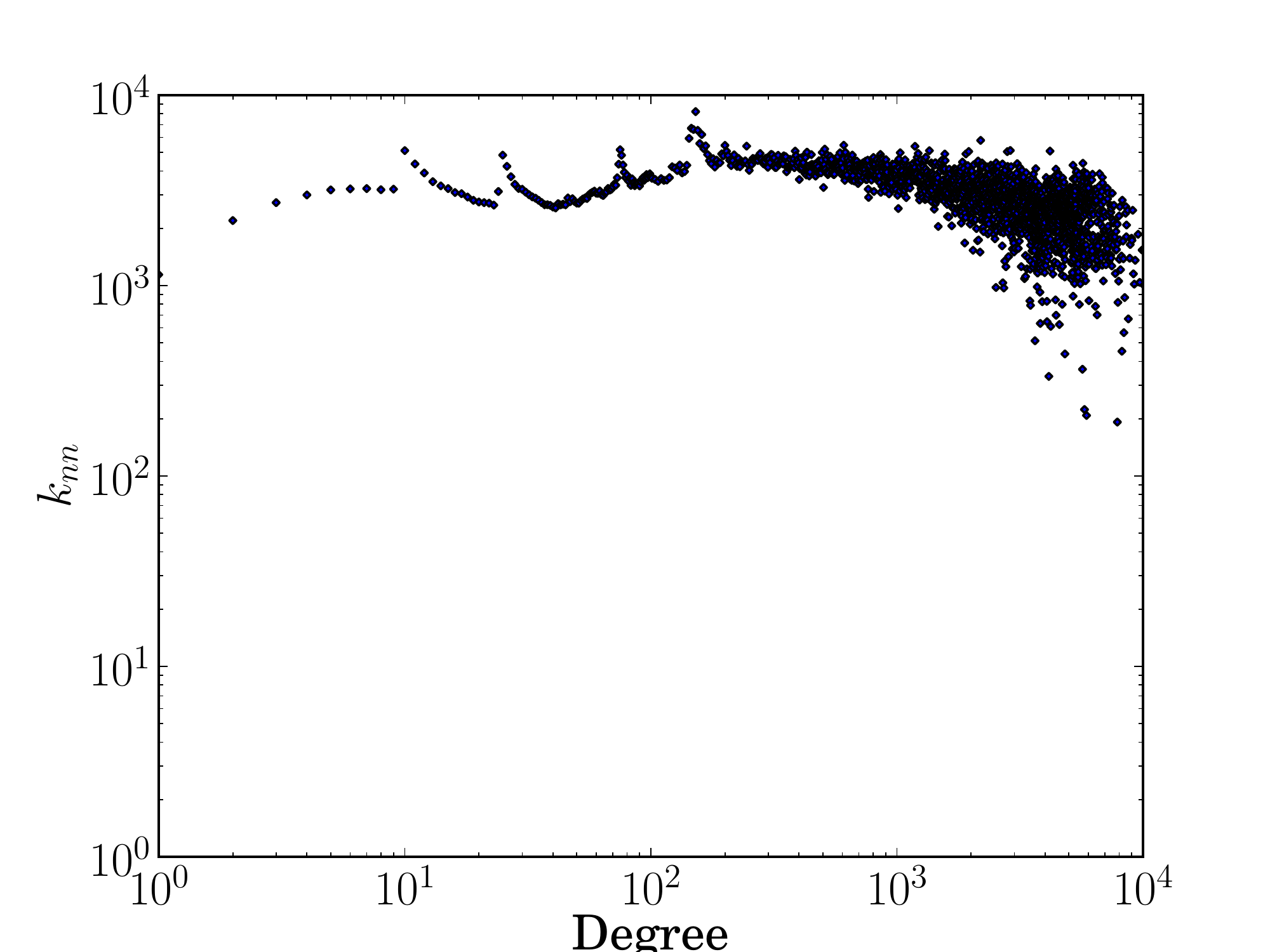}\label{fig:joint-degree-distribution}} 
\subfloat[\scriptsize{Assortativity}]{\includegraphics[width=0.25\textwidth, height=1.5in]{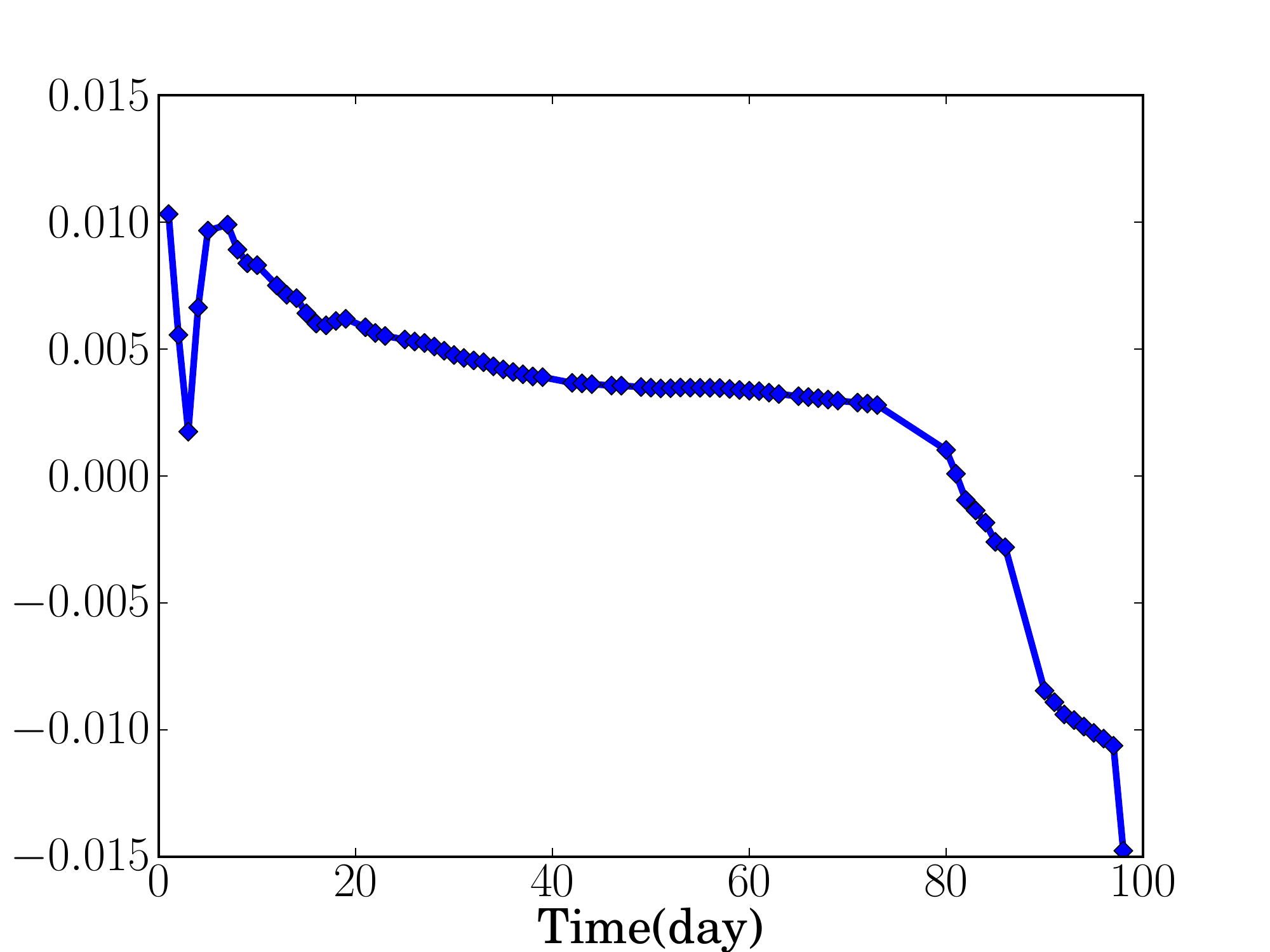}\label{fig:joint-degree-distribution-evo}} 
\tightcaption{ Two metrics for capturing the joint-degree distribution: (a) $k_{nn}$ shows a 
 log-log plot of the outdegree versus the average indegree 
of friends and (b) shows the evolution of the assortativity coefficient.} 
\vspace{-2mm}
\end{figure}

\subsection{Joint Degree Distribution} 
Last, we examine the joint degree distribution (JDD) of the Google+
social structure. JDD is useful for understanding the preference of a
node to attach itself to nodes that  are similar to itself. 
One way to approximate the JDD is using the degree correlation function $k_{nn}$, which maps outdegree to the average indegree of all nodes connected to nodes of that outdegree~\cite{Pastor-Satorras01, Mislove07}. 
An increasing $k_{nn}$ trend indicates high-degree nodes tend to connect to other high-degree nodes; a decreasing $k_{nn}$ represents the opposite trend. 
Figure~\ref{fig:joint-degree-distribution} shows the $k_{nn}$ function for Google+ social structure.

The JDD can further be quantified using the \emph{assortativity coefficient}
$r$ that can range from -1 to 1~\cite{Newman03}.  $r$ is positive if $k_{nn}$
is positively correlated to node degree $k$.
Figure~\ref{fig:joint-degree-distribution-evo} illustrates the evolution of the
assortativity coefficient. We observe that $r$ keeps decreasing in all three
phases but at different rates.
Furthermore, unlike many traditional social networks where the
assortativity coefficient is typically positive---0.202 for Flickr, 0.179 for
LiveJournal and 0.072 for Orkut~\cite{Mislove07, Newman03}---Google+ has 
 almost neutral assortativity close to 0.   The neutral assortativity can possibly be explained by the hypothesis that Google+ is a hybrid of two ingredients, i.e., a traditional social network and a publisher-subscriber network (e.g., Twitter). Traditional social networks usually have positive assortativity; publisher-subscriber networks often have negative assortativity because high-degree publisher nodes tend to be connected to low-degree subscriber nodes. Thus a hybrid of them results in a network with neutral assortativity.  The evolution pattern of Google+' assortativity coefficient (i.e., positive in \PI, around 0 in \PII, and negative in \PIII) manifests the competing process of the two ingredients of Google+. More specifically, the traditional network ingredient slightly wins in \PI, resulting in a slightly positive assortativity coefficient. A draw between them in \PII results in the neutral assortativity. In \PIII, the publisher-subscriber ingredient wins, resulting in a slightly negative assortativity coefficient. This implies that Google+ is more and more like a publisher-subscriber network.

\subsection{Summary of Key Observations and Implications}
 Analyzing the social structure of Google+ and its evolution over time, we 
 find that: 

\begin{packeditemize}

\item In contrast to many traditional networks, we find that Google+ has low reciprocity, the social degree
distribution is best modeled by a lognormal distribution rather than a 
power-law distribution, and the assortativity is neutral rather than positive.  

\item Google+ is somewhere between a traditional social network
(e.g., Flickr) and a  publisher-subscriber network (e.g., Twitter),
reflecting the hybrid interaction model that it offers. Moreover, it's more and more closer to a publisher-subscriber network.

\item The evolutionary patterns of various network metrics in Google+ are different from those in many traditional networks or assumptions of various network models. These findings imply that existing models cannot explain the underlying growing mechanism of Google+, and we need to design new models for reproducing social networks similar to Google+.

\end{packeditemize}

\section{Attribute Structure of the \\ Google+ \san}
\label{sec:attribute}

In the previous section we looked at well-known social network metrics.  In
this section, we focus on analyzing the attribute structure of the Google+
\san. To this end, we extend the metrics from the previous section to the
attributes as well.    Finally, we show the importance of using attributes
in understanding the social structure by studying their impact on metrics we
analyzed earlier (e.g., reciprocity, clustering coefficient, and
degree distribution). These attribute-related studies will characterize the attribute structure, give us insights about the underlying growing mechanism of Google+, and eventually guide us design a new generative model for Google+ \san.

\begin{figure}[htbf]
\hspace{-0.5cm}
\subfloat[\scriptsize{Attribute density}]
{
\includegraphics[width=0.25\textwidth, height=1.5in]{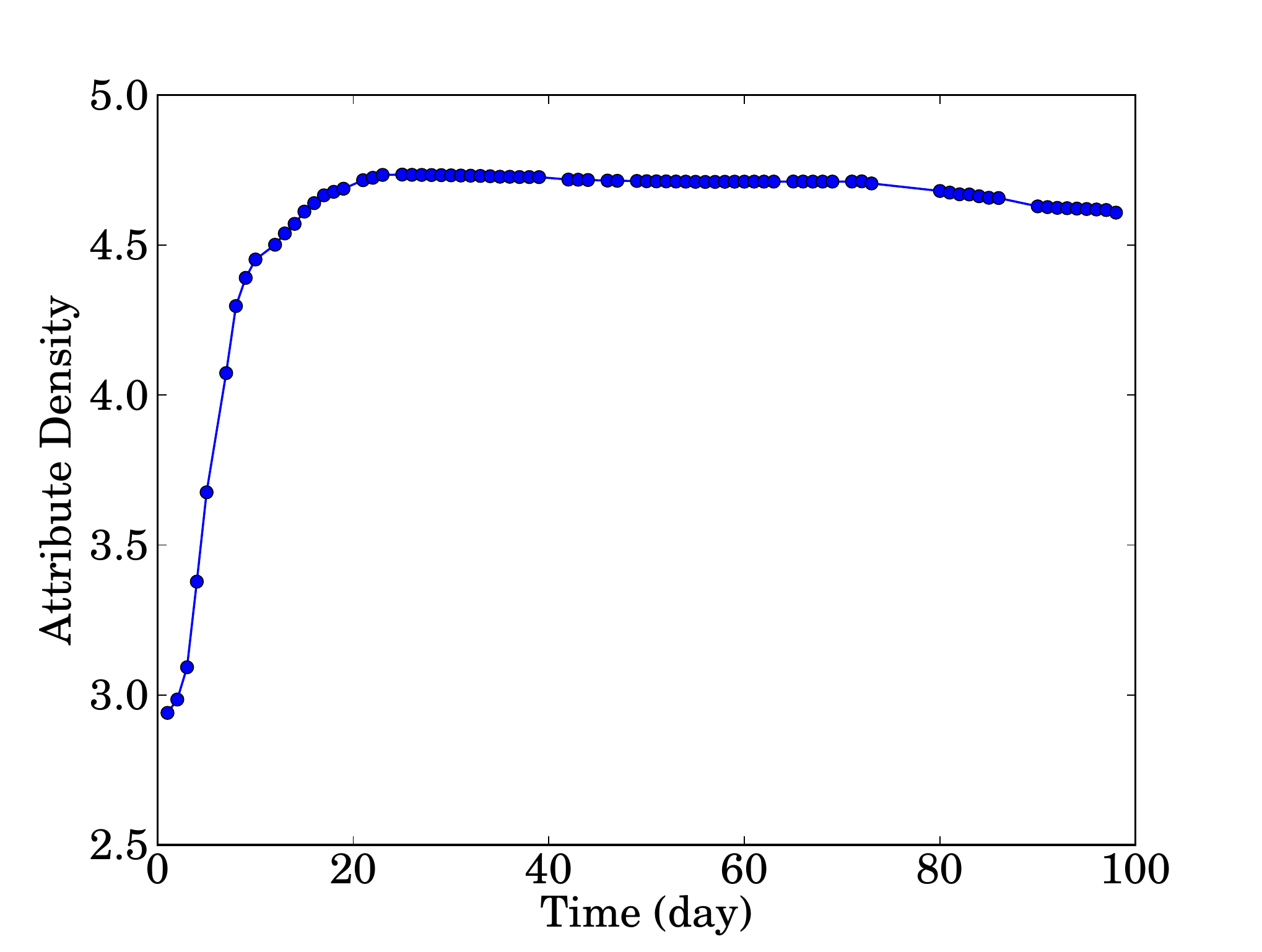}\label{attri-density}
}
\hspace{-0.5cm}
\subfloat[\scriptsize{Clustering coefficient}]
{
\includegraphics[width=0.25\textwidth, height=1.5in]{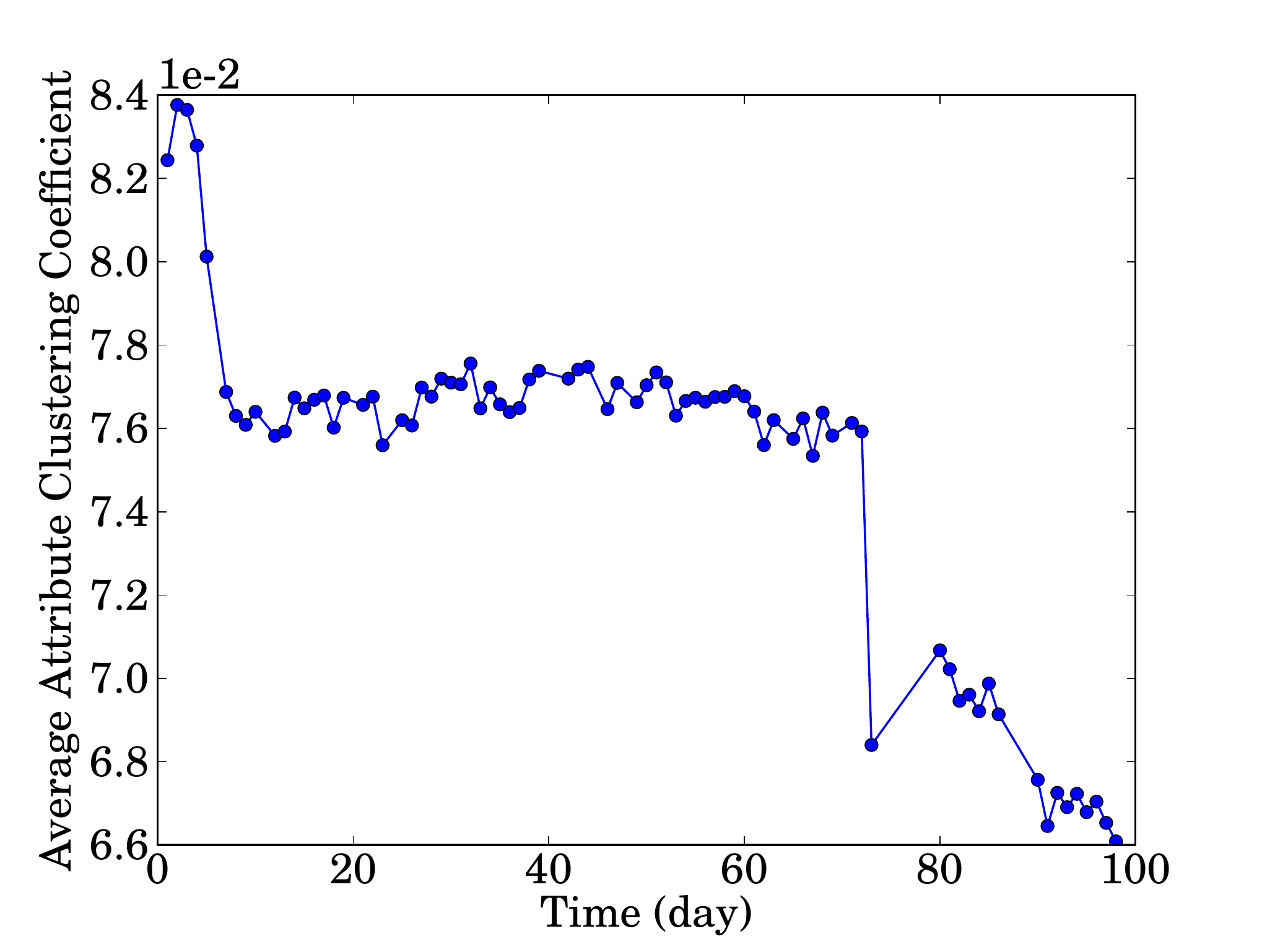}
\label{attri-clustering-coefficient}
}
\tightcaption{Evolution of the attribute density and average attribute clustering coefficient in the Google+ \san. }
\vspace{-2mm}
\end{figure}

\begin{figure}[htbf]
\vspace{-0.4cm}
\subfloat[\scriptsize{Distributions in the original \san}]{\includegraphics[width=0.25\textwidth, height=1.5in]{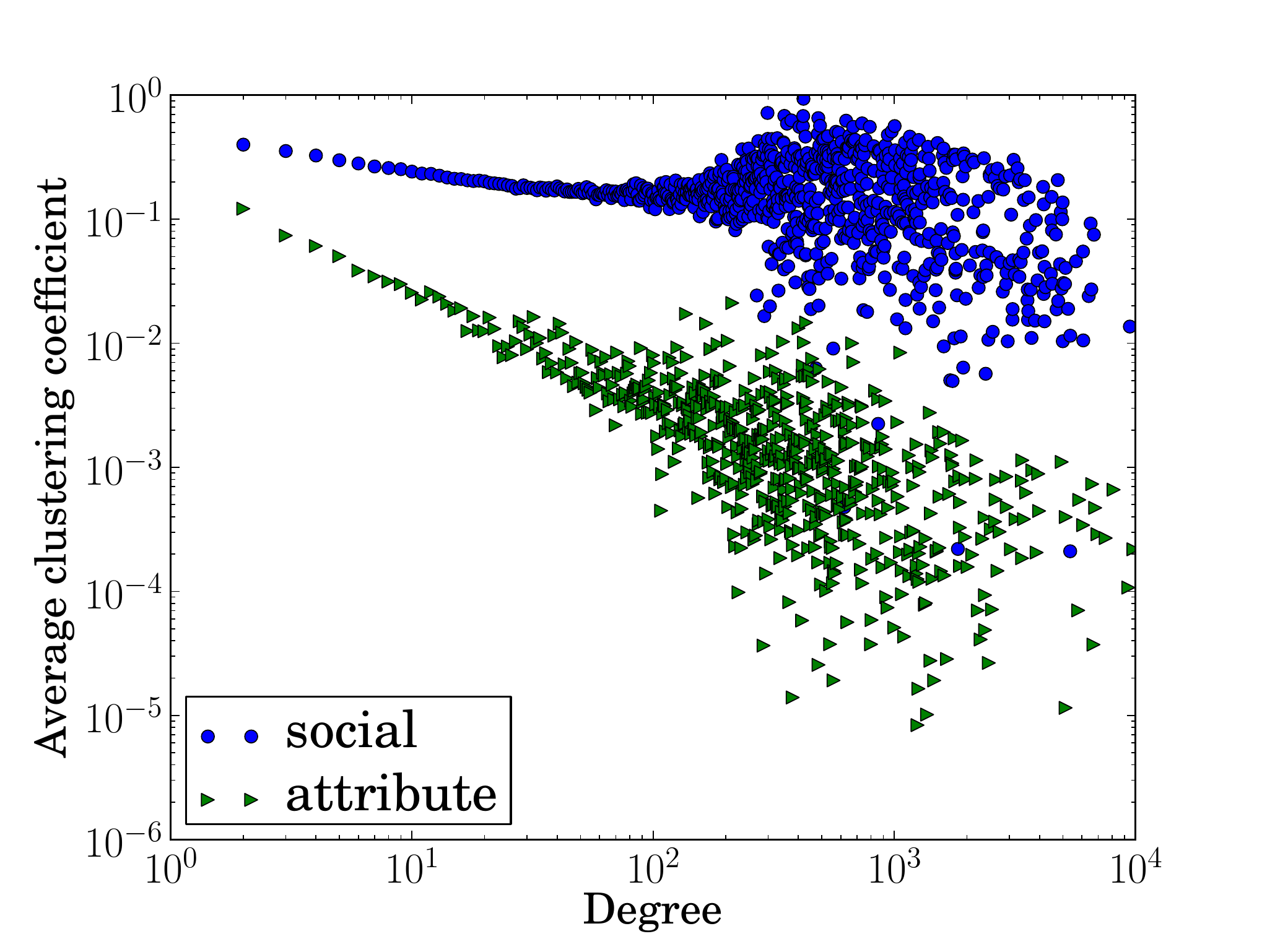}\label{fig:clustering-coefficient-dis-deg}}
\subfloat[\scriptsize{Comparison with a subsampled \san }]
{
\includegraphics[width=0.25\textwidth, height=1.5in]{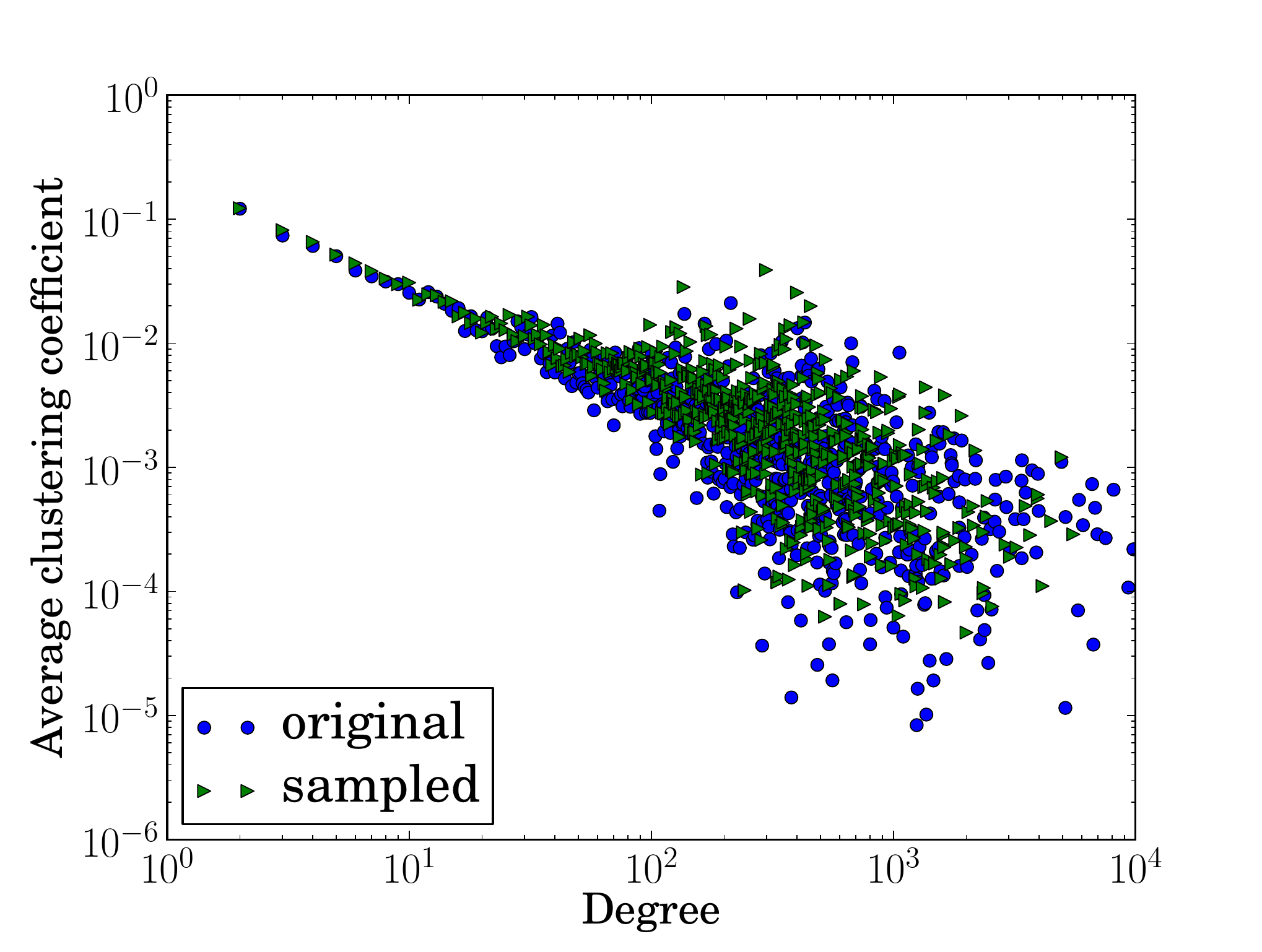}
\label{origin-sample}
}
\tightcaption{Distributions of clustering 
coefficient with respect to node degrees. (a) Comparison of social and attribute clustering coefficient distributions in the original \san. (b)Comparison of distributions of attribute clustering coefficients in the original \san and the subsampled \san.}
\vspace{-2mm}
\end{figure}

\subsection{Attribute Metrics} 

\myparatight{Density} 
We consider a natural extension of the social density metric from
\Section\ref{sec:socialdensity} and define \emph{attribute density} as
$\frac{|E_a|}{|V_a|}$. 
Different from our observations with
social density in Figure~\ref{soc-density}, 
in Figure~\ref{attri-density}, 
We observe the attribute density increases rapidly in \PI,
stays relatively flat  in \PII, and slightly decreases in 
\PIII.  
The reason for the decrease in \PIII  is  the large volume of new (i.e.,
non-invitation) users joining Google+ with many new attribute nodes whose
social degrees are small.

\myparatight{Diameter} We extend the distance metric from \Section\ref{sec:structure:distance}
 to define  the \emph{attribute distance} between two
attribute nodes $a$ and $b$ as $dist(a,b)=\text{min} \{dist(u,v) |
u\in\Gamma_s(a), v\in \Gamma_s(b)\} + 1$.\footnote{Other 
definitions are possible, e.g, using average instead of min. We choose min because of its computational efficiency.}
Intuitively, attribute
distance is the minimum number of social nodes that a attribute node has to
traverse before reaching to the other one; i.e., attribute distance is the
distance between two attribute communities. Similarly, we can 
 consider the effective diameter using this attribute distance. Figure~\ref{diameter-evo} 
 also shows the evolution of the attribute diameter and shows that it 
 very closely mirrors the social diameter.

\myparatight{Clustering coefficient} Similarly, we generalize the social clustering coefficient from
\Section\ref{sec:socialclustering} to define the \emph{attribute
clustering coefficient} $c(u)$ for the attribute node $u$, and the average attribute
clustering coefficient as $C_a=\frac{1}{|V_a|}\sum_{u\in V_a}c(u)$.  This attribute clustering coefficient $c(u)$ characterizes  the power
of attribute $u$ to form communities among users who have the attribute $u$.  
Compared to Figure~\ref{soc-clustering-coefficient}, we find in
Figure~\ref{attri-clustering-coefficient} that the average attribute clustering
coefficient evolves in a different pattern since it's relatively stable in \PII. 

We also show the distribution of average social and attribute clustering coefficients as a 
 function of node degree in Figure~\ref{fig:clustering-coefficient-dis-deg}. We observe that both social and attribute clustering coefficients follow a power-law distribution with respect to node degrees, but attribute  clustering coefficient distribution has a larger exponent. Moreover, we see that in general attribute clustering coefficients are lower  because many shared attributes (e.g., city or major) will not naturally translate into a social relationship. 

\begin{figure}[htbf]
\vspace{-0.5cm}
\centering
\subfloat[\scriptsize{Attribute degree of social nodes}]{\includegraphics[width=0.24\textwidth, height=1.5in]{soc-attri.pdf} \label{fig:soc-attri}}
\subfloat[\scriptsize {Social degree of attribute nodes}]{\includegraphics[width=0.24\textwidth, height=1.5in]{attri-soc.pdf} \label{fig:attri-soc}}
\tightcaption{Distributions of attribute-induced degrees in the Google+ \san along with 
 their best fits. The attribute degree of social nodes is best modeled by a 
lognormal whereas the social degree of attribute nodes is 
  best modeled by a power-law distribution. 
}
\label{fig:attrdegree-distribution}
\vspace{-2mm}
\end{figure}

\begin{figure}[htbf]
\vspace{-0.5cm}
\centering
\subfloat[ \scriptsize{Attribute degree of social  nodes}]{\includegraphics[width=0.24\textwidth, height=1.5in]{soc-attri-evo.pdf} \label{fig:soc-attri-evo}}
\subfloat[ \scriptsize{Social degree of attribute nodes}]{\includegraphics[width=0.24\textwidth, height=1.5in]{attri-soc-evo.pdf} \label{fig:attri-soc-evo}}
\tightcaption{Evolution in lognormal and power-law parameters for the attribute and social degree distributions}
\label{fig:attrdegree-evolution}
\vspace{-1mm}
\end{figure}

\myparatight{Degree distributions} As discussed earlier, \sanplural 
  introduce  edges between social and attribute nodes. Thus, we  consider two new
notions of node degrees: (1)  \emph{social degree} of  attribute nodes (i.e.,
the number of users that have this attribute) and (2) \emph{attribute degree}
of  social nodes (i.e., the number of attributes each user has).
We find that the attribute degree of social nodes is best modeled by a 
lognormal distribution whereas the social degree of attribute nodes is best 
modeled by a power-law distribution. Figure~\ref{fig:attrdegree-distribution} and
Figure~\ref{fig:attrdegree-evolution} show the degree distributions and
evolution of their fitted parameters.

%
In terms of the evolution, we find the attribute degree
evolution seen in  Figure~\ref{fig:attrdegree-evolution} is significantly
different from the previous observation in  Figure~\ref{fig:degree-evolution}:
its mean decreases in \PI,  remains roughly constant in \PII, and decreases
again in \PIII. However, its standard deviation increases slightly in all
phases.  Finally, for the  social degree which follows a power-law
distribution, the exponent decreases fast in  \PI, and increases slightly in
\PII  and \PIII.

\begin{figure}[htbf]
\vspace{-0.4cm}
\centering
\subfloat[\scriptsize{Attribute $k_{nn}$}]{\includegraphics[width=0.25\textwidth, height=1.5in]{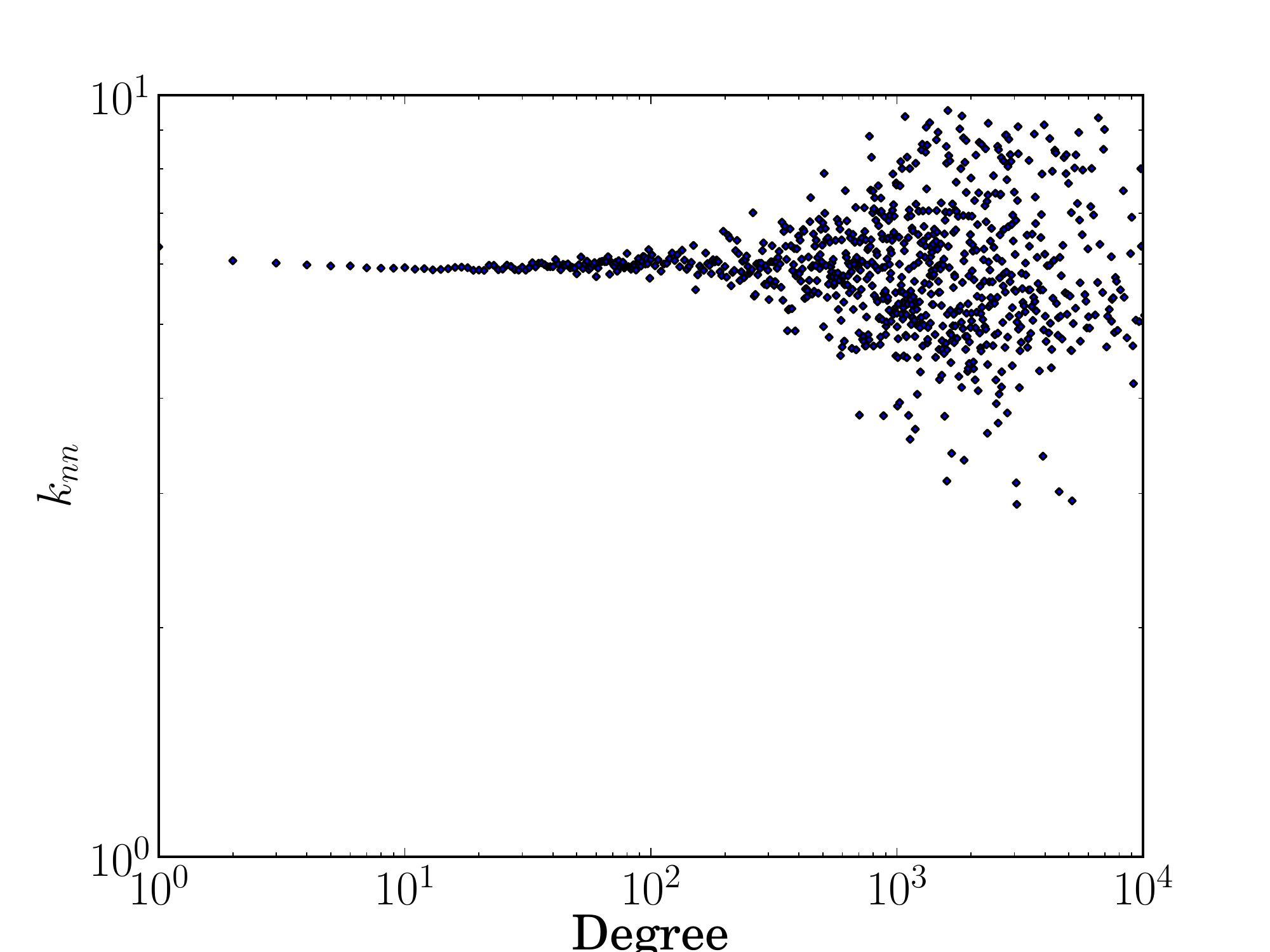}} 
\subfloat[\scriptsize{Evolution of assortativity}]{\includegraphics[width=0.25\textwidth, height=1.5in]{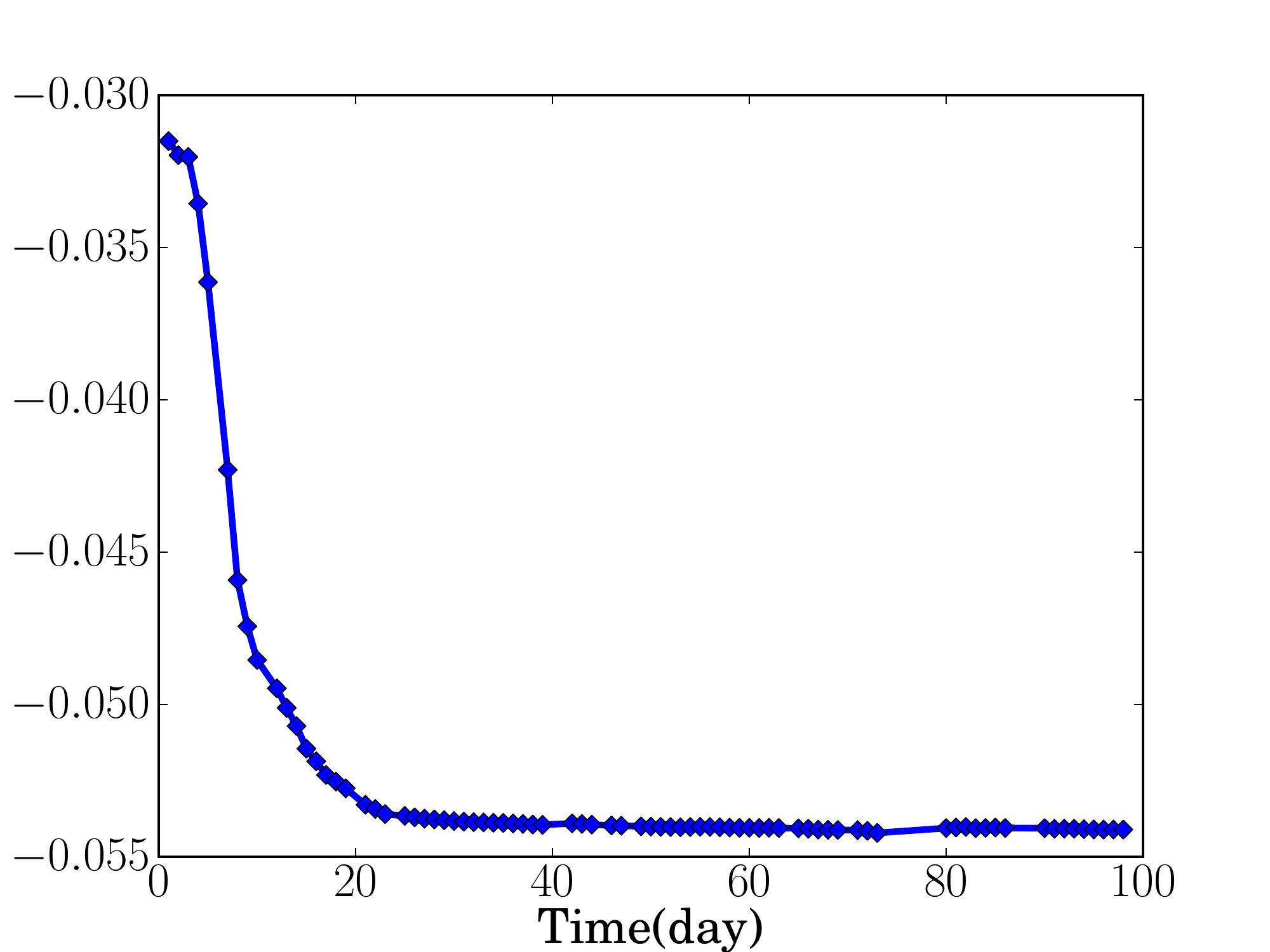}\label{fig:attrijoint-degree-distribution-evo}} 
\tightcaption{(a) Joint degree of attribute nodes: Log-log plot of the social degree  versus the average attribute degree of social neighbors of attribute nodes. (b) The evolution of the attribute assortativity coefficient.} 
\label{fig:attri-joint-degree-distribution}
\vspace{-2mm}
\end{figure}

\myparatight{Joint degree distribution} Next, we  extend the joint degree
distribution (JDD) analysis to attribute nodes.  For each social degree $k$, we
compute $k_{nn}$ as the average attribute degree of social neighbors of
attribute nodes that have social degree $k$.  Intuitively, it captures the
tendency of attribute nodes with high social degree to connect to social nodes
with high attribute degree; i.e., if many nodes share a particular attribute,
then are these nodes likely have many attributes?
Figure~\ref{fig:attri-joint-degree-distribution} shows the $k_{nn}$ function
for attribute JDD and the evolution of the attribute assortativity.
Intuitively, we expect this relationship to be neutral and the result confirms
this intuition; e.g., there are many Google+ users in New York but that does
not imply the people in New York have many attributes. One interesting
observation is that attribute assortativity coefficient evolves slightly differently
compared to social assortativity coefficient
(Figure~\ref{fig:joint-degree-distribution-evo});  it is stable in \PIII
whereas social assortativity decreases significantly.
 

\subsection{Influence on Social Network Structure}
Next, we look at how attributes influence the social structure of the 
 Google+ \san w.r.t the metrics  discussed in   \Section\ref{sec:structure}.

\begin{figure}[t]
\vspace{-0.4cm}
\centering
\subfloat[ \scriptsize{Reciprocity}]{\includegraphics[width=0.25\textwidth, height=1.5in]{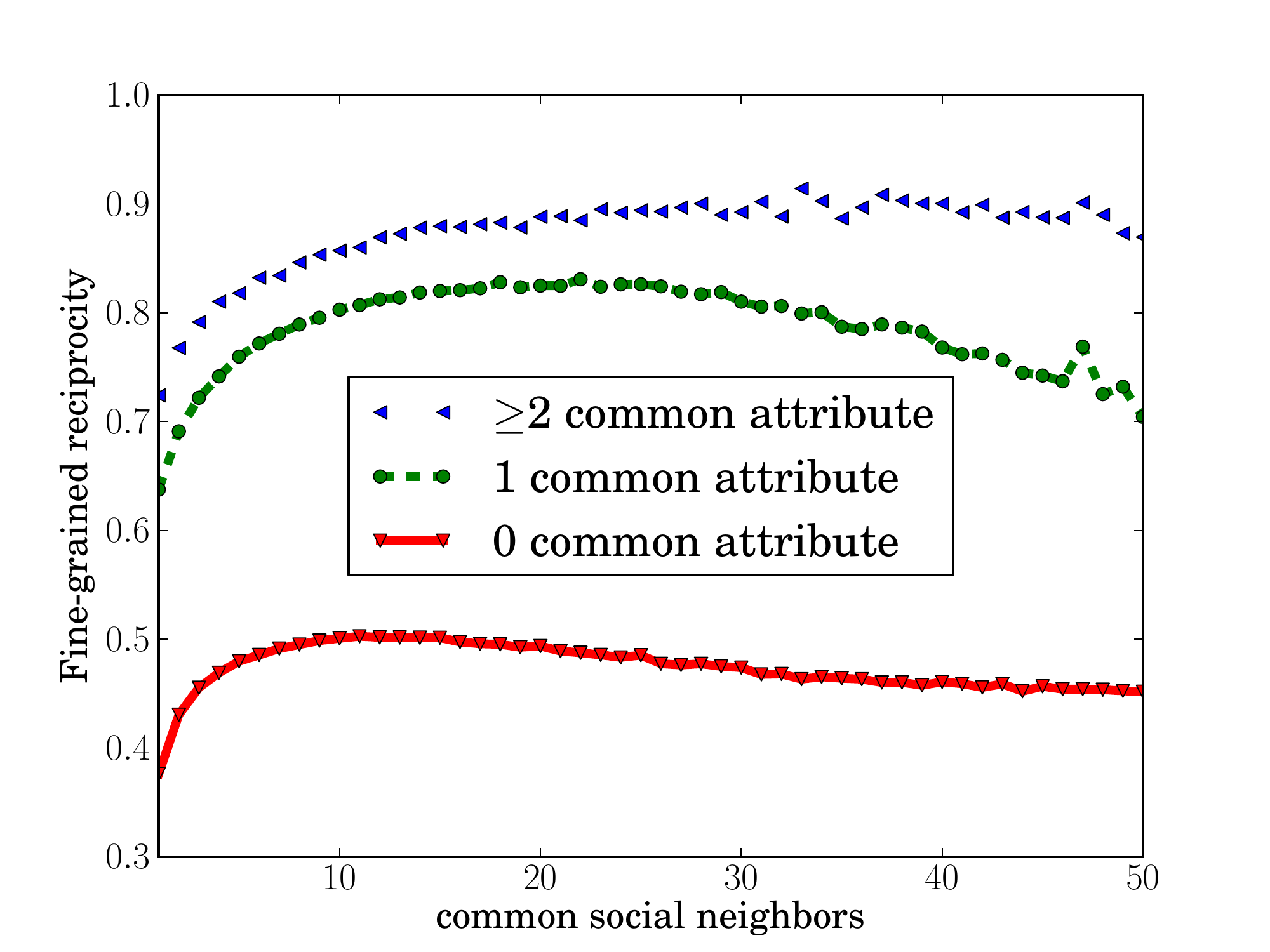}\label{fig:friend-accept}}
\subfloat[ \scriptsize{Clustering coefficients}]{\includegraphics[width=0.25\textwidth, height=1.5in]{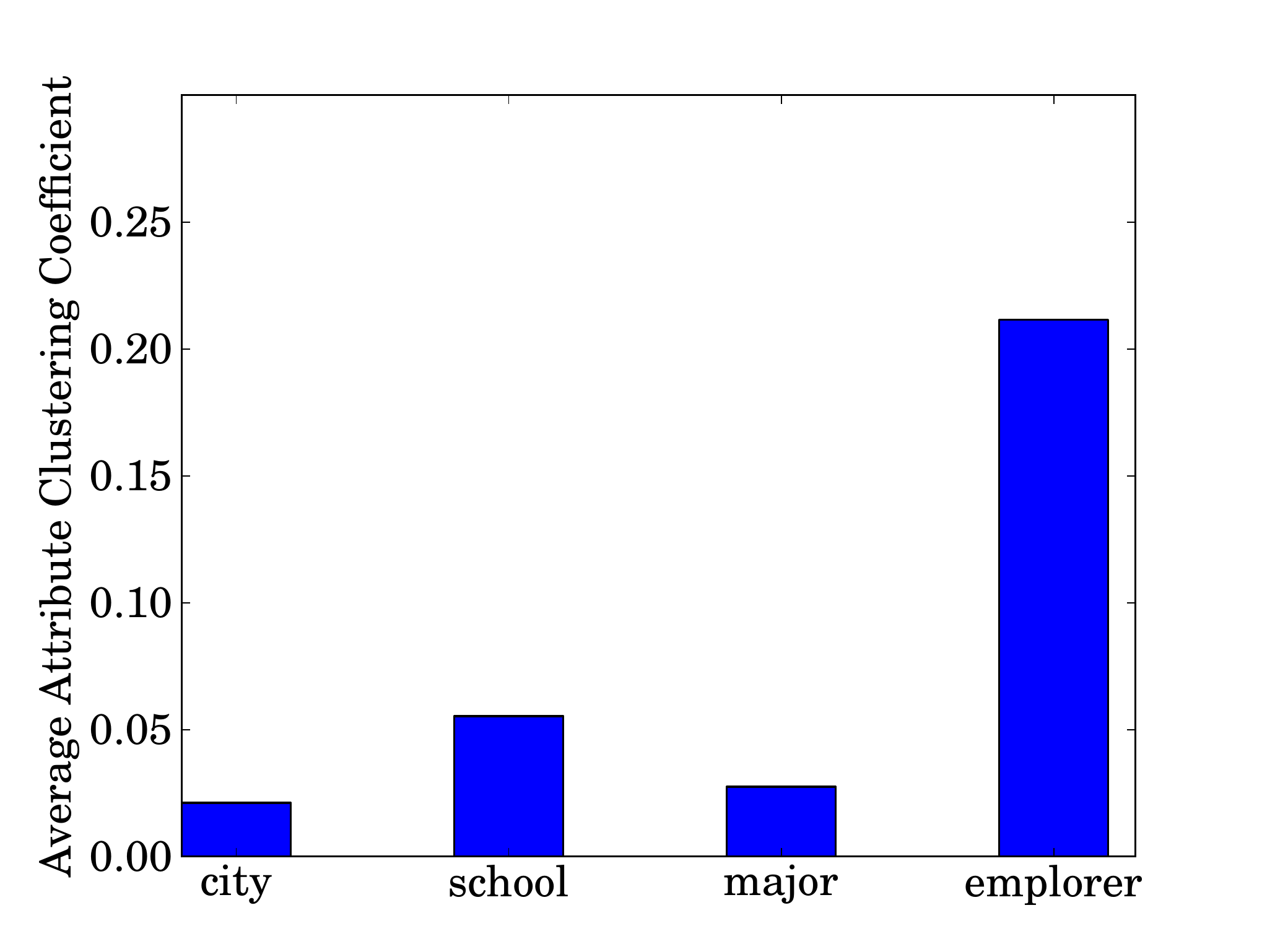}\label{attri-clustering-coefficient-dis}}
\tightcaption{Influence of attribute on reciprocity and clustering coefficients.}
\label{fig:reciprocity-attr}
\vspace{-2mm}
\end{figure}

\myparatight{Reciprocity} We study how the number of common attribute 
neighbors influences reciprocity in conjunction with the number of 
common social neighbors. Let $a$ and $s$ denote the number of attribute 
and social neighbors of a given node, respectively.  
For each  pair $(s,a)$, we compute 
$r_{s,a}$ as the percentage of links that are reciprocal among
all the links whose  endpoints have  $s$ social neighbors and $a$ attribute
neighbors.  

%


To compute this, we look at all \emph{one directional} links at the snapshot
collected halfway and then compute the number of such links that become
\emph{bidirectional} at the last snapshot.  We split these by the number of
common social and attribute neighbors between these nodes at the halfway stage
and  show the $r_{s,a}$ values in Figure~\ref{fig:reciprocity-attr}.  We see that the reciprocity is almost twice
as high for nodes that share common attribute neighbors compared to nodes
without common attributes, regardless of the number of common social neighbors.
While sharing common social neighbors improves link reciprocity,  there is
 a natural diminishing returns property beyond 10 common social neighbors, and
even decreasing for much larger values.  We speculate that nodes 
 sharing too many social neighbors are likely users with many 
 ``weak'' ties. For recent reciprocity prediction problem~\cite{Cheng11, Hopcroft11}, our findings imply that any reciprocity predictor should incorporate node attributes instead of pure social structure metrics.

\eat{
\begin{figure}[t!]
\centering
\includegraphics[width=0.25\textwidth, height=1.5in]{soc-clustering-coefficient-distribution-degree-attribute-type_loglog}\label{fig:soc-clustering-coefficient-distribution-attribute}
\tightcaption{\bf Distribution of social clustering coefficient with respect to node degree for the four attribute types. No difference detected. (might not use this.)}
\vspace{-2mm}
\end{figure}
}


\myparatight{Clustering coefficient} 
Next, we compute the average attribute clustering coefficient for the 4 attribute types: Employer, School, Major and City.
For example, we compute the attribute clustering coefficients for all attribute nodes belonging to the attribute type Employer, and then average them to obtain the average attribute clustering coefficient for Employer. Figure~\ref{attri-clustering-coefficient-dis} shows that  attribute types vary
in their influence on  forming communities and that users with the same Employer
attributes are much more likely to form communities than users
sharing  other attribute types. This has interesting implications for 
link prediction and attribute inference.

\begin{figure}[htbf]
\vspace{-0.5cm}
\centering
\subfloat[\scriptsize{Employer}]{\includegraphics[width=0.25\textwidth, height=1.8in]{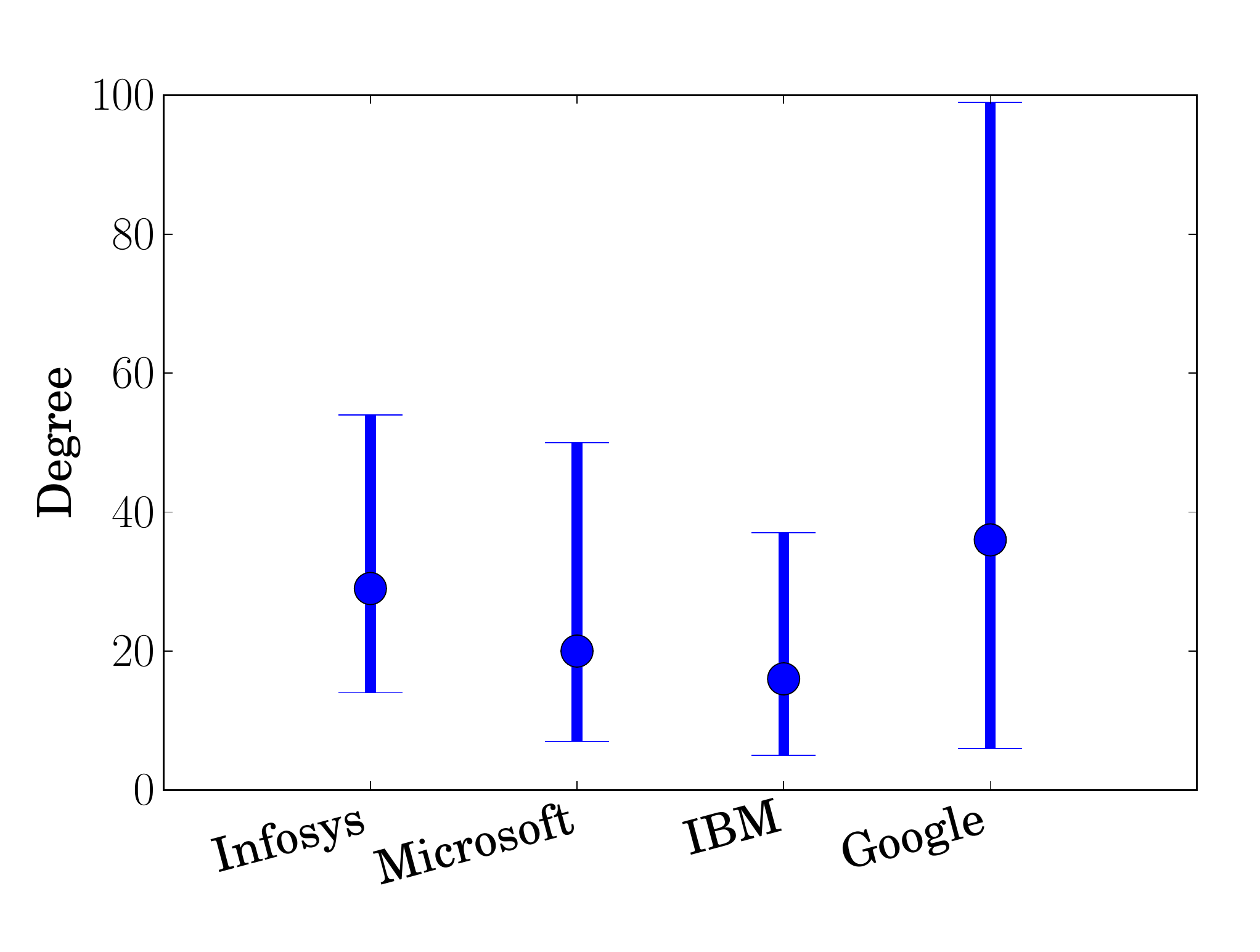}\label{soc-density-distribution-employer}}
\subfloat[\scriptsize{Major}]{\includegraphics[width=0.25\textwidth, height=1.8in]{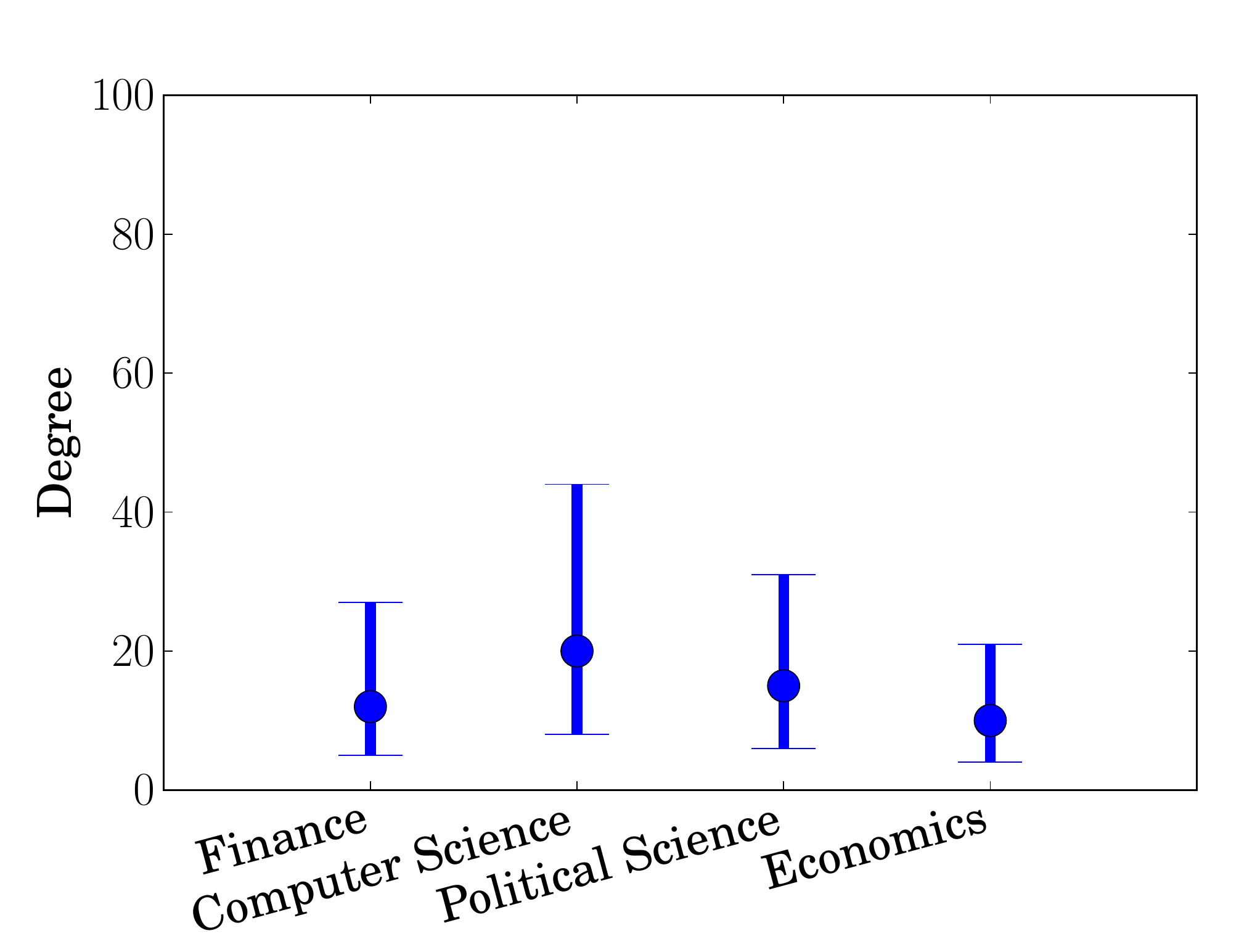}\label{soc-density-distribution-major}}
\tightcaption{Influence of attribute on social degree}
\label{fig:soc-density-distribution}
\vspace{-2mm}
\end{figure}

\eat{
\myparatight{Social Density} 
\ling{need edits.}
To study the influence of attributes on network
density, we compute the average social density for different attribute values
in Figure~\ref{fig:soc-density-distribution}.  For brevity, we only focus on
the employer and major attributes and show the top attribute values 
observed within each category. We see that the 
users in Google and users with computer science major are likely to have 
higher degrees showing a potential relationship between the attribute and 
social density. This could be an artifact of Google+ network, because 
the early adapters of Google+ network consist of large portion of 
Google employees and people in IT industry (both of which may have computer
science major).
}
%
 

%

\myparatight{Degree distribution}  For brevity, we only focus on the Employer
and Major attributes and show the result for the top attribute values
observed within each category. We plot the median, 25$^\mathit{th}$, and
75$^\mathit{th}$  percentile of the social outdegree of nodes that have these
attribute values in Figure~\ref{fig:soc-density-distribution}. We see that the
users with Employer=Google and Major=Computer Science are likely to have higher
degrees. We also computed the full degree distributions for these attribute
values and saw that they follow different lognormal distributions (not shown).
We speculate this could be a specific artifact of the Google+ network as  many
of the  early adopters  likely consist of Google employees and users in the
IT/CS industry.

 
%

\eat{
For instance, Fig.~\ref{fig:community-deg-dis} illustrates the lognormal distribution of soc-reci within computer science community. 
\begin{figure}[t]
\centering
\subfloat[]{\includegraphics[width=0.25\textwidth, height=1.5in]{community-computer-science-reci} \label{fig:community-deg-dis} }
\subfloat[]{\includegraphics[width=0.25\textwidth, height=1.5in]{community-google-reci} \label{fig: google-deg-dis} }
\tightcaption{\bf (a)Distribution of soc-out within the computer science community.   (b)Distribution of soc-out within the Google community.}
\end{figure}
}


\subsection{Validation via Subsampling}

One natural question is whether the attributes of 22\% of users we collected is a good representative of the entire attributes. To this end, we use subsampling method to validate our attribute-related results. We use attribute clustering coefficient distribution with respect to node degrees as an example, and observe similar results for other metrics. For each user with attributes, we remove her attributes with probability 0.5, from which we obtain a subsampled \san. Then we calculate the attribute clustering coefficient distributions for the original and this subsampled \sanplural. Figure~\ref{origin-sample} shows that the results of the original and subsampled \sanplural are almost identical. Given the assumption that whether a user fills in her attributes is a random and independent event, our results demonstrate that the attributes of 22\% of users is a representative sample of the attributes of all the users.    

\subsection{Summary of Key Observations and Implications}
In this section, we studied the attribute 
 structure of the Google+ \san and how such attribute structure impacts the social structure. Our key observations are: 
\begin{packeditemize}

\item While some attribute metrics mirror their social 
 counterparts (e.g., diameter), several show distributions and trends that are significantly different (e.g., clustering coefficient, attribute 
degree). These observations will guide us to design models for \san.

\item We confirm that attributes have interesting impact on the social
structure. e.g., nodes are likely to have higher reciprocity if they share common attributes. These findings have various implications. For instance, reciprocity predictor should incorporate node attributes.   

\item We also observe that some attribute types naturally have stronger
influence than others.  For example, users sharing the same employer have higher probability to be linked
compared to users sharing the same city. Data mining tasks such as link prediction and attribute inference should potentially benefit from these findings.

\end{packeditemize}

\section{A Generative Model For \san}
\label{sec:model}
  
From the previous sections, we have seen novel phenomena in the social and
attribute structure of the Google+ \san and that the attribute structure
impacts the social structure significantly.  A natural question is whether we
can create an accurate generative network model that can reproduce both
the social and attribute structures we observe. Such a generative model can help us understand the growing mechanism of \san, and allow other applications such as network extrapolation and sampling, network visualization and compression, and network anonymization~\cite{Leskovec10-JMLR}.

Prior work on generative models focus primarily on the social
structure~\cite{Barabasi99, Kleinberg99, VAZQUEZ03, Leskovec05, Leskovec08}.
Consequently, these approaches cannot model the attribute structure or their
impact on social structure.  To address this gap, we provide a new generative model taking into account the attribute structure from
first principles rather than overlaying it after-the-fact.  To this end, we
extend a prior generative model~\cite{Leskovec08}, using
attribute-augmented models for link generation and addition , which are key
building blocks for such generative models. As we will show, this provides more
realistic synthetic \san that closely matches the Google+ \san.  



\begin{figure}[t!]
\centering
\subfloat[\scriptsize{PAPA model}]{\includegraphics[width=0.25\textwidth, height=1.5in]{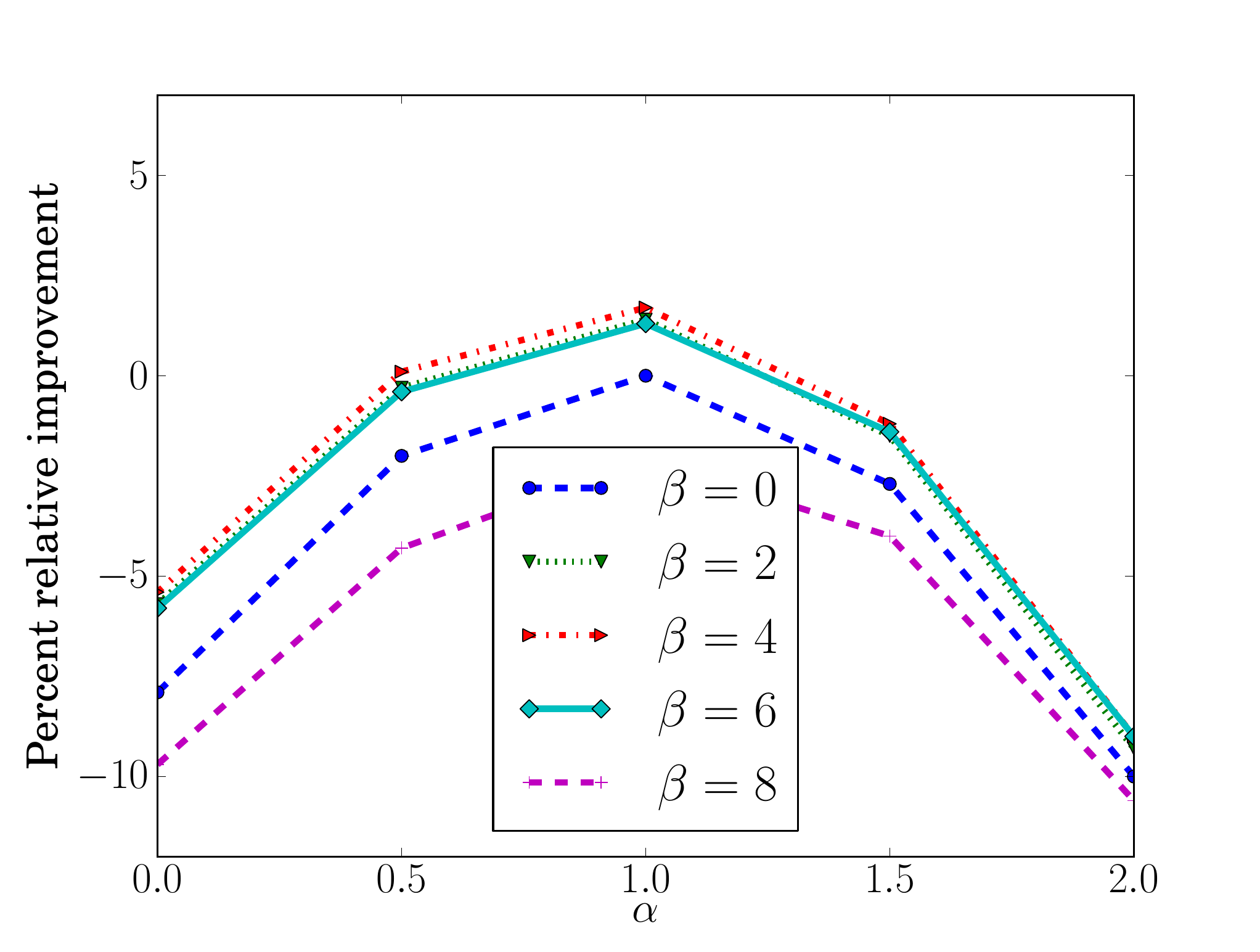}\label{fig:request-a}}
\subfloat[\scriptsize{LAPA model}]{\includegraphics[width=0.25\textwidth, height=1.5in]{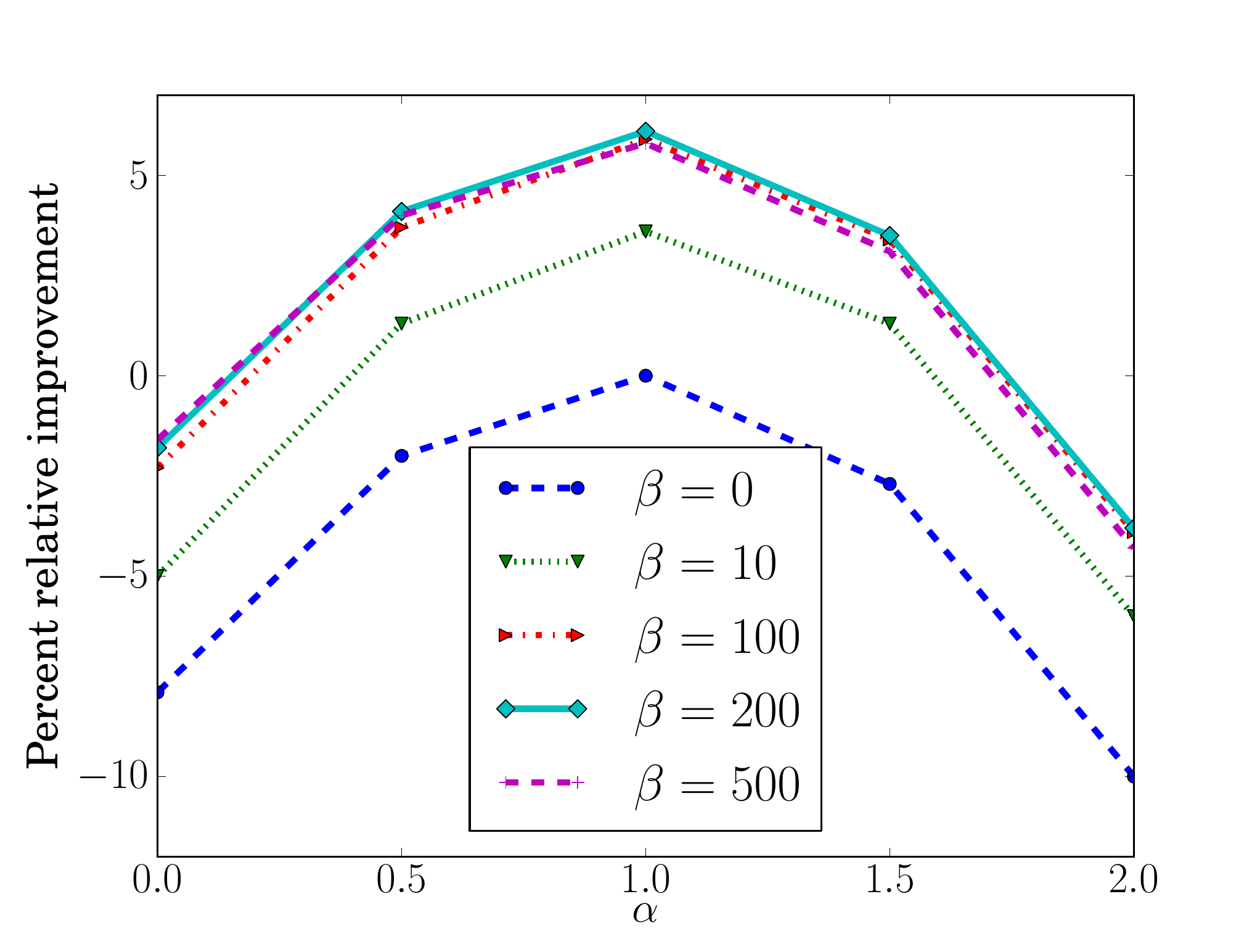}\label{fig:request-b}}
\tightcaption{Comparison between Power Attribute Preferential Attachment 
(PAPA) and Linear Attribute Preferential Attachment (LAPA) models.  
All result numbers are percentage of relative improvements
over the loglikelihood of the PA model, i.e., with $\alpha$ =1 and $\beta$ = 0.}
\label{fig:request-pa}
\end{figure}
 
\eat{
\begin{table}[t]\renewcommand{\arraystretch}{0.7}
\centering
\subfloat[\scriptsize{PAPA model}]{
\hspace{-10pt}
\setlength{\tabcolsep}{1.3pt}
\begin{tabular}{|c|c|c|c|c|c|} \hline 
{\small $\beta$\textbackslash$\alpha$} & {\small 0} & {\small 0.5} & {\small 1.0} & {\small 1.5} & {\small 2.0}\\ \hline
{\small 0} & {\small -7.9} & {\small -2.0} & {\small 0.0} & {\small -2.7} & {\small -10.0}\\ \hline 
{\small 2} & {\small -5.7} & {\small -0.3} & {\small 1.4} & {\small -1.5} & {\small -9.3}\\ \hline 
{\small 4} & {\small -5.4} & {\small 0.1} & {\small \bf{1.7}} & {\small -1.2} & {\small -9.0}\\ \hline 
{\small 6} & {\small -5.8} & {\small -0.4} & {\small 1.3} & {\small -1.4} & {\small -9.0}\\ \hline 
{\small 8} & {\small -9.7} & {\small -4.3} & {\small -2.1} & {\small -4.0} & {\small -10.6}\\ \hline 
{\small 10} & {\small -18.4} & {\small -12.9} & {\small -10.0} & {\small -10.6} & {\small -15.4}\\ \hline 
\end{tabular}
\label{table:request-a}
}  
\eat{
\subfloat[]{
\hspace{-10pt}
\addtolength{\tabcolsep}{-5pt}
\setlength{\tabcolsep}{1.3pt}
\begin{tabular}{|c|c|c|c|c|c|} \hline  
{\small $\beta$\textbackslash$\alpha$} & {\small 0} & {\small 0.5} & {\small 1.0} & {\small 1.5} & {\small 2.0}\\ \hline
{\small 0} & {\small -7.9} & {\small -2.0} & {\small 0.0} & {\small -2.7} & {\small -10.0}\\ \hline 
{\small 2} & {\small -6.3} & {\small -1.3} & {\small 0.2} & {\small -2.6} & {\small -10.2}\\ \hline 
{\small 4} & {\small -6.0} & {\small -1.1} & {\small 0.4} & {\small -2.5} & {\small -10.2}\\ \hline 
{\small 6} & {\small -6.3} & {\small -1.1} & {\small \bf{0.5}} & {\small -2.4} & {\small -10.1}\\ \hline 
{\small 8} & {\small -9.9} & {\small -3.8} & {\small -1.0} & {\small -2.9} & {\small -10.1}\\ \hline 
{\small 10} & {\small -17.6} & {\small -10.5} & {\small -6.3} & {\small -6.2} & {\small -11.5}\\ \hline 
\end{tabular}
\label{table:request-b}
} 
}
\subfloat[LAPA model]{
\setlength{\tabcolsep}{2.6pt}
\begin{tabular}{|c|c|c|c|c|c|} \hline 
{\small $\beta$\textbackslash$\alpha$} & {\small 0} & {\small 0.5} & {\small 1.0} & {\small 1.5} & {\small 2.0}\\ \hline
{\small 0} & {\small -7.9} & {\small -2.0} & {\small 0.0} & {\small -2.7} & {\small -10.0}\\ \hline 
{\small 10} & {\small -5.0} & {\small 1.3} & {\small 3.6} & {\small 1.3} & {\small -6.0}\\ \hline 
{\small 100} & {\small -2.3} & {\small 3.7} & {\small 5.9} & {\small 3.4} & {\small -3.9}\\ \hline 
{\small 200} & {\small -1.8} & {\small 4.1} & {\small \bf{6.1}} & {\small 3.5} & {\small -3.8}\\ \hline 
{\small 500} & {\small -1.6} & {\small 4.0} & {\small 5.8} & {\small 3.1} & {\small -4.3}\\ \hline 
{\small 1000} & {\small -1.9} & {\small 3.5} & {\small 5.1} & {\small 2.2} & {\small -5.1}\\ \hline 
\end{tabular}
\label{table:request-c}
}
\eat{ 
\subfloat[]{
\hspace{-10pt}
\centering 
\addtolength{\tabcolsep}{-5pt}
\setlength{\tabcolsep}{2.0pt}
\begin{tabular}{|c|c|c|c|c|c|} \hline 
{\small $\beta$\textbackslash$\alpha$} & {\small 0} & {\small 0.5} & {\small 1.0} & {\small 1.5} & {\small 2.0}\\ \hline
{\small 0} & {\small -7.9} & {\small -2.0} & {\small 0.0} & {\small -2.7} & {\small -10.0}\\ \hline 
{\small 10} & {\small -4.3} & {\small 0.0} & {\small 1.1} & {\small -2.0} & {\small -9.6}\\ \hline 
{\small 100} & {\small -1.6} & {\small 2.7} & {\small 3.2} & {\small -0.5} & {\small -8.7}\\ \hline 
{\small 200} & {\small -1.1} & {\small 3.5} & {\small 4.1} & {\small 0.2} & {\small -8.2}\\ \hline 
{\small 500} & {\small -1.0} & {\small 4.2} & {\small 5.1} & {\small 1.2} & {\small -7.4}\\ \hline 
{\small 1000} & {\small -1.3} & {\small 4.3} & {\small \bf{5.8}} & {\small 2.0} & {\small -6.7}\\ \hline 
\end{tabular}
\label{table:request-d}
} 
}
\captionsetup{font=small,labelfont=bf}
\caption{\bf Comparisons between Power Attribute Preferential Attachment 
(PAPA) and Linear Attribute Preferential Attachment (LAPA) models.  
All result numbers are percentage of relative improvements
over the loglikelihood of the PA model, i.e., with $\alpha$ =1 and $\beta$ = 0.
}
\label{table:request-pa}
\end{table} 
}

\subsection{Building Block 1: \\Attribute-Augmented Preferential Attachment}
 Leskovec et al.  
 showed that the Preferential Attachment (PA)~\cite{Barabasi99} is 
 a suitable choice for creating edges~\cite{Leskovec08}. 
 The key idea in PA is that a new node $u$ is likely to connect 
to an existing node $v$ with a probability proportional to $v$'s degree.  
 As we saw earlier, users who share attributes are also 
 more likely to be connected. Thus, we consider two ways to augment the PA model: 
\begin{itemize}
\item \emph{Power Attribute Preferential Attachment (PAPA):} \\  $f(u,v)$ $\propto$ $d_i(v)^\alpha (1+a(u,v)^\beta)$
\item \emph{Linear Attribute Preferential Attachment (LAPA):}\\ $f(u,v)$ $\propto$ $d_i(v)^\alpha (1+\beta \cdot a(u,v))$
\end{itemize}

Here, $f(u,v)$ is the probability with which social node $u$ adds a link to
social node $v$, $d_i(v)$ is the indegree of $v$ and  $a(u,v)$ is the number of
common attributes that social nodes $u$ and $v$ share.\footnote{In a more
general setting, we can also weight attribute types differently; e.g., Employer
 is stronger than City.}  Notice
that when $\alpha$ = $\beta$ = 0, both reduce to a uniform distribution (i.e.,
$v$ is sampled uniformly at random) and when $\alpha$=1,$\beta$=0 both  reduce
to the PA model.

%

The relative improvement of a model with parameter $\alpha,\beta$ over 
the PA model is defined as $\frac{l_{PA} - l(\alpha, \beta)}{l_{PA}}$, 
where $l$ denote the log-likelihood of the model with respect to the  
empirically observed Google+ \san. 
Figure~\ref{fig:request-pa} shows the relative improvements  of these models
over the PA model for varying values of $\alpha,\beta$. First,
LAPA models perform better than PAPA models, which indicates that attribute
likely influence friend requests in a linear way.  Second, the PA model
($\alpha$ =1, $\beta$ = 0) is 7.9\% better than a uniform random model
($\alpha$ =0, $\beta$ = 0).  A LAPA model with $\alpha$ =1 and $\beta$ = 200
 achieves a further 6.1\% improvement over the PA model.  Third, $\alpha=1$
achieves the best loglikelihood for any given $\beta$, which indicates that
 social degree has a linear effect on friend requests.  In summary, we conclude
that there is a combined linear effect of both social degree and attributes. 

\eat{
\ling{We no longer talk about born degree.}
All previous dynamic models assume the born friends are chosen by either a PA model~\cite{Barabasi99, Kleinberg99, Kumar00, Leskovec08, Zheleva09-evo} or a uniformly random model~\cite{VAZQUEZ03, Sala10, Leskovec05}. Our observations inform us that these models should be replaced by LAPA model.
}


\eat{
\begin{figure}[t]
\centering
\includegraphics[width=0.25\textwidth, height=1.5in]{triangle-closing-model}
\captionsetup{font=small,labelfont=bf}         
\caption{\bf Percent relative improvements over the loglikelihood of the baseline model as a function of the attribute weights. \ling{The weight
has very small impact on the performance. I would suggest dropping
it and reduce the figure into two numbers.}}
\label{fig:triangle-closing-model}
\end{figure}
}
\subsection{Building Block 2: \\Attribute-Augmented Triangle-Closings}

 Triangle closing, where a node $u$  selects a node $v$ from its 2-hop
neighbors and adds an edge,  is an essential part of many generative network
models~\cite{Leskovec08, Zheleva09-evo, Leskovec05, VAZQUEZ03, Sala10,
Toivonen09}.  We explore if node attributes can improve triangle closing.

In the context of \san, we can consider two types of triangle-closing: one is
closing a triangle with no attribute node involved (e.g., $u_4 \rightarrow u_2$
in Figure~\ref{fig:san}), and the other is closing a triangle which includes an
attribute node (e.g., $u_1 \rightarrow u_2$ in Figure~\ref{fig:san}).
Following prior work, we refer them as \emph{triadic} and \emph{focal} closure
respectively~\cite{Kossinets06}.  
 In the friend requests we observe in Google+,  84\% percent are triadic
(common friend), 18\% percent are focal (common attribute), and 15\% percent
are cases where the nodes share both common friends and common attributes
(e.g., $u_6 \rightarrow u_5$ in Figure~\ref{fig:san}).

  
 This suggests the importance of incorporating attributes in the triangle 
 closure. To this end, we consider three models:
\begin{packeditemize}
\item \emph{Baseline}: Select a social neighbor $v$ within a 2-hop radius uniformly at random.
\item \emph{Random-Random (RR)}: Select a social neighbor $w\in \Gamma_s(u)$ uniformly at random, and then select a social neighbor $v\in \Gamma_s(w)$ uniformly at random which is shown to have very 
 good performance in previous work~\cite{Leskovec08}.
\item \emph{Random-Random-\san (RR-\san)}: select a neighbor $w\in \Gamma_s(u) \cup  \Gamma_a(u)$ uniformly at random, and then select a social neighbor $v \in \Gamma_s(w)$ uniformly at random.\footnote{We also tried 
 a weighted model where we select neighbors proportional to link weights. For brevity, 
 we do not show this because it performs similarly.}
\end{packeditemize}
We compare these models using friend requests that are triadic closures, 
focal closures, or both. Our experimental results confirm that
RR model performs 14\% better than the Baseline model~\cite{Leskovec08}, 
and our RR-\san model performs 36\% better than RR model. 
 This confirms that  attributes  play a significant  role in the 
triangle-closing phenomenon as well and has natural implications 
 for applications such as link prediction and friend recommendation.



\begin{algorithm}[t!]
\small
\SetAlgoLined
\DontPrintSemicolon
T, simulated time steps\;
\emph{Initialization}.\;
\For{$1\leq t \leq T$}{
\emph{Social node arrival}. Sample a set of new social nodes $V_{t, new}$.\;
\For{$v_{new}\in V_{t, new}$} {
\emph{Attribute degree sampling.} Sample the number of attributes $n_a(v_{new})$ for $v_{new}$ from a lognormal distribution. \;
\For{$1\leq i \leq n_a(v_{new})$}{
\emph{Attribute linking}. \;
}

\emph{First outgoing linking}.

\emph{lifetime sampling}. \;
\emph{sleep time sampling}.\;
}
Collect woken social nodes $V_{t, woken}$. \;
\For{$v_{woken}\in V_{t, woken}$} {
\emph{Outgoing linking}. \;
\emph{sleep time sampling}.\;
}
}
\caption{Social-Attribute Network Model}
\label{alg:model}
\vspace{-1mm}
\end{algorithm}

\subsection{Our Generative Model for \san}
Our stochastic process models several key aspects of \san evolution:
node joining, how nodes issue outgoing links and receive incoming links, and
how they link to attribute nodes.  The key differences from prior work~\cite{Leskovec08}  are the
two building blocks we described earlier:  Linear Attribute Preferential
Attachment (LAPA) and Random-Random-\san (RR-\san) triangle-closing. 

%
 Here, nodes arrive at some pre-determined rate. On arrival, each node picks an
initial set of attributes and  social neighbors (using the LAPA model). After
joining the network, each node subsequently ``sleeps'' for some time, wakes up,
and adds new links based on the RR-\san model.  
We describe the model formally in Algorithm~\ref{alg:model}  and discuss 
each step next. From the analysis below, we find that the key step for 
generating lognormal social \emph{outdegree} distribution is to make 
the lifetime of nodes follow a \emph{truncated normal distribution}.

\begin{trivlist}
\item \textbf{Initialization:} The \san is initialized with a few social and
attribute nodes and links. We observed that the starting point has no
detectable influence when the number of initialization nodes is small compared
to the overall network. We currently use a complete social-attribute network with 5 social nodes and 5 attribute nodes.

\item \textbf{Social node arrival:}  Social nodes arrive as predicted by a
node arrival function $N(t)$, which could be estimated from real social
networks. In our simulations, we simply let $N(t)=1$  modeling each node 
 arrival as a discrete time step.

\item \textbf{Attribute degree:} Each node picks some number of
attributes sampled from a lognormal distribution with mean $\mu_{a}$ and
variance $\sigma_{a}^2$.

\item \textbf{Attribute linking:} Each new social node $v_\mathit{new}$ with
$n_a(v_\mathit{new})$ attributes, we connect it to $n_a(v_\mathit{new})$ attribute nodes with
the stochastic process defined as follows: for each attribute, with probability
$p$, a new attribute node $a$ is generated; otherwise an existing attribute
node $a$ is chosen with probability proportional to its social degree.

\item \textbf{First outgoing links:} Each new node issues an outgoing link to a
social node according to the LAPA model.

\item \textbf{Lifetime sampling:} The lifetime $l$ of $v_\mathit{new}$ is
sampled from a truncated normal distribution, i.e., $p(l)
\propto \text{exp}(-\frac{(l-\mu_l)^2}{2\sigma_l^2})$ for $l \geq 0$. 
(Prior models use an
exponentially distributed lifetime value~\cite{Leskovec08, Zheleva09-evo}.)

\item \textbf{Sleep time sampling:} Sleep time $s$ of any node $v$ with
outdegree $d_o$ can be sampled from any distribution with mean $m_s/d_o$. Our
model only depends on mean sleep time. The intuition of making mean sleep time
reversely proportional to outdegree is that a node with larger outdegree has
higher tendency to issue outgoing links. (Prior models assume a power-law with cutoff distributed lifetime value~\cite{Leskovec08, Zheleva09-evo}.)


\item \textbf{Outgoing linking}. Each woken social node $v_\mathit{woken}$ issues a new outgoing link
according to our RR-SAN triangle-closing model.

\end{trivlist}

\eat
{
\textbf{Discussion.}
\ling{Neil, please add and revise any idea here.}
In a high level, our model differs from Zheleva's 
model~\cite{Zheleva09-evo} in following way. Our model concerns about 
static attributes that come with users when joining the network and 
changing slowly over time. This is the reason why our model lets nodes 
create and acquire attributes when joining the network, and no 
longer do so afterward. On the contrary, Zheleva's model concerns about 
dynamic attributes acquired after joining the network that may change 
quickly over time. This is the reason why their model lets nodes keep 
creating and acquiring new attributes after joining the network. 
While our model mainly explore the impact of 
static attributes on social structure, Zheleva's model mainly explore the 
impact of social structure on dynamic attributes. It is an interesting future
work to combine both models together to study the mutual impacts among 
static attributes, social structure and dynamic attributes.
}

\subsection{Theoretical Analysis}

By design, the  attribute degree distribution of social nodes follows a
lognormal distribution. Next, we show via analysis that the outdegree  of social nodes  and the social degree
of attribute nodes follow a lognormal and
power-law distribution respectively. 
For brevity, we provide a high-level sketch of the proofs.

Let $\phi(x)$ and $Φ \Phi(x)$ denote the probability density function and
cumulative density function of standard normal distribution. Let
$\gamma_l =-\frac{\mu_l}{\sigma_l}$,
$g(\gamma)=\frac{\phi(\gamma)}{1-\Phi(\gamma)}$ and $\delta(\gamma) =
g(\gamma)(g(\gamma)-\gamma)$.


\begin{theorem}
\label{theorem:cutoff}
If the sleep time is sampled from some distribution with mean $m_s/d_o$, then the social out degrees of \sanplural generated by our model follow a lognormal distribution with mean $(\mu_l + \sigma_l g(\gamma_l))/m_s$ and variance $\sigma_l^2 (1-\delta (\gamma_l))/m_s^2$.
\end{theorem}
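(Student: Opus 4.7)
My plan is to reduce the outdegree to a simple deterministic function of the node's lifetime, and then push the truncated-normal distribution of the lifetime through that function to obtain a (log)normal.

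First I would fix a node $v$ and write its outdegree as the renewal count
\[
d_o(v) \;=\; \max\Bigl\{\,k \;:\; \sum_{i=1}^{k} s_i \le l\,\Bigr\},
\]
where $l$ is the sampled lifetime and $s_i$ is the $i$th sleep time. By the model's specification, conditional on the current outdegree being $i-1$, the mean of $s_i$ is $m_s/i$. A law-of-large-numbers / concentration argument (the $s_i$ are independent across $i$ once we condition on the outdegree trajectory, and their means decay like $m_s/i$) gives
\[
\sum_{i=1}^{k} s_i \;=\; m_s \sum_{i=1}^{k} \frac{1}{i} + o\!\left(\sum_{i=1}^k \tfrac{1}{i}\right) \;=\; m_s \ln k + O(1),
\]
so setting this equal to $l$ and solving yields $d_o(v) \approx \exp(l/m_s)$, equivalently $\ln d_o(v) \approx l/m_s$.

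Next I would transfer the distribution. The lifetime $l$ is sampled from a normal $\mathcal{N}(\mu_l,\sigma_l^2)$ truncated to $[0,\infty)$, so its moments are the standard truncated-normal moments
\[
\mathbb{E}[l] = \mu_l + \sigma_l\, g(\gamma_l), \qquad \mathrm{Var}(l) = \sigma_l^2\bigl(1-\delta(\gamma_l)\bigr),
\]
with $\gamma_l=-\mu_l/\sigma_l$, $g(\gamma)=\phi(\gamma)/(1-\Phi(\gamma))$, and $\delta(\gamma)=g(\gamma)(g(\gamma)-\gamma)$. Since $\ln d_o = l/m_s$, the variable $\ln d_o$ inherits a truncated normal law with mean $(\mu_l+\sigma_l g(\gamma_l))/m_s$ and variance $\sigma_l^2(1-\delta(\gamma_l))/m_s^2$. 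Exponentiating then gives that $d_o$ is lognormally distributed with exactly the parameters claimed in the theorem.

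The main obstacle, in my view, is the two layers of approximation glued together in the first step: the renewal equation is only an equality up to rounding by one, and $\sum_{i\le k} 1/i$ is only asymptotic to $\ln k$. To make the argument rigorous I would need to show that both the $O(1)$ error in $m_s\ln d_o \approx l$ and the fluctuation $\sum s_i - m_s H_{d_o}$ vanish (in the appropriate scale) relative to $\sigma_l/m_s$, so that the resulting distribution of $\ln d_o$ is indistinguishable from the truncated normal at the scale of interest. A secondary subtlety is that the truncation of $l$ at $0$ means $\ln d_o$ is strictly speaking not normal but truncated normal; the theorem is naturally read as identifying the first two moments (mean and variance) rather than strict normality, and I would state the conclusion in that form.
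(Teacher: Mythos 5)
Your proposal follows essentially the same route as the paper's proof: bound the outdegree by the renewal condition $\sum_{i\le k} s_i \le l$, replace the sleep times by their means (the paper calls this a mean-field approximation), use the harmonic-series asymptotic $\sum_{i\le k} 1/i \approx \ln k$ to get $\ln D_o \approx l/m_s$, and then push the truncated-normal moments of $l$ through this linear map. Your explicit flagging of the two approximation layers and of the truncated-versus-exact normality of $\ln d_o$ is a more careful reading of the same argument, not a different one.
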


\begin{proof}
For any social node $v$, assume its final outdegree is $D_o$, then we have
$$\sum_{d_o=1}^D s(d_o) \leq l,$$
where $s(d_o)$ is the random sleep time whose mean is $m_s/d_o$. 
Thus, with mean-field approximation, we obtain
$$m_s\sum_{d_o=1}^D \frac{1}{d_o} \leq l.$$
Moreover, according to Euler's asymptotic analysis on harmonic series, we have 
$$\sum_{d_o=1}^D \frac{1}{ d_o} \approx \text{ln}D_o.$$
That is, $\text{ln} D_o \approx l /m_s$.
Lifetime $l$ is also a normal distribution truncated for $l\geq 0$, thus having mean $\mu_l + \sigma_l g(\gamma_l)$ and variance $\sigma_l^2(1-\delta (\gamma_l))$.
Thus, $\text{ln} D_o$ follows a truncated normal distribution with mean $\mu_o = (\mu_l + \sigma_l g(\gamma_l))/m_s$ and variance $\sigma_o^2 = \sigma_l^2 (1-\delta (\gamma_l))/m_s^2$. So $D_o$ follows a lognormal distribution with mean $\mu_o$ and variance $\sigma_o^2$.
\end{proof}

Next, we derive the distribution of social degree of attribute nodes using
mean-field rate equations~\cite{Barabasi99Phy}.

\begin{theorem}
\label{theorem:attri}
The social degrees of attribute nodes in the \sanplural
 generated by our model follow a power-law 
distribution with exponent $\frac{2-p}{1-p}$.
\end{theorem}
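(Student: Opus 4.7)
The plan is to apply the continuous-time mean-field rate-equation technique of Barab\'asi--Albert, adapted to the attribute-linking rule in Algorithm~\ref{alg:model}. Let me index time by attribute-linking events (each arriving social node triggers, in expectation, $\bar n_a$ such events, where $\bar n_a$ is the mean of the lognormal attribute degree). At each event, with probability $p$ a fresh attribute node is created (and connected once to the arriving social node), and with probability $1-p$ the event reinforces an existing attribute node $i$ with probability $k_i/\sum_j k_j$, where $k_j$ denotes the social degree of attribute node $j$. Writing $N_a(\tau)$ for the number of attribute nodes and $E_a(\tau)$ for the sum of their social degrees after $\tau$ events, one immediately reads off $dN_a/d\tau = p$ and $dE_a/d\tau = 1$, hence $N_a(\tau) \approx p\,\tau$ and $E_a(\tau) \approx \tau$ for large $\tau$.

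Next I would write the mean-field rate equation for the expected degree of a single attribute node $i$ born at event time $\tau_i$:
\begin{equation*}
\frac{\partial k_i}{\partial \tau} \;=\; (1-p)\,\frac{k_i}{E_a(\tau)} \;\approx\; \frac{1-p}{\tau}\,k_i,
\end{equation*}
with the initial condition $k_i(\tau_i) = 1$ (the node is created together with its first link to the arriving social node). Integrating gives $k_i(\tau) = (\tau/\tau_i)^{1-p}$. Because new attribute nodes are generated at the constant rate $p$, the birth times $\tau_i$ of attribute nodes alive at time $\tau$ are approximately uniform on $[0,\tau]$. Inverting the trajectory then yields
\begin{equation*}
\Pr\!\bigl[k_i(\tau) \ge k\bigr] \;=\; \Pr\!\bigl[\tau_i \le \tau\,k^{-1/(1-p)}\bigr] \;=\; k^{-1/(1-p)},
\end{equation*}
and differentiating produces a power-law density $p(k) \propto k^{-\gamma}$ with $\gamma = 1 + \frac{1}{1-p} = \frac{2-p}{1-p}$, matching the claimed exponent.

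The main obstacle I anticipate is not the algebra but the justification of the mean-field approximation in the presence of two extra sources of randomness not present in the textbook Barab\'asi--Albert setting: (i) the number of attribute links issued by each arriving social node is itself lognormally distributed rather than deterministic, so one needs that only its mean $\bar n_a$ enters in the large-$\tau$ limit (a standard renewal/Wald-type argument), and (ii) the sum of degrees $E_a(\tau)$ is a random quantity whose concentration around $\tau$ must be invoked before replacing it in the denominator. A further minor subtlety is that the creation of a new attribute node also contributes one unit to $E_a$, which is why the per-event growth of $E_a$ is $1$ rather than $1-p$; this must be tracked carefully so that the exponent does not shift. Once those concentration arguments are dispatched (or assumed at the level of a sketch, as the theorem statement suggests), the deterministic ODE calculation above delivers the exponent $\frac{2-p}{1-p}$ directly.
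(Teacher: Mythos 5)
Your proposal is correct and follows essentially the same route as the paper's proof: a Barab\'asi--Albert mean-field rate equation $\frac{dk_i}{d\tau}=(1-p)k_i/E_a(\tau)$ with $E_a(\tau)\approx\tau$, integrated from $k_i(\tau_i)=1$ to get $k_i=(\tau/\tau_i)^{1-p}$, followed by the uniform-birth-time inversion and differentiation to obtain the exponent $\frac{2-p}{1-p}$. The only differences are cosmetic (the paper keeps the initial link count $m_0$ explicit and sends $t\to\infty$ at the end, and does not discuss the concentration caveats you flag), so no further comparison is needed.
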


\begin{proof}

Without loss of generality, we assume one attribute link joins the \san at each
discrete time step. Let $D_i$ denote the social degree of the attribute node
$i$ that joins the network at time $t_i$. According to the stochastic process
in our algorithm, we have
\begin{displaymath}
\frac{dD_i}{dt}=\frac{(1-p)D_i}{\sum_i D_i}=\frac{(1-p)D_i}{t + m_0}
\end{displaymath} 
,where $m_0$ is the initial number of attribute links. Solving this ordinary differential equation with initial condition $D_i = 1$ at $t=t_i$ gives us
$$D_i = (\frac{t+m_0}{t_i+m_0})^{(1-p)}.$$
So the probability of $D_i < D$ is
$$Pr(D_i < D) = 1-Pr( t_i+m_0 \leq (t+m_0) D^{-\frac{1}{1-p}}).$$
According to our model, $Pr(t_i)$ has a uniform distribution over 
the set $\{1,2,\cdots, t\}$. Thus we obtain
$$Pr(D_i < D) = 1-\frac{(t+m_0) D^{-\frac{1}{1-p}} - m_0}{t}.$$
Then the distribution of $D_i$ can be calculated as
$$Pr(D) = \frac{dPr(D_i \leq D)}{d D} =\frac{t+m_0}{t(1-p)}D^{-\frac{2-p}{1-p}}.$$
As $t \rightarrow \infty$, we obtain $Pr(D) \propto D^{-\frac{2-p}{1-p}}$. So the social degrees of attribute nodes follow a power-law distribution with exponent $\frac{2-p}{1-p}$.
\end{proof}

Mitzenmacher~\cite{Mitzenmacher03} did a comprehensive study on generative models (e.g., PA, multiplicative models, random monkey) for power-law and lognormal distributions. In this work, we have proposed two new generative models.

\begin{figure*}[t]
\vspace{-0.5cm}
\centering
\subfloat[\scriptsize{Social outdegree}]{\includegraphics[width=0.25\textwidth, height=1.5in]{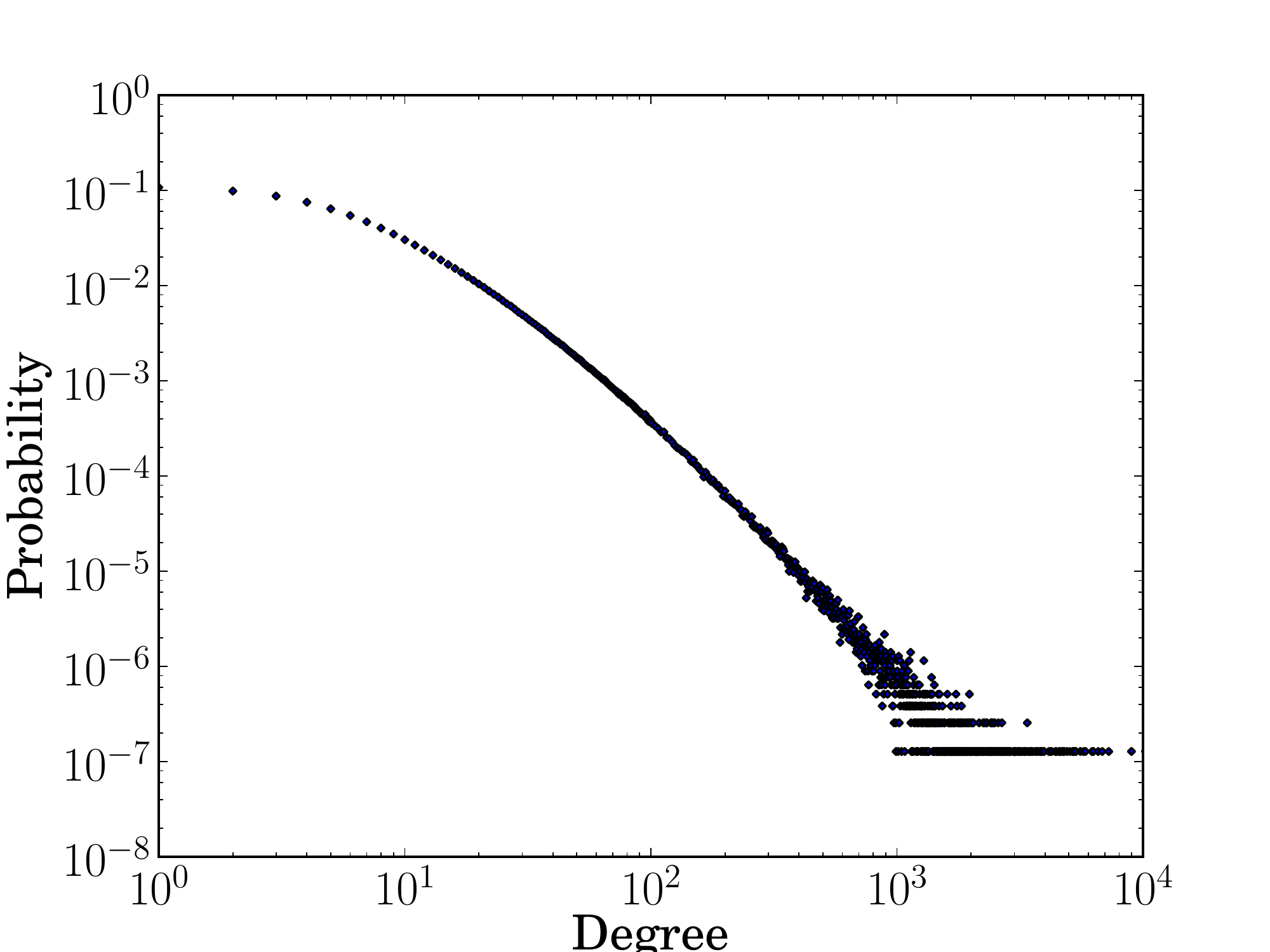}\label{gen-soc-out}}
\subfloat[\scriptsize{Social indegree}]{\includegraphics[width=0.25\textwidth, height=1.5in]{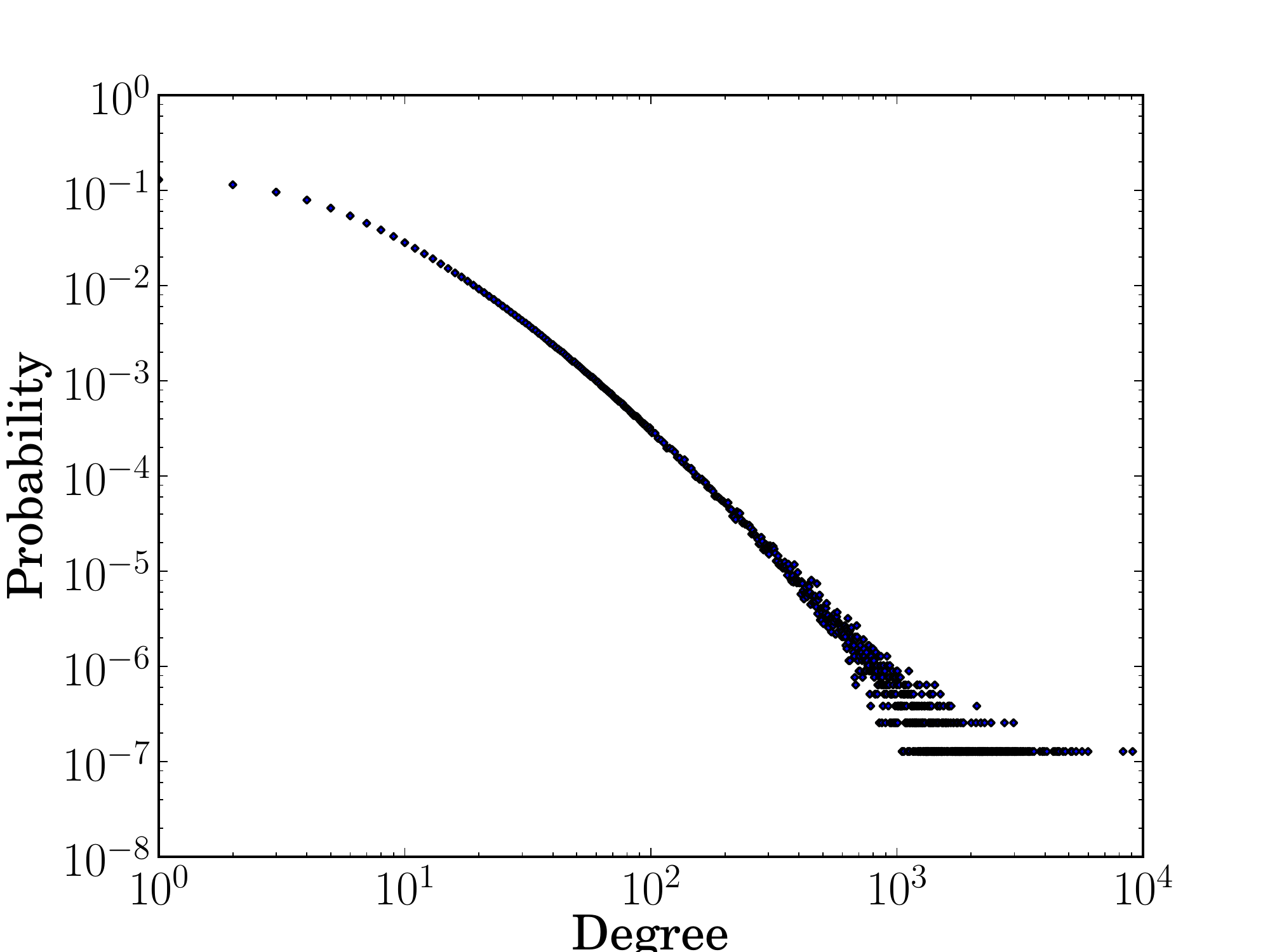}\label{gen-soc-in}}
\subfloat[\scriptsize{Attribute degree of social nodes}]{\includegraphics[width=0.25\textwidth, height=1.5in]{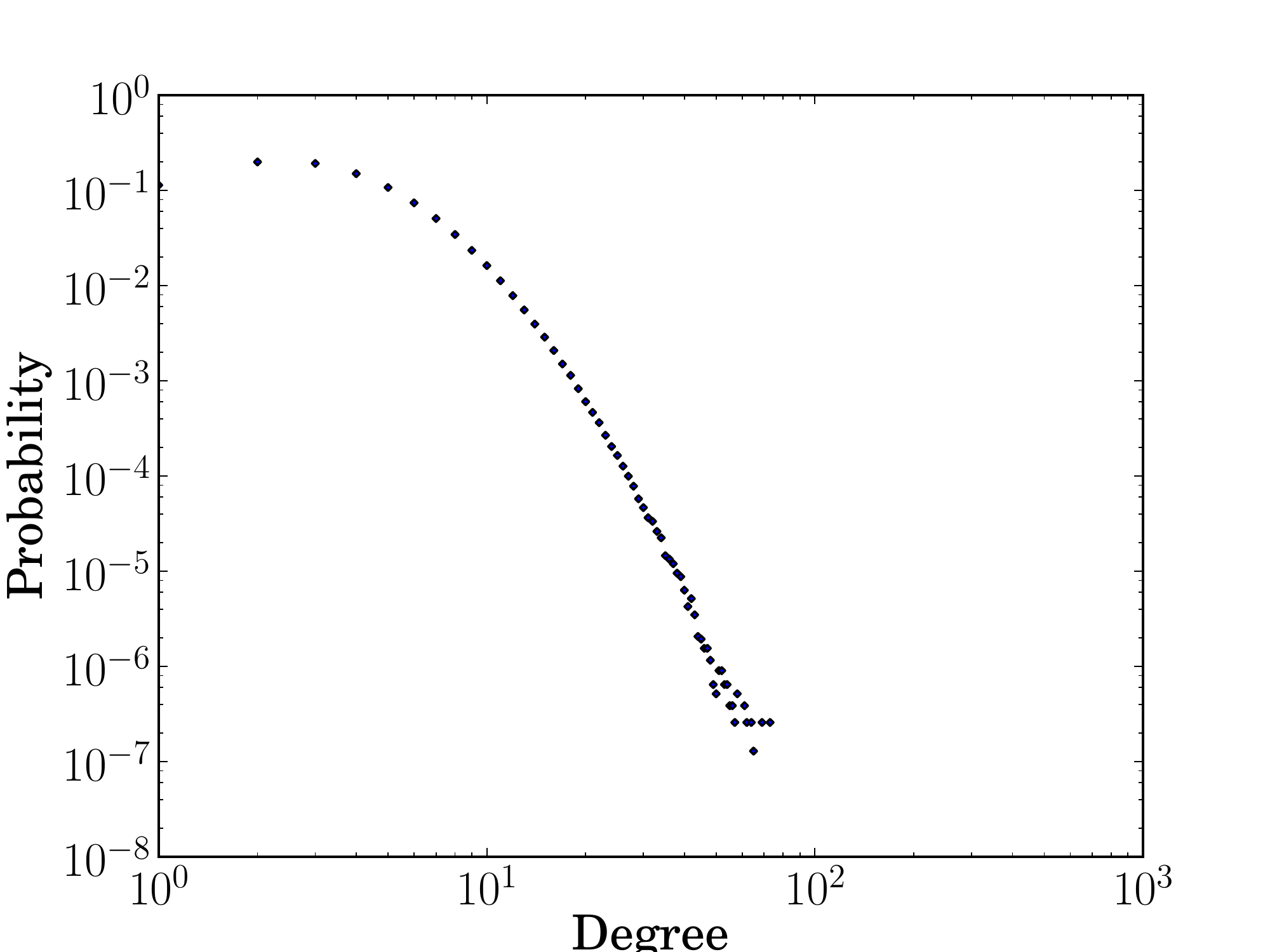}\label{gen-soc-attri}}
\subfloat[\scriptsize{Social degree of attribute nodes}]{\includegraphics[width=0.25\textwidth, height=1.5in]{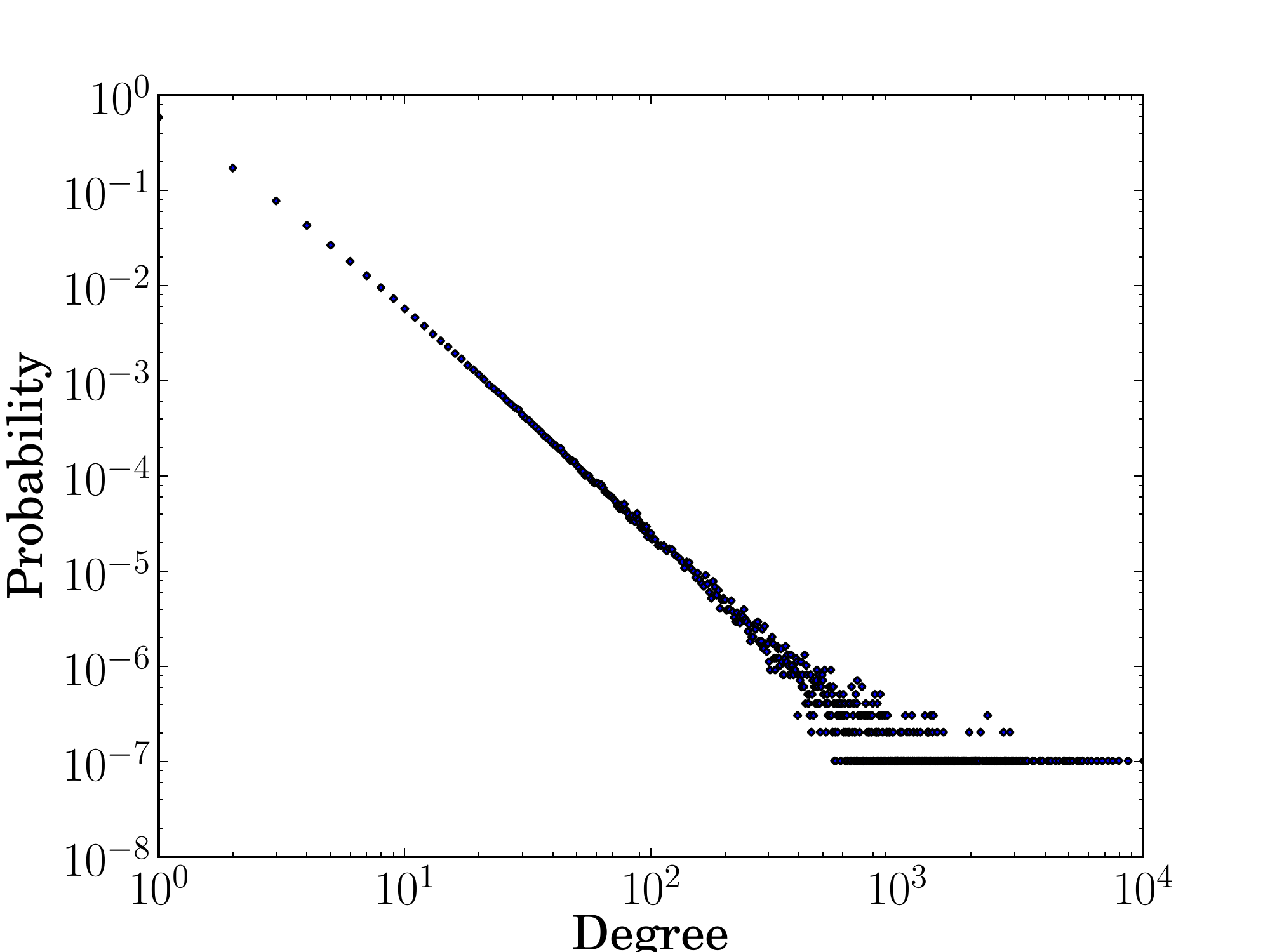}\label{gen-attri-soc}} 

\eat{
\subfloat[\scriptsize{Social outdegree}]{\includegraphics[width=0.25\textwidth, height=1.5in]{gen-baseline-soc-soc-out}\label{gen-baseline-soc-out}}
\subfloat[\scriptsize{Social indegree}]{\includegraphics[width=0.25\textwidth, height=1.5in]{gen-baseline-soc-soc-in}\label{gen-baseline-soc-in}}
\subfloat[\scriptsize{Attribute degree of social nodes}]{\includegraphics[width=0.25\textwidth, height=1.5in]{gen-baseline-attri-soc}\label{gen-baseline-attri-soc}}
\subfloat[\scriptsize{Social degree of attribute nodes}]{\includegraphics[width=0.25\textwidth, height=1.5in]{gen-baseline-soc-attri}\label{gen-baseline-soc-attri}}
}
\subfloat[\scriptsize{Social outdegree}]{\includegraphics[width=0.25\textwidth, height=1.5in]{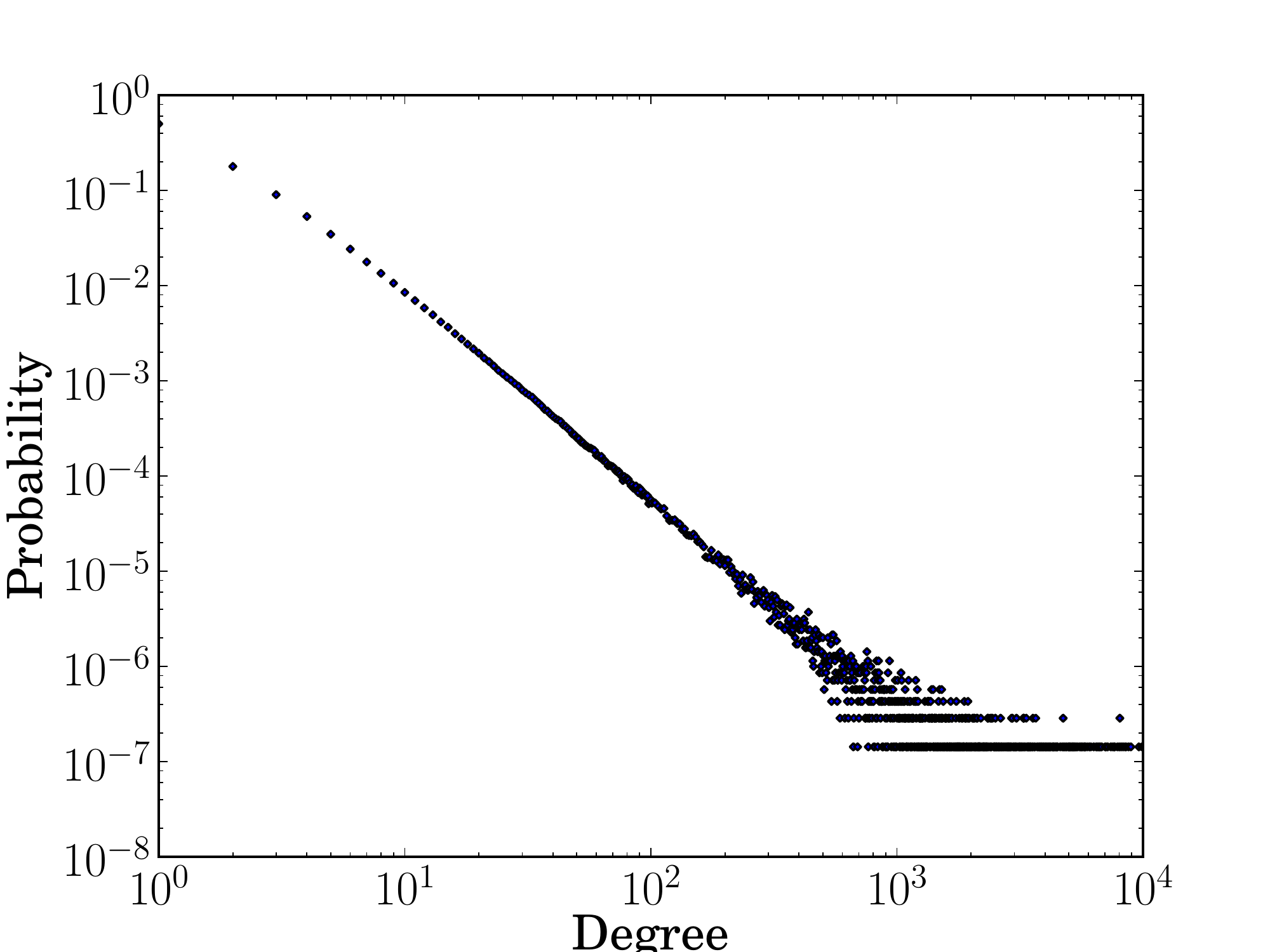}\label{gen-zheleva-soc-out}}
\subfloat[\scriptsize{Social indegree}]{\includegraphics[width=0.25\textwidth, height=1.5in]{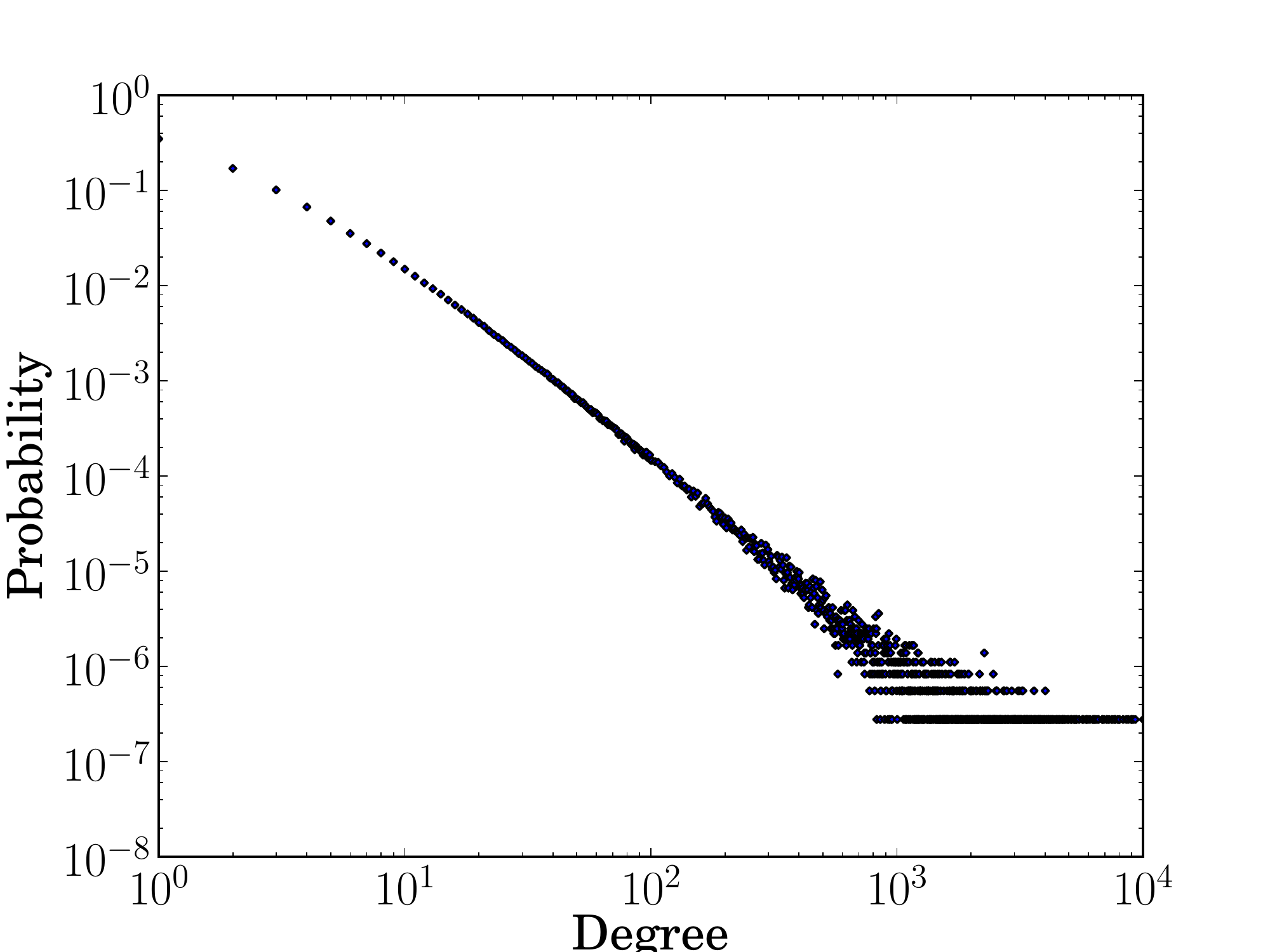}\label{gen-zheleva-soc-in}}
\subfloat[\scriptsize{Attribute degree of social nodes}]{\includegraphics[width=0.25\textwidth, height=1.5in]{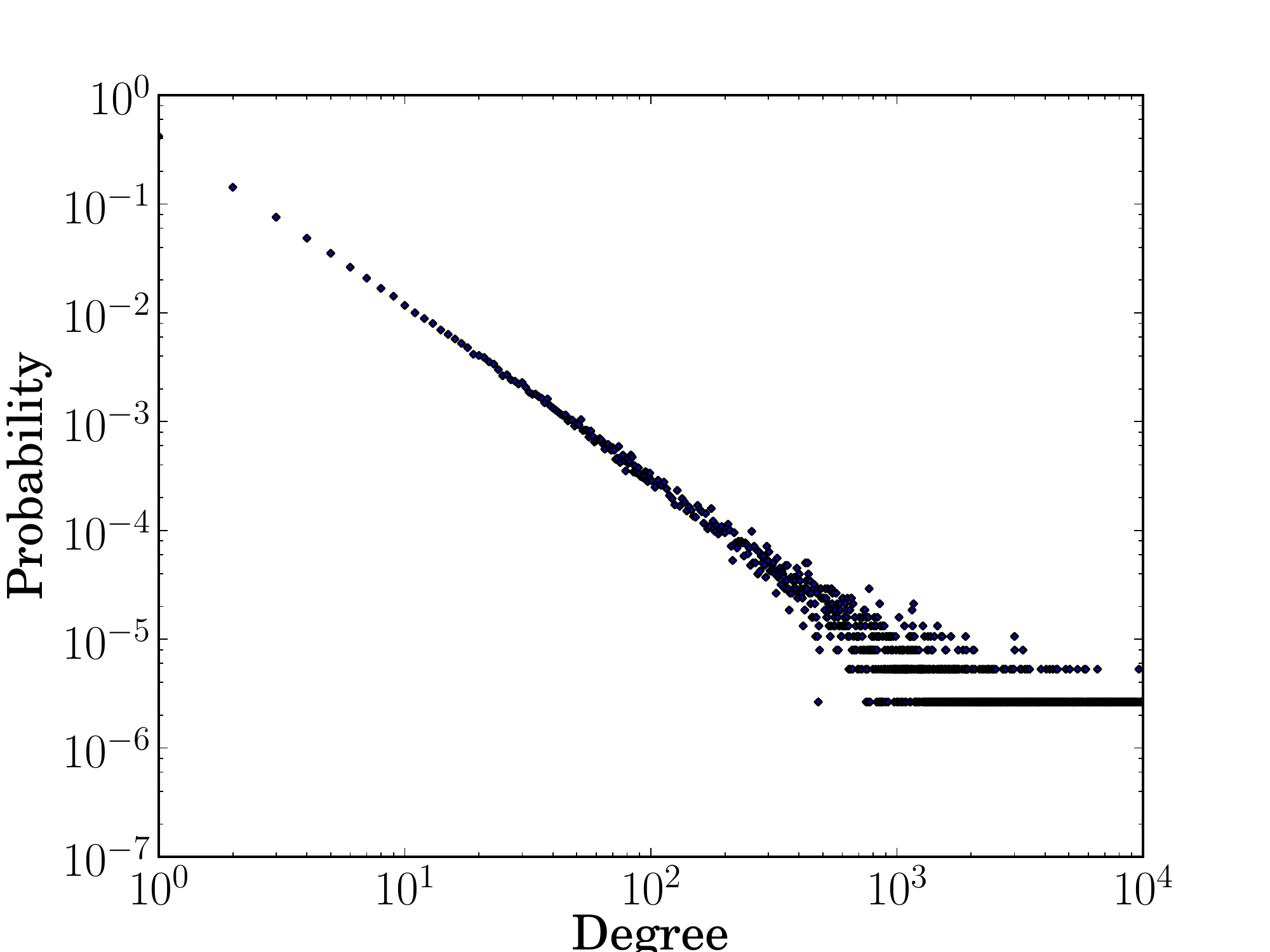}\label{gen-zheleva-soc-attri}}
\subfloat[\scriptsize{Social degree of attribute nodes}]{\includegraphics[width=0.25\textwidth, height=1.5in]{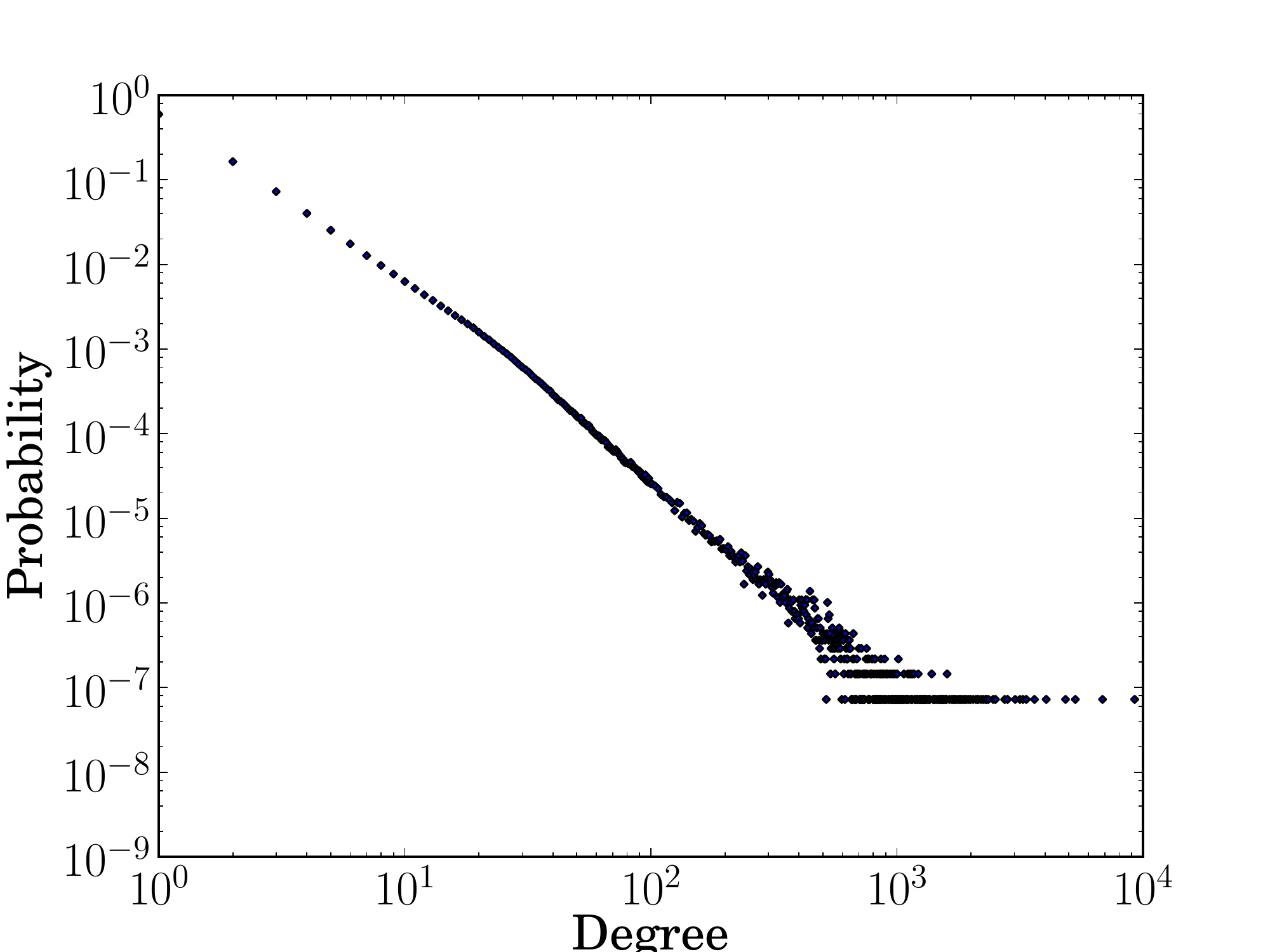}\label{gen-zheleva-attri-soc}}
\tightcaption{Degree distributions of synthetically generated \san using 
 our model in (a)-(d) vs.\ \Zhel shown in (e)-(h).} 
\label{fig:gen-degree-dis}
\vspace{-2mm}
\end{figure*}

\section{Evaluation}
\label{sec:evaluation}
In this section, we validate our \san generative model.
Because the \san area is still very nascent there are few standard
models of comparison. We pick the closest generative model by Zheleva
et~al~\cite{Zheleva09-evo}. Note that their model is actually orthogonal to ours since it's modeling dynamic node attributes while ours is modeling static node attributes. Furthermore, their original model generates undirected social networks. In order to compare with our model and directed Google+ \sanplural, we extend their model to generate directed social networks\footnote{Extending their model is straightfoward. For instance, when the original model issues an undirected link, we change it to be a directed outgoing link.}. We refer to the extended model as the \Zhel model throughout
this section.  We start with network metrics, including single-node
degree distribution, joint degree distribution and clustering coefficient.
Then, following  the spirit of~\cite{Sala10}, we also evaluate our model using
real application contexts. 

For comparison, we use the  Google+  snapshot crawled on July 15, 2011, which
has roughly 10 million nodes and we believe it is representative of Google+
\san.  Using this Google+ snapshot, we run a guided greedy search to
estimate appropriate parameters for our model and \Zhel to generate synthetic
\san that best match the Google+.

\begin{figure}[ht]
\vspace{-0.5cm}
\centering
\subfloat[\scriptsize{JDD of attribute nodes}]{\includegraphics[width=0.25\textwidth, height=1.5in]{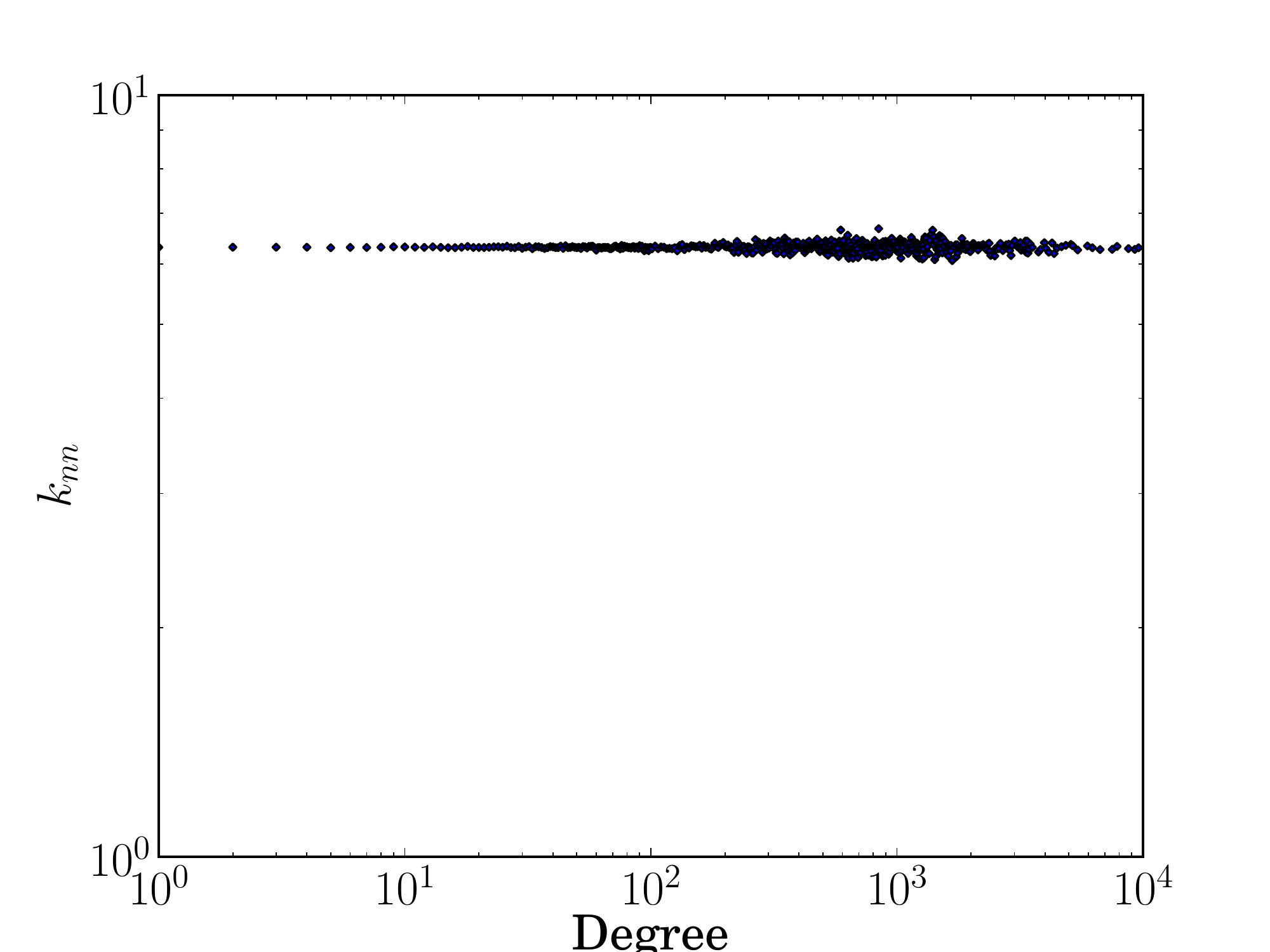}\label{gen-assort-attri-attri}}
\subfloat[\scriptsize{Clustering coefficient}]{\includegraphics[width=0.25\textwidth, height=1.5in]{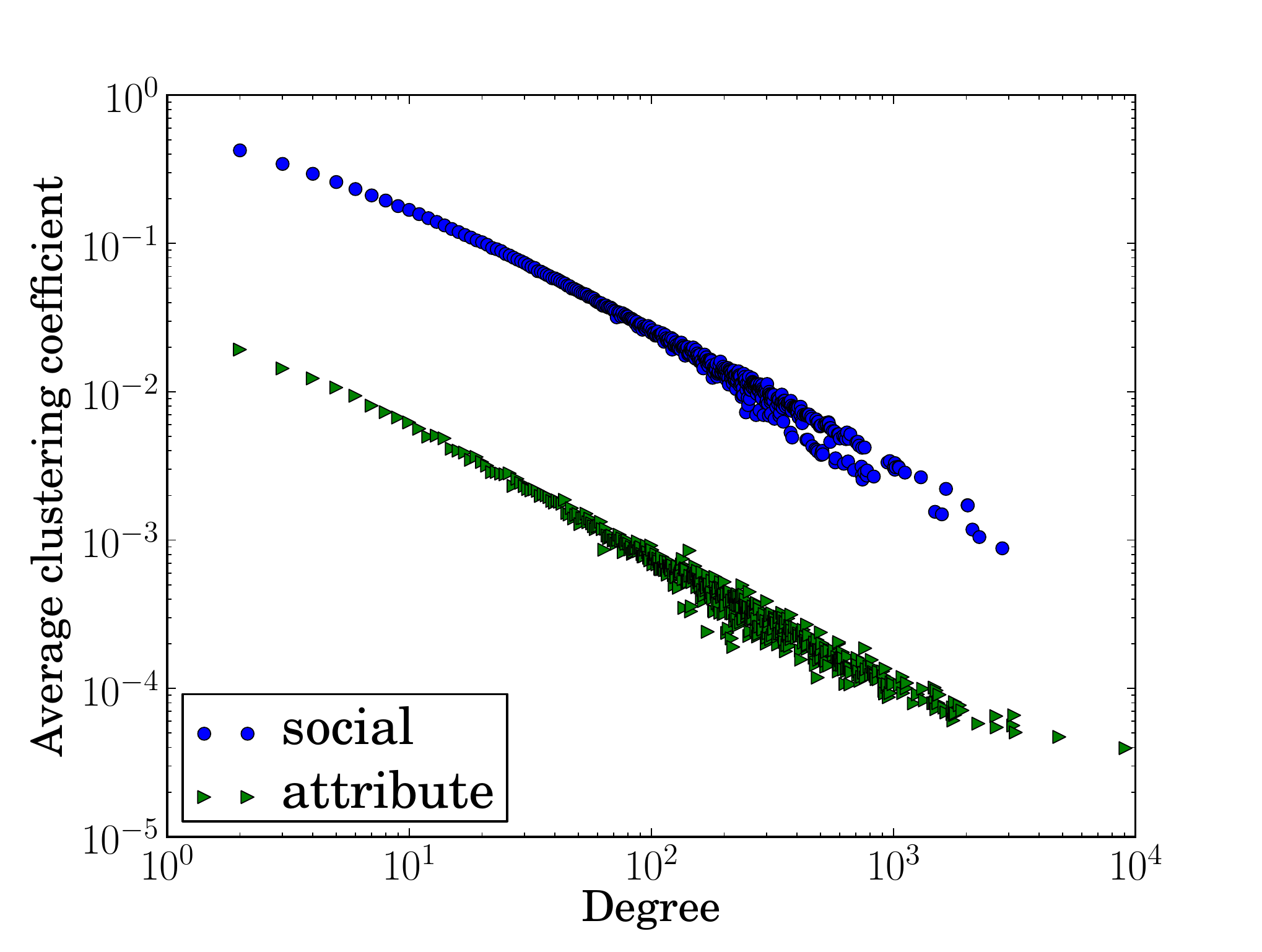}\label{gen-clustering-coefficient-distribution}}

\eat{
\subfloat[attri-attri]{\includegraphics[width=0.25\textwidth, height=1.5in]{gen-baseline-assort-attri-attri}\label{gen-baseline-assort-attri-attri}}
\subfloat[out-in]{\includegraphics[width=0.25\textwidth, height=1.5in]{gen-baseline-assort-out-in}\label{gen-baseline-assort-out-in}}
\subfloat[clustering coefficient]{\includegraphics[width=0.25\textwidth, height=1.5in]{gen-baseline-clustering-coefficient-distribution-degree_loglog}\label{gen-baseline-clustering-coefficient-distribution}}
}
\vspace{-0.3cm}
\subfloat[\scriptsize{JDD of attribute nodes}]{\includegraphics[width=0.25\textwidth, height=1.5in]{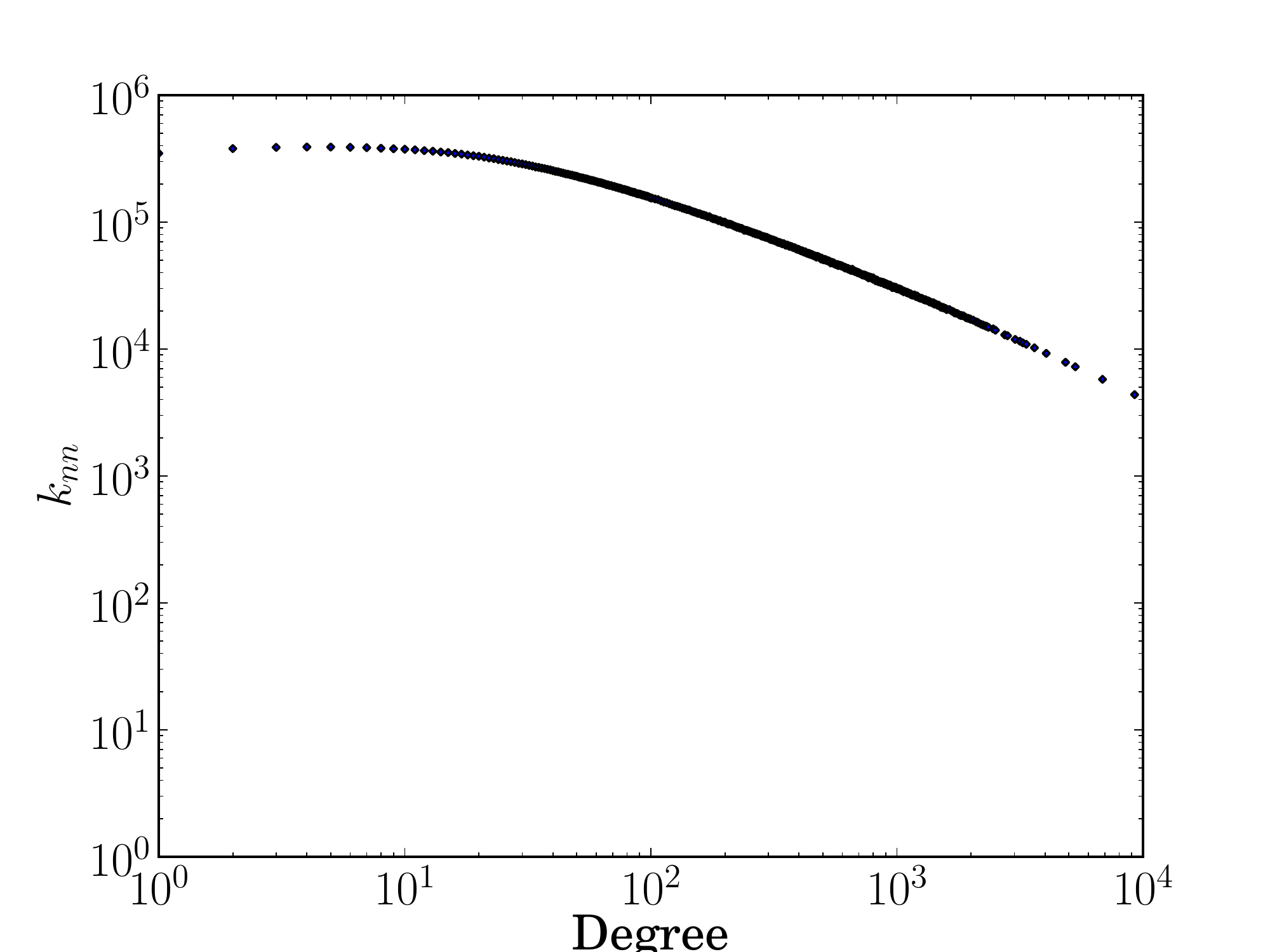}\label{gen-zheleva-assort-attri-attri}}
\subfloat[\scriptsize{Clustering coefficient}]{\includegraphics[width=0.25\textwidth, height=1.5in]{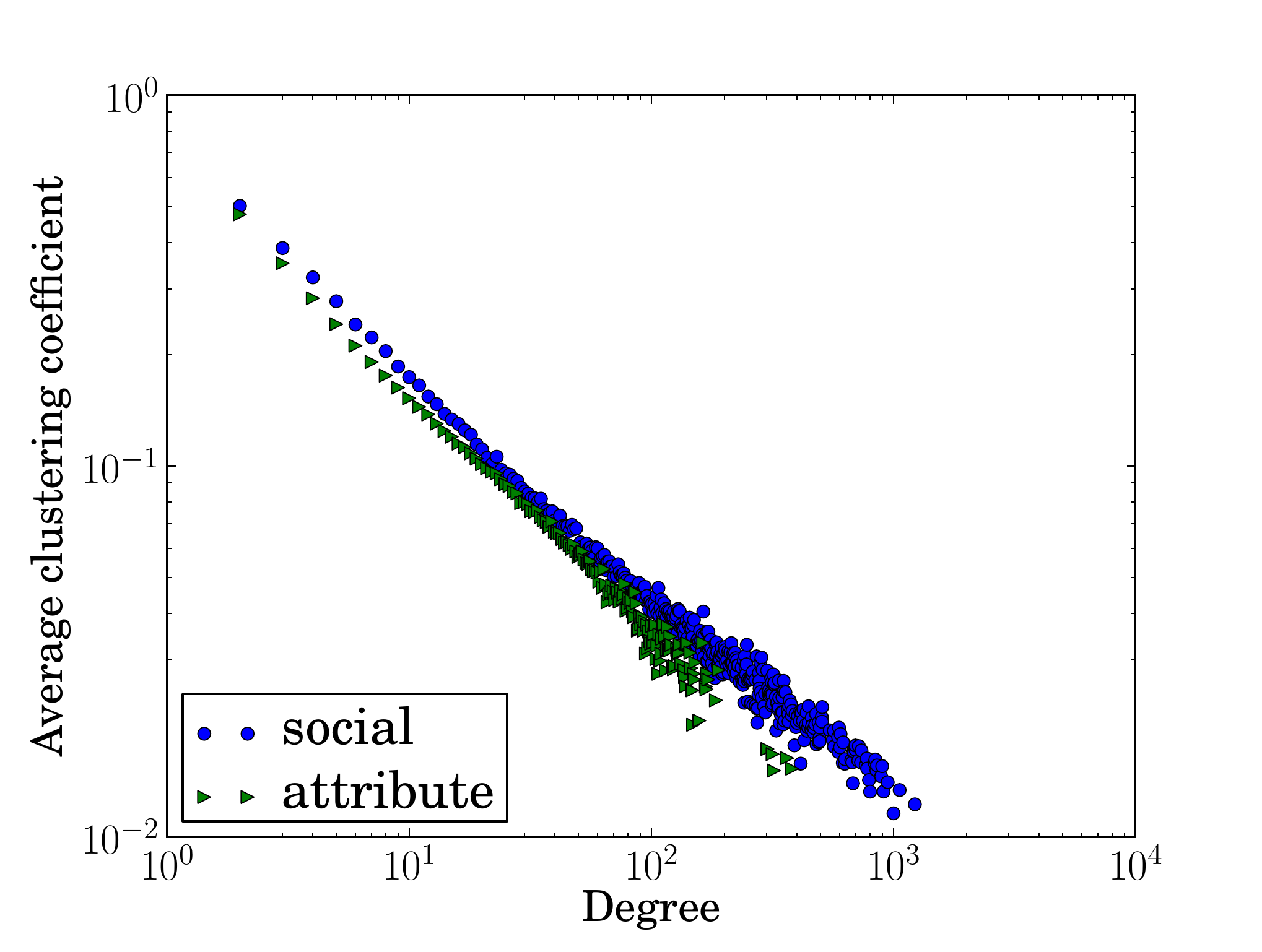}\label{gen-zheleva-clustering-coefficient-distribution}}
\tightcaption{Joint degree and clustering coefficient distributions of our model (a)--(b) vs.\ \Zhel in (c)--(d).}
\label{fig:gen-jd}
\vspace{-2mm}
\end{figure}

\subsection{Network Metrics}
In this section, we qualitatively compare our model to the \Zhel model,
and demonstrate that our model can generate synthetic \san that better 
 reproduces various network metrics closer to Google+ \san.


\mypara{Degree distributions} 
We first examine the degree distributions of the synthetic \san generated
by our model and the \Zhel model in Figure~\ref{fig:gen-degree-dis}.  The most
visually evident result looking at  Figure~\ref{gen-soc-out} and
Figure~\ref{gen-soc-in} is that our model can generate  synthetic
networks with social indegree and outdegree following lognormal distributions
similar to the Google+ \san that we saw in
Figure~\ref{fig:degree-distribution}.  In contrast,
Figure~\ref{gen-zheleva-soc-in} and Figure~\ref{gen-zheleva-soc-out} confirm
that the \Zhel model generates  indegree and outdegree following power-law
distributions.  Similarly, comparing
Figure~\ref{gen-soc-attri} and~\ref{gen-zheleva-soc-attri} to Figure~\ref{fig:soc-attri}, the attribute degree
of social nodes in our model follows the lognormal distribution that matches
that of the Google+ \san, whereas the \Zhel  model generates  attribute degrees
that follow a power-law distribution.  Finally, Figure~\ref{gen-attri-soc}
and~\ref{gen-zheleva-attri-soc} confirm that both our model and \Zhel  generate
social degrees of attribute nodes that follow power-law distribution, which is
 again consistent with Google+ \san from Figure~\ref{fig:attri-soc}.

\mypara{Joint degree distributions} The ability to mirror more fine-grained  properties
beyond the degree distributions has been shown to be a key metric for
evaluating generative models~\cite{Mahadevan06}.  Thus,  we look at the joint
degree distribution approximated by degree correlation function $k_{nn}$ in
Figure~\ref{gen-assort-attri-attri} and~\ref{gen-zheleva-assort-attri-attri}
for our model and \Zhel.  Compared to
Figure~\ref{fig:attri-joint-degree-distribution}, we see that the JDD of
attribute nodes in our model generated \san matches  Google+ \san much better
than \Zhel.  We observe similar pattern for JDD of social nodes. 

\mypara{Clustering coefficient} Fig.~\ref{gen-clustering-coefficient-distribution} and 
Fig.~\ref{gen-zheleva-clustering-coefficient-distribution}
shows the clustering coefficient distributions of synthetic \sanplural 
generated by our model and \Zhel, respectively.
When comparing them to Fig.~\ref{fig:clustering-coefficient-dis-deg} 
we see that our model generates synthetic \san with
both social and attribute clustering coefficient distributions 
matching well to those of Google+ \san, which is not the case
for \Zhel.

\mypara{Significance of building blocks} 
Recall that our model has two key building blocks that extend preferential 
attachment via LAPA and also extending triangle closing via focal closure. 
 A natural question is what each of these components contribute toward the 
overall generative model.

First, we investigate how LAPA impacts the structure of the generated \san
in our model. To this end, we consider an intermediate model with the classical PA
(but with the RR-\san enabled)  and compute the previous metrics for \sanplural generated by this intermediate model.  We find that all metrics except the distribution of social \emph{indegree} are qualitatively the same.  
Figure~\ref{gen-baseline-soc-in} shows that the distribution of social indegree
of the synthetic \san generated by our intermediate model is very close to
a power-law distribution, different from the lognormal distribution generated
by our full model shown in Figure~\ref{gen-soc-in} and derived from the real
Google+ \san shown in Figure~\ref{fig:degree-distribution}. This 
 suggests that the LAPA component is necessary for modeling a key aspect 
 of the Google+ \san.

\begin{figure}[t]
\vspace{-0.5cm}
\centering
\subfloat[\scriptsize{Social indegree w/o LAPA}]{\includegraphics[width=0.25\textwidth, height=1.5in]{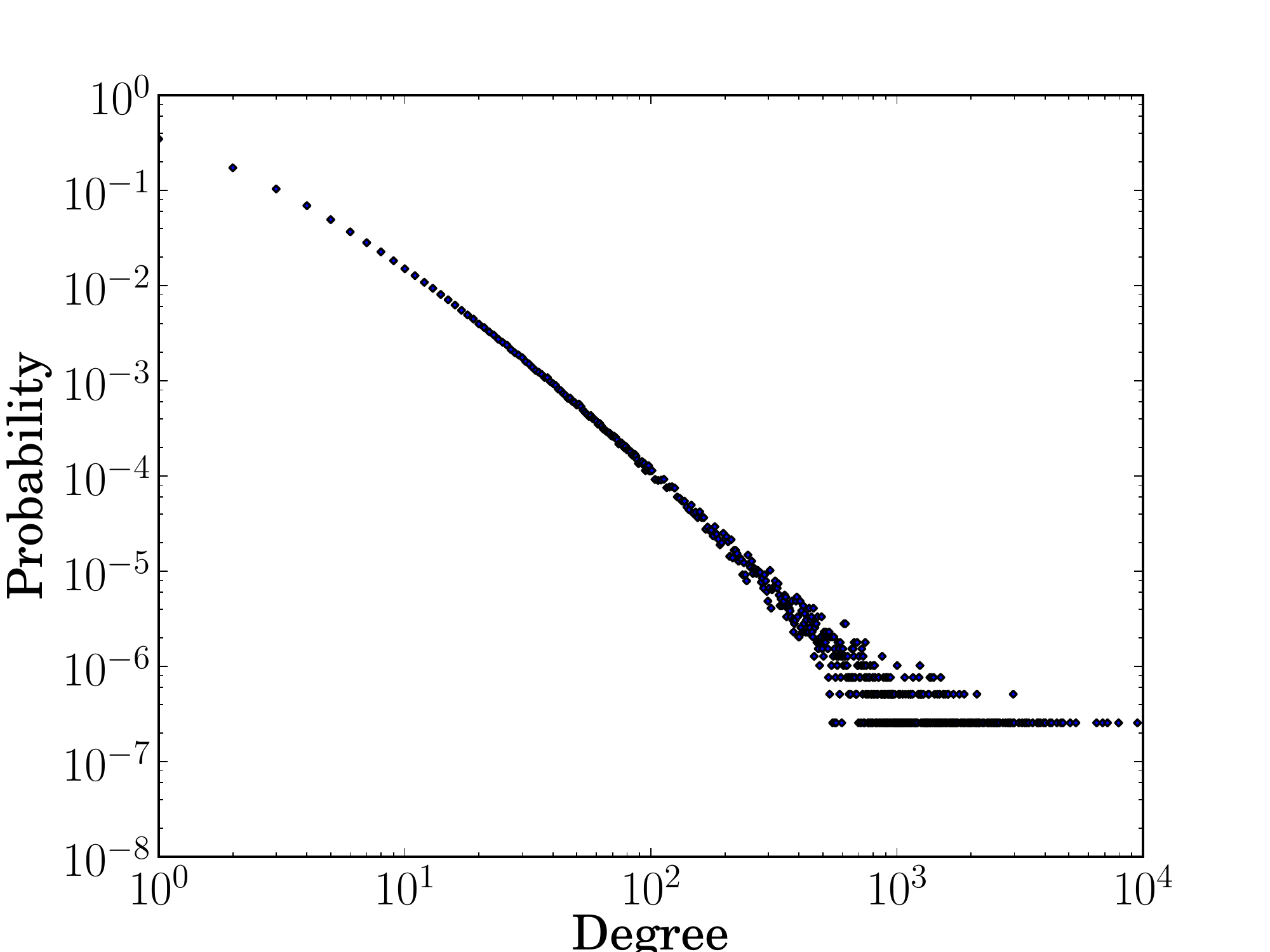}\label{gen-baseline-soc-in}}
\subfloat[\scriptsize{Clustering coefficient w/o focal closure}]{\includegraphics[width=0.25\textwidth, height=1.5in]{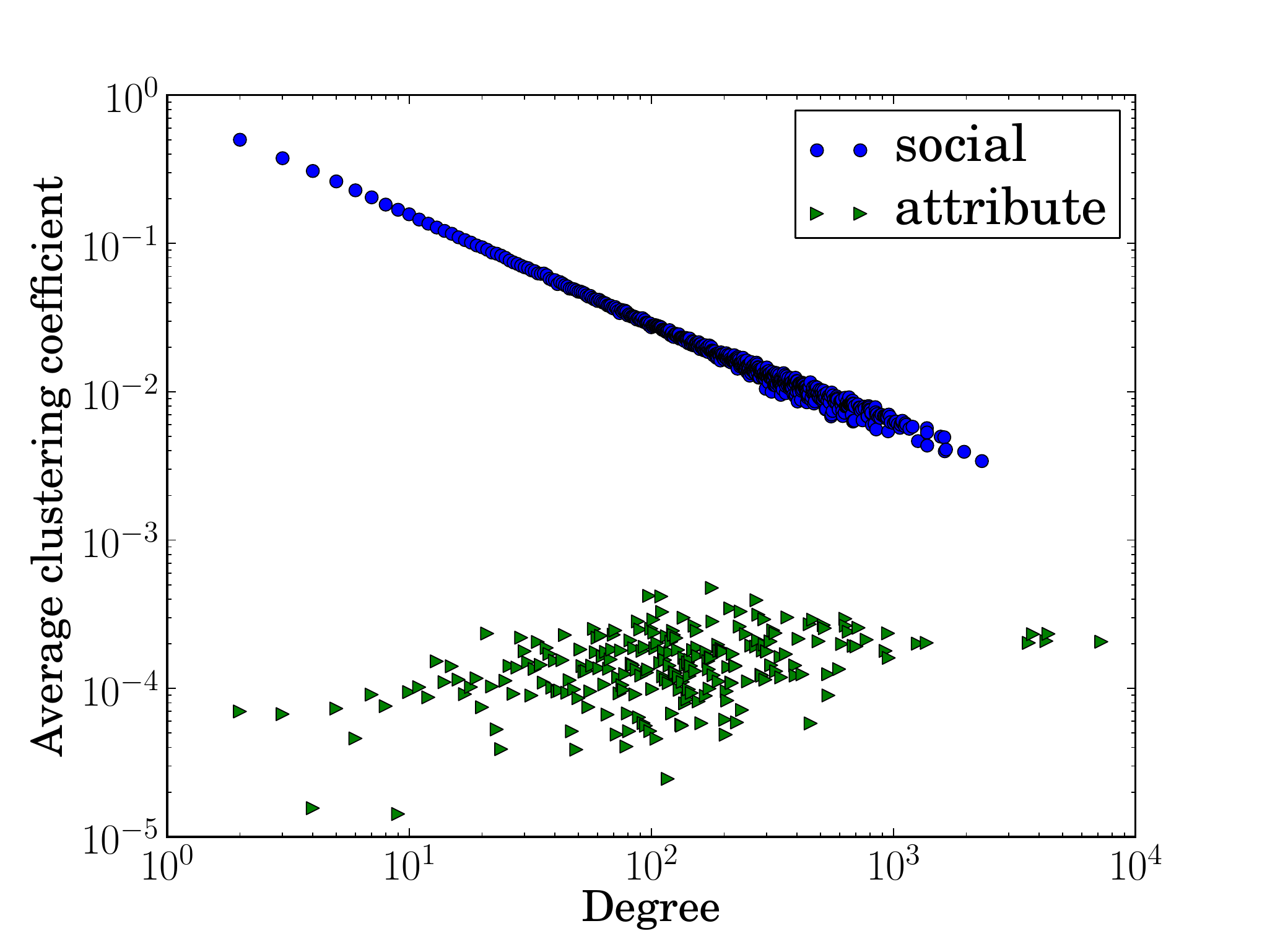}\label{gen-baseline-clustering-coefficient-distribution}}
\tightcaption{The effect of LAPA and focal closure.}
\label{fig:no-pa-tc}
\vspace{-2mm}
\end{figure}

\eat{
\begin{figure}[t]
\centering
\subfloat[Our full model]{\includegraphics[width=0.25\textwidth, height=1.5in]{gen-clustering-coefficient-distribution-degree_loglog}\label{gen-clustering-coefficient-distribution}}
\subfloat[Without focal closure]{\includegraphics[width=0.25\textwidth, height=1.5in]{gen-baseline-clustering-coefficient-distribution-degree_loglog}\label{gen-baseline-clustering-coefficient-distribution}}
\captionsetup{font=small}
\caption{The distributions of clustering coefficients of synthetic 
networks generated by our models.}
\label{fig:gen-cc}
\end{figure}
}

Second, we investigate the impact of RR-\san. 
 The key metric impacted by  the focal closure component of  RR-\san 
 is the attribute clustering coefficient.
 Figure~\ref{gen-baseline-clustering-coefficient-distribution} shows 
 the social and attribute 
clustering coefficients of synthetic \sanplural generated by our model without RR-\san (with classical RR enabled). 
Looking at 
Figure~\ref{gen-clustering-coefficient-distribution} and 
Figure~\ref{gen-baseline-clustering-coefficient-distribution} together,
we see that RR-\san has a significant impact on the attribute clustering
coefficient.


These results confirm  both attribute-augmented building blocks, LAPA and RR-\san, play important but complementary roles in our model in generating synthetic \san that closely mirrors the real Google+ \san.


\subsection{Application Fidelity}

 Next, we use two real-world application contexts to evaluate the fidelity of
our generative model and the  \Zhel model with respect to a real Google+
snapshot. In each case, we use the metric of interest relevant to each
application. Note that all these applications only rely on the social structure.

\begin{figure}[t]
\vspace{-0.5cm}
\centering
\subfloat[\scriptsize{Sybil defense}]{\includegraphics[width=0.22\textwidth, height=1.5in]{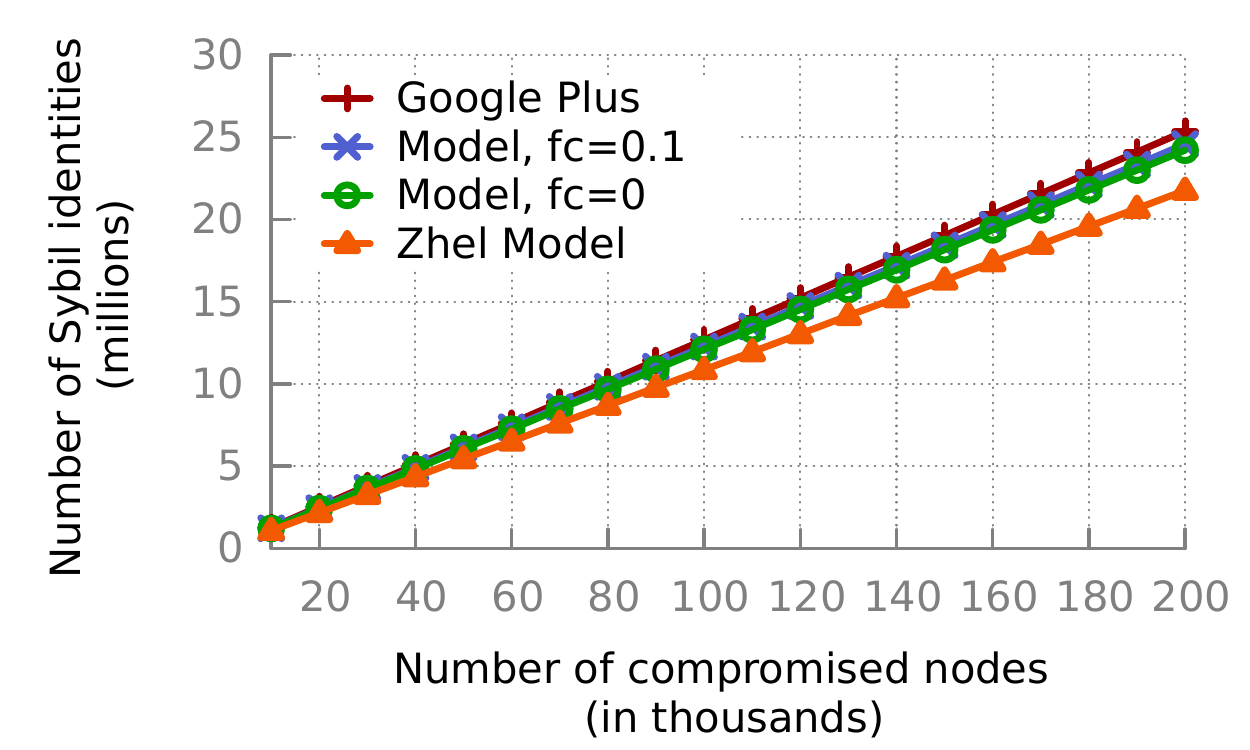}\label{fig:sybil}}
\subfloat[\scriptsize{Anonymous communication}]{\includegraphics[width=0.25\textwidth, height=1.5in]{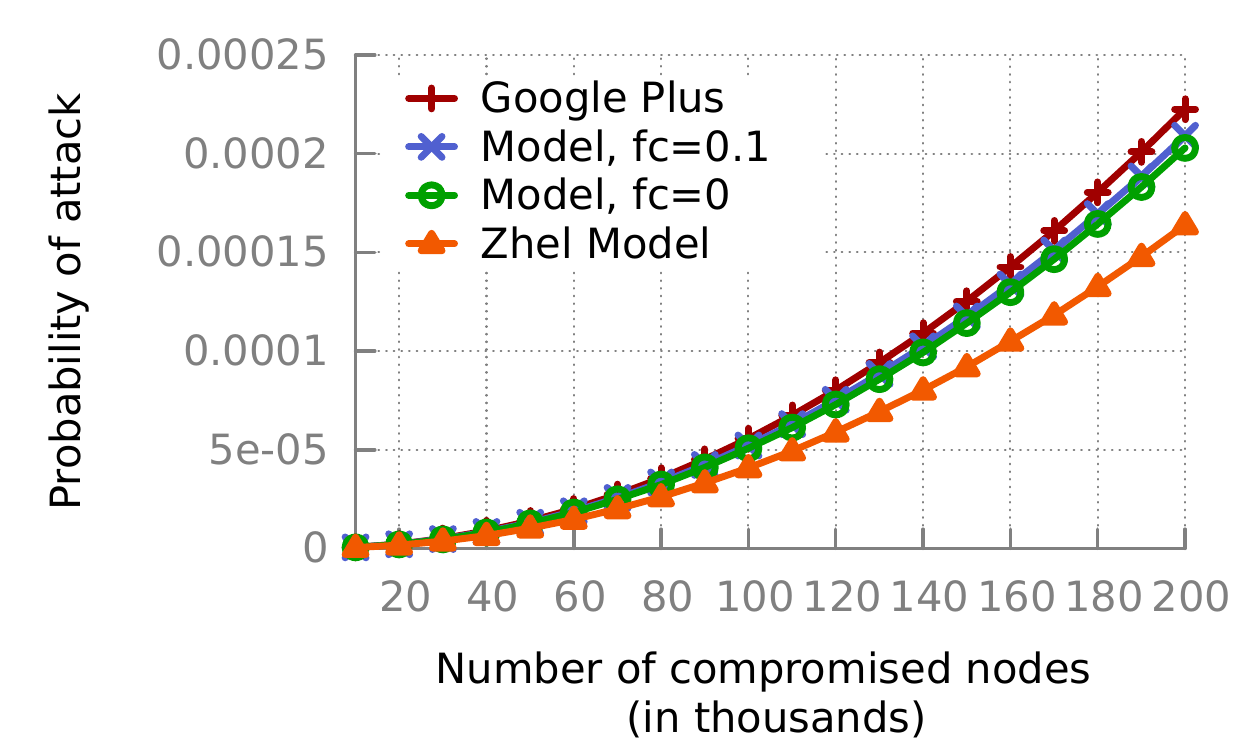}\label{fig:anonymity}}
\tightcaption{Application fidelity of our model. (a) Sybil defense: SybilLimit false negatives as a function of number of compromised nodes. (b) Social network based anonymity: Probability of end-to-end timing analysis as a function of 
number of compromised nodes.}
\label{fig:no-pa-tc}
\vspace{-2mm}
\end{figure}

\eat{
\begin{figure}[t]
\centering
{\includegraphics[width=0.45\textwidth, height=1.8in]{sybillimit-fn.pdf}}
\tightcaption{Sybil defense: SybilLimit false negatives as a function of number of compromised nodes using Google+ data 
and our proposed model. We can see the accuracy of our model, and also the improvement over prior \Zhel model. 
}
\label{fig:sybil}
\end{figure}
}

\myparatight{Sybil defense}
In a Sybil attack~\cite{sybil}, a single entity emulates the behavior of 
a large number of identities to compromise the security and privacy properties 
of a system. Sybil attacks are of particular concern in decentralized systems, 
which lack mechanisms to vet identities and perform admission control. 
 Several recent works have proposed the use of social trust relationships to mitigate 
Sybil attacks~\cite{sybil,sybillimit}. Next, 
we show the fidelity of our model using a representative social network based 
Sybil defense mechanism called SybilLimit~\cite{sybillimit}. 

 In order to prevent an adversary from obtaining a large number of attack 
edges (edges between compromised and honest users), 
 SybilLimit  bounds the effective node degree in the social 
network topology. Following their guidelines, we also imposed a node degree bound 
of 100 in evaluating their proposal on the different 
 \sanplural. Figure~\ref{fig:sybil} depicts the number of Sybil identities that 
an adversary can insert, as a function of number of compromised nodes in the 
network. We compromised the nodes uniformly at random, and set the 
SybilLimit parameter $w=10$. The parameter $fc$ governs 
the attribute link weight in our RR-\san component; $fc=0$ means no focal closure.

We can see that (a) SybilLimit results using the synthetic topology generated
by our model are a close match to the real Google+ data, and (b) our model
outperforms the baseline approach (\Zhel model). For example, when the number
of compromised nodes is 200,000  the average number of Sybil identities in the
Google+ topology is about 25.3 million, while our model predicts 24.5 million
(error of 3.1\% using $fc=0.1$).  In contrast, the baseline approach 
has almost 4$\times$ worse error with a prediction error of 12.5\%.  This shows 
the importance of using attribute information to influence the structure of 
the social structure (the \Zhel model only uses the social structure to influence the 
attribute structure.)

\eat{
\mypara{RE: Reliable Email} Next, we evaluate the fidelity of the models using
the Reliable Email (RE) application~\cite{reliableemail}. The high-level idea
in RE is to reduce the false positives introduced by spam filters by creating a
spam ``whitelist'' using the social trust relationships.  Specifically, it
marks any incoming message from a friend or a friend-of-friend as legitimate
mail. A natural consequence of such whitelisting of course is that if some of
the friend or friend-of-friend machines are compromised, then users are likely
to receive more spam.  We simulate such a scenario and  vary the number of
compromised nodes  from 2--10\%  (selected at random) and in each case measure
the number of legitimate nodes who are likely to receive spam as a consequence. 
 We find that our model predicts {\bf XXX \%}  of users will receive spam 
which is closer to the real Google+ estimate of {\bf YYY\%}, whereas the 
 \Zhel model predicts {\bf XXX \%} of users will receive spam. Again, we see 
 that our model better predicts the  application-level metric.
}

\eat{
\begin{figure}[t]
\centering
{\includegraphics[width=0.45\textwidth, height=1.8in]{anonymity-e2e.pdf}}
\tightcaption{Social network based anonymity: Probability of end-to-end timing analysis as a function of 
number of compromised nodes using Google+ data and our  model.  
}
\label{fig:anonymity}
\end{figure}
}

\myparatight{Anonymous communication}
Anonymous communication aims to hide user identity (IP address) 
from the recipient (destination) or from third parties on the Internet 
such as autonomous systems. The Tor network~\cite{tor} is a deployed 
system for anonymous communication that serves hundreds of thousands of 
users a day. It is widely used by political dissidents, journalists, 
whistle-blowers,  and even law enforcement/military. 
Recent work~\cite{johnson:ccs11,drac} has proposed leveraging social 
links in building anonymous paths for improving resistance to 
attackers. For example, the Drac~\cite{drac} system selects proxies (onion routers) 
by performing a random walk on the social network. For low-latency communications, 
if the first and the last hops of the 
forwarding path (onion routing circuit) are compromised, then the adversary can perform 
end-to-end timing analysis and break user anonymity. 
Figure~\ref{fig:anonymity} depicts the probability of end-to-end timing analysis 
when random walks on social networks are performed for anonymous communication, 
using the Google+ social network and our synthetic network.
Similar to our SybilLimit experiments, we compromise nodes uniformly at 
random in the network, and impose an upper bound of 100 on the node degree.  
Again, we can see the 
accuracy of our model, as well as the improvement over prior work. 

\subsection{Summary}
Via evaluating our model with respect to network metrics and real-world applications, we find that:

\begin{packeditemize}

\item Our model can reproduce \sanplural that well match Google+ \san with respect to various network metrics (e.g., degree distributions, joint degree distributions and clustering coefficients.), but the \Zhel model cannot match several metrics (e.g., social degree distributions, joint degree distributions and clustering coefficient.).

\item Our model also performs better than the \Zhel model for real-world applications such as Sybil defense and anonymous communication.

\item The two attribute-augmented building blocks, i.e., LAPA and RR-\san, play important but complementary roles in our model. 

\end{packeditemize}

\section{Discussion}
\label{sec:discussion}

\myparatight{Using attributes to strengthen defenses}  
Our evaluation largely focuses
on how our model better matches the real-world \san. We hypothesize that several
attack defenses (e.g., Sybil proofing) can also be enhanced  by taking into
account the attribute structure. For example, we could check if the
attribute structure of the nodes matches normal nodes, or even if an attacker
manages to obtain a ``compromised'' edge to one node we can limit the influence
of this compromised edge by checking the attribute structure.

\myparatight{LAPA Computation} The LAPA model as described  
requires a costly linear time (in number of nodes) step 
  when a new node arrives. This is because we have to consider the number of common
 attributes between the new node and each current node,  unlike PA 
 which only needs the global degree distribution.
 Fortunately, we can approximate LAPA using a practical heuristic. The high-level idea is 
to pick one of the new node's attributes at random and use PA within the nodes having this 
 attribute. This approximates LAPA as nodes sharing 
 more attributes are more likely to get selected. 

\myparatight{Dynamic attributes}  Our model currently focuses on static
attributes that nodes pick when they join the \san. In our future work,
we plan to incorporate dynamic attributes, and investigate whether the static
attribute structure also influences the selection of dynamic attributes. Note that static attributes influence the social structure in our model while the dynamic attributes are influenced by the social structure in the model from  Zheleva et~al~\cite{Zheleva09-evo}.

\myparatight{Parameter inference}  We currently use a guided greedy search 
to empirically estimate model parameters. While this works quite 
well, we plan to develop a more rigorous parameter inference algorithm
based on maximum-likelihood principle~\cite{scalable_modeling,likelihood}. 

\myparatight{Parsimoniousness of our model}  In \Section~\ref{sec:evaluation}, we have shown that each component of our model is necessary. However, it's an interesting future work to design a more parsimonious
model. 

\myparatight{Implications for social network designs} Our results that users sharing common employer attributes are more likely to be linked than users sharing other attributes can help design a better friend recommendation system, which is a very fundamental component of online social networks.

\myparatight{Relationship to heterogeneous networks} Our \san can be viewed as a heterogeneous network since it consists of multiple types of nodes and links. Heterogeneous networks are shown recently to work better than traditional homogeneous networks for various data mining tasks such as link prediction~\cite{Gong11, Yang11, Sun11, Sun12}, attribute inference~\cite{Gong11, Yang11} and community detection~\cite{Sun09, Sun12-1, Zhou09}. 
It is an interesting future work to generalize our new attribute-related metrics and generative model to other heterogeneous networks.

\section{Related Work}
\label{sec:related}
 Given the growing role of social networks in users' lives and the potential
for using such insights for building better systems and applications, there is
a rich literature on measuring and modeling social networks. Next, we discuss
our work in the context of this related work. At a high-level, our specific new
contributions are: (1) we characterize the evolution of a new large-scale
network (Google+), and (2) we provide measurement-driven insights and models on
the impact of attributes on social network evolution. 

\myparatight{Measuring social networks}  Many 
 prior efforts characterize social networks using the network metrics 
 we also describe in \Section\ref{sec:structure}~\cite{Kumar06, Kwak10,
Leskovec08, Mislove07, Kossinets06}.   Most of these focus 
 on static snapshots; a few notable work also focus on evolutionary aspects 
 similar to our work~\cite{Ahn07, Wilson09, Backstrom12}. With multiple Google+ snapshots crawled around its public release, Schioberg et al.~\cite{Schioberg12} studied a few network metrics, geographic distribution of the users and links, and correlation of users' public information of Google+.
%

Concurrently, Gonzalez et al.~\cite{Gonzalez12} characterize several key 
features of Google+ during its first 10 months, and compare them to those 
of Facebook and Twitter.
Using a static Google+ snapshot crawled after its public release, 
Magno et al.~\cite{Magno12} identify the key differences between Google+ 
and Facebook and Twitter, study the adoption patterns of Google+ in different 
countries, and characterize the variation of privacy concerns across 
different cultures.  
Zhao et al.~\cite{MultiDyna} study the early evolution 
of the Renren social network, and analyze its network dynamics at 
different granularities to determine their influence on individual users. 
While we follow the spirit of these works, our work is unique 
  in terms of the specific dataset (i.e., three phases of Google+), the scale of the network, 
 and the fact that we had a singular opportunity to study the evolution across 
different phases. 
 
There has been recent realization of the importance of user
attributes in characterizing social networks~\cite{Mislove07, Zheleva09-evo}.
These focus on the influence of social structure on dynamic node
attributes (e.g., interest groups).  Our work focuses on the orthogonal
dimension of  analyzing and modeling the influence of static node attributes on
social structure formation using Google+.

\myparatight{Modeling social networks} There are two broad classes of models for
generating social networks: \emph{static} and \emph{dynamic}. Static
models try to reproduce a single static network snapshot~\cite{Erdos59, Watts98, Mahadevan06, STANTON12}.  Dynamic models
can provide insights on how nodes arrive and create links; these include
models such as  preferential attachment~\cite{Barabasi99},
copying~\cite{Kleinberg99}, nearest neighbor~\cite{VAZQUEZ03}, forest
fire~\cite{Leskovec05}.  Sala et al.~\cite{Sala10} evaluated such models using
both network metrics and application benchmarks and showed that the nearest
neighbor model outperforms others.  The dynamic/generative model by Leskovec et~al.
mimics the nearest neighbor model in a dynamic setting~\cite{Leskovec08}, and
thus we use it as our starting point in \Section\ref{sec:model}, However, these
models are known to generate networks with power-law degree distributions.  Many
social networks including Google+,
however, exhibit lognormal degree distributions~\cite{Gomez08, Leskovec08-WWW,
Liben-Nowell05}. Our dynamic model extends these prior work to provably
generate a lognormal distribution for social outdegree. Our model also provides
a more general framework by capturing both social and attribute structure.

\myparatight{Modeling social-attribute networks}  There has been relatively
little work on generating \sanplural, though a few recent work jointly generating both social structure and node attributes can be viewed as \san models; the most relevant  work is from Zheleva
et al.~\cite{Zheleva09-evo} and Kim and Leskovec~\cite{Kim12}.  Zheleva et
al.~\cite{Zheleva09-evo} focus on dynamic attributes; their model generates
undirected networks with power-law distribution for social degree and
non-lognormal distribution for attribute degree (see
Figure~\ref{fig:gen-degree-dis}).  Kim and Leskovec model  the social and
attribute structure simultaneously~\cite{Kim12}. Here, both the social degree
of attribute nodes and attribute degrees of social nodes follow binomial
distribution, which differs from empirically observed \sanplural. Our model can
generate \sanplural that we confirm through both analysis and simulations to
be consistent with real \sanplural.

\section{Conclusion}
\label{sec:conclude}

Using a unique dataset collected by crawling
Google+ since its launch in June 2011, we provide a first-principled
understanding of the attribute structure and its impact on the social structure and
their evolutions with the \san model.  We observe several interesting phenomena in the structure and evolution of Google+. For example, the social degree
distributions are lognormal, the assortativity is neural while many other social networks have positive assortativities, and the distinct phases in the evolution manifest
themselves in the network structure. We also provide new metrics for
characterizing the attribute structure and demonstrate that attributes
 can significantly impact the social structure.  Building on these empirical
insights, we provide a new generative model for \sanplural and validate
that it is close to the real Google+ \san  using both network metrics
and real application contexts.  We believe that our work is one of the first
steps in this regard and that there are several interesting directions for future
work  to  harness the power of using the attribute structure for designing better
social network based systems and applications.

\section{Acknowledgments}
We would like to thank Mario Frank, our shepherd Ben Zhao and the anonymous reviewers for their insightful feedback. This work is supported by the NSF TRUST under Grant No. CCF-0424422, NSF Detection under Grant no. 0842695, by the AFOSR under MURI Award No. FA9550-09-1-0539, by the Office of Naval Research under MURI Grant No. N000140911081, the NSF Graduate Research Fellowship under Grant No. DGE-0946797, the DoD National Defense Science and Engineering Graduate Fellowship, by Intel through the ISTC for Secure Computing, and by a grant from the Amazon Web Services in Education program. Any opinions, findings, and conclusions or recommendations expressed in this material are those of the author(s) and do not necessarily reflect the views of the funding agencies. 

{\scriptsize

}

\appendix

\begin{algorithm}[t!]
\small 
\DontPrintSemicolon 
\KwIn{$(SAN, \Omega, K)$, where $SAN=(V_s, V_a, E_s, E_a)$, $\Omega$ is the set of nodes whose average clustering coefficient $C_\Omega$ is approximated and $K$ is the number of samples needed.} 
\KwOut{Approximate average clustering coefficient $\tilde{C}_\Omega$.} 
\Begin{
$L \longleftarrow 0$\; 
$k \longleftarrow 0$\; 
\While{$k < K$}{
$k \longleftarrow k + 1$\;
Sample a node $u$ uniformly at random from $\Omega$ \;
Sample a pair of nodes $v$ and $w$ uniformly at random from $u's$ social neighbors $\Gamma_s(u)$ \;
$L \longleftarrow L + F(v, u, w)$\;  }
$\tilde{C}_\Omega \longleftarrow L/(2^{\idg}K)$ \;
} \caption{Constant Time Approximate Algorithm for Computing the Average Clustering Coefficient} 
\label{alg:alg-clustering-coefficient}
\end{algorithm}

\section{A Constant Time Algorithm for Approximating Clustering Coefficients}
\label{sec:clustering}
Before going to details of the algorithm and analysis, we introduce a few notations. In both directed or undirected \sanplural, a triple $t$ consists of three nodes $(v, u, w)$ satisfying $v, w \in \Gamma_s(u)$, where $u$ is called the center and $v, w$ are called the endpoints of $t$. 
Moreover, $\alpha_t$ and $\beta_t$ denote respectively the center 
node and the two endpoints of $t$. 
 
For a directed \san and a set of triples $T$, we define a mapping $F : T \longrightarrow \{0,1, 2\}$, where $F(t=(v,u,w)) = 0$ if $v$ and $w$ are not connected, $F(t=(v,u,w)) = 1$ if they are connected by one directed link and $F(t=(v,u,w)) = 2$ if they are reciprocally linked. For an undirected \san, the mapping is defined as $F : T \longrightarrow \{0, 1\}$, where $F(t=(v,u,w)) = 0$ if $v$ and $w$ are not connected, otherwise $F(t=(v,u,w)) = 1$. Let $\idg$ be an indicator variable of the directedness of a \san, where $\idg=0$ when the \san is undirected, otherwise $\idg=1$. With the indicator variable $\idg$, we have $0 \leq F(t) \leq 2^{\idg}$, which is useful for deriving the approximation bounds in the follows.

For any set of nodes $\Omega$, their average clustering coefficient can be represented as $C_\Omega = \frac{1}{|\Omega|}\sum_{u\in\Omega} c(u) = 2^{-\idg} \sum_{t\in T_\Omega} \frac{1}{|\Omega| \tau (\alpha_t)}F(t)$, where $T_\Omega = \{t| \alpha_t \in \Omega\}$ and $\tau(\alpha_t)=\frac{1}{2}|\Gamma_s(\alpha_t)|(|\Gamma_s(\alpha_t) | - 1)$ is the number of triples whose center node is $\alpha_t$. If $t$ is a uniformly distributed random variable over $\Omega$, then we have  $C_\Omega =2^{-\idg}E[F(t)]$. This observation informs us the design of our approximate algorithm, which is shown in Algorithm~\ref{alg:alg-clustering-coefficient}. Our algorithm computes the average social clustering coefficient when setting $\Omega = V_s$, and the average attribute clustering coefficient when setting $\Omega = V_a$. Note that our algorithm can also be used to compute average clustering coefficient distribution with respect to node degrees. The following theorem bounds the error of our algorithm.

\begin{theorem}
With the number of samples $K=\lceil \frac{\mathrm{ln}2\nu}{2\epsilon^2} \rceil$, the approximated average clustering coefficient $\tilde{C}_\Omega $ output by our algorithm satisfies $|\tilde{C}_\Omega - C_\Omega| \leq  \epsilon$ with probability at least $1-\frac{1}{\nu}$.
\end{theorem}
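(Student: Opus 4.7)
The plan is to recognize this as a direct application of Hoeffding's inequality to the sample-mean estimator produced by the algorithm, using the observation that the random quantity $F(t)$ has bounded range $[0, 2^{I}]$.

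First I would set up the correct unbiased estimator. As noted in the text preceding the theorem, if $t$ is drawn uniformly from the set $T_\Omega$ of triples with center in $\Omega$, then $C_\Omega = 2^{-I}\, \mathbb{E}[F(t)]$. The algorithm generates $K$ i.i.d.\ samples $t_1,\dots,t_K$ by first picking a center $u$ uniformly from $\Omega$ and then picking an unordered pair $(v,w)$ uniformly from $u$'s social neighbors, which together yield a uniform draw from $T_\Omega$ (weighted so that each center contributes equally, matching the averaging $\frac{1}{|\Omega|}\sum_u c(u)$). Defining $X_k = 2^{-I} F(t_k)$, we have $X_k \in [0,1]$ and $\mathbb{E}[X_k] = C_\Omega$, so $\tilde C_\Omega = \frac{1}{K}\sum_{k=1}^{K} X_k$ is an unbiased estimator of $C_\Omega$.

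Next I would invoke Hoeffding's inequality~\cite{Hoeffding63} for the sum of $K$ independent bounded random variables in $[0,1]$, which gives
\[
\Pr\!\bigl[\,|\tilde C_\Omega - C_\Omega| \geq \epsilon\,\bigr] \;\leq\; 2\exp(-2K\epsilon^2).
\]
Requiring the right-hand side to be at most $1/\nu$ and solving for $K$ yields $K \geq \frac{\ln(2\nu)}{2\epsilon^2}$. Taking the ceiling gives exactly the sample complexity stated in the theorem, and the complementary event $|\tilde C_\Omega - C_\Omega| \leq \epsilon$ then holds with probability at least $1 - 1/\nu$.

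There is essentially no hard step here; the main thing to be careful about is the bookkeeping that makes $X_k \in [0,1]$ uniformly across the directed and undirected cases, which is precisely why the paper introduced the indicator $I$ and the normalization by $2^I$ in the definitions of $F$ and $\tilde C_\Omega$. Provided the uniform-sampling claim for $t$ over $T_\Omega$ is justified (it follows from the two-stage uniform sampling in the algorithm together with the fact that $C_\Omega$ is defined as an unweighted average over centers in $\Omega$), the rest is immediate from Hoeffding's bound, so the proof is short.
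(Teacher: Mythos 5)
Your proof is correct and follows essentially the same route as the paper: both apply Hoeffding's inequality to the $K$ bounded samples of $F(t)$ and solve $2e^{-2K\epsilon^2} = 1/\nu$ for $K$; your normalization $X_k = 2^{-\idg}F(t_k)\in[0,1]$ versus the paper's use of deviation $2^{\idg}\epsilon$ on $[0,2^{\idg}]$-valued variables is a purely cosmetic difference. One minor wording point: the two-stage sampling gives a draw from $T_\Omega$ that is uniform over centers but \emph{not} uniform over triples (each triple $t$ has probability $\frac{1}{|\Omega|\tau(\alpha_t)}$); your parenthetical acknowledges the correct weighting, which is all that is needed for unbiasedness, and the paper's own proof is actually looser on this point than you are.
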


\begin{proof}
Assume $t_1, t_2, \cdots, t_K$  are $K$ independently and uniformly distributed random variables over the triple set $T_\Omega$. Then we have $C_\Omega=E[\frac{1}{2^{\idg} K}\sum_{i=1}^K F(t_i)]$. According to Hoeffding's bound~\cite{Hoeffding63}, we obtain 
$$Pr(|\frac{1} {K}\sum_{i=1}^K F(t_i) - E[\frac{1}{K}\sum_{i=1}^K F(t_i)]| \geq 2^{\idg}\epsilon) \leq 2e^{-2K\epsilon^2}.$$
Thus, $$Pr(|\tilde{C}_\Omega - C_\Omega| \leq \epsilon) \geq 1 - 2e^{-2K\epsilon^2}.$$
So we get $K=\lceil \frac{\mathrm{ln}2\nu}{2\epsilon^2} \rceil$ 
by setting $\nu=2e^{-2k\epsilon^2}$.
\end{proof}

\end{document}